\theoremstyle{plain}
\newtheorem{theorem}{Theorem}[section]
\newtheorem{lemma}[theorem]{Lemma}
\theoremstyle{definition}
\newtheorem{definition}[theorem]{Definition}
\theoremstyle{remark}
\newcommand{\alglinelabel}{%
  \addtocounter{ALC@line}{-1}%
  \refstepcounter{ALC@line}%
  \label%
}
\newtheorem{example}[theorem]{Example}
\newtheorem{fact}[theorem]{Fact}
\newcommand{\cA}{\mathcal{A}}
\newcommand{\cD}{\mathcal{D}}
\newcommand{\cE}{\mathcal{E}}
\newcommand{\cR}{\mathcal{R}}
\newcommand{\cU}{\mathcal{U}}
\newcommand{\cY}{\mathcal{Y}}
\newcommand{\cZ}{\mathcal{Z}}
\newcommand{\C}{\mathbb{C}}
\newcommand{\N}{\mathbb{N}}
\newcommand{\R}{\mathbb{R}}
\newcommand{\Z}{\mathbb{Z}}
\newcommand{\norm}[1]{\left\| {#1} \right\|}
\newcommand{\PAREN}[1]{{\left( #1 \right)}}
\newcommand{\paren}[1]{{( {#1} )}}
\newcommand{\bigparen}[1]{{\big( {#1} \big)}}
\newcommand{\bracket}[1]{{\left[ {#1} \right]}}
\newcommand{\card}[1]{\left| {#1} \right|}
\newcommand{\set}[1]{\left\{ {#1} \right\}}
\newcommand{\eps}{\varepsilon}
\NewDocumentCommand\p{ m g }{
  \ensuremath{
    \IfNoValueTF{#2}
    {\Pr [ #1 ]}
    {\Pr_{#1}[#2]}
  }
}
\RenewDocumentCommand\P{ m g }{
  \ensuremath{
    \IfNoValueTF{#2}
    {\Pr \left[#1\right]}
    {\Pr_{#1}\left[#2\right]}
  }
}
\DeclareMathOperator*{\Expectation}{\mathbb{E}}
\NewDocumentCommand\E{ m g }{
  \ensuremath{
    \IfNoValueTF{#2}
    {\Expectation \left[#1\right]}
    {\Expectation_{#1}\left[#2\right]}
  }
}
\DeclareMathOperator*{\Variance}{\mathbb{V}\mathrm{ar}}
\NewDocumentCommand\Var{ m g }{
  \ensuremath{
    \IfNoValueTF{#2}
    {\Variance \left[#1\right]}
    {\Variance_{#1}\left[#2\right]}
  }
}
\NewDocumentCommand\GammaDist{ m g }{
  \ensuremath{
    \IfNoValueTF{#2}
    {\operatorname{\mathcal{G}amma}\left( {#1} \right)}
    {\operatorname{\mathcal{G}amma}\left( {#1}, {#2} \right)}
  }
}
\NewDocumentCommand\LapNoise{ m g }{
  \ensuremath{
    \IfNoValueTF{#2}
    {\operatorname{\mathbb{L}ap}\left( {#1} \right)}
    {\operatorname{\mathbb{L}ap}\left( {#1}, {#2} \right)}
  }
}
\NewDocumentCommand\GumbelNoise{ m g }{
  \ensuremath{
    \IfNoValueTF{#2}
    {\operatorname{\mathbb{G}umbel}\left( {#1} \right)}
    {\operatorname{\mathbb{G}umbel}\left( {#1}, {#2} \right)}
  }
}
\newcommand{\algoFstInv}{\mathcal{A}_{\textit{fst-inv}}}
\newcommand{\algoRounding}{\mathcal{A}_{\textit{rnd}}}
\newcommand{\hist}{\vec{h}}
\newcommand{\err}[1]{\cE\textit{rr}_{#1}}
\newcommand{\privHist}{\tilde{h}}
\newcommand{\diag}[1]{\operatorname{\bf diag} \PAREN{#1}}
\newcommand{\Lb}{{\footnotesize \text{LB}}}
\newcommand{\Ub}{{\footnotesize \text{UB}}}
\newcommand{\DataDomain}{\mathcal{D}}
\newcommand{\viewA}{\Pi_{\vec{x}}}
\newcommand{\viewB}{\Pi_{\vec{y}}}
\newcommand{\ProbNorm}{P_{norm}}
\newcommand{\TransMatrix}{\pmb{A}}
\newcommand{\CirculantMatrix}[1]{\pmb{\text{circ}}\PAREN{#1}}
\newcommand{\profile}{\vec{f}}
\newcommand{\boundedProfileOfNoisyHist}{\tilde{f}}
\newcommand{\largeSet}{T_{> 1}}
\newcommand{\surplus}{s_{> 1}}
\newcommand{\smallSet}{T_{< 0}}
\newcommand{\deficit}{s_{< 0}}
\newcommand{\estProfile}{\vec{r}}
\newcommand{\modularIndex}[1]{\mathsf{mi}\PAREN{{#1}}}
\newcommand{\DFTMatrix}{\pmb{V}}
\newcommand{\IntSet}[2]{[{#1}\,.\,.\,{#2}]}
\newcommand{\indicator}[1]{\mathds{1}_{\left[#1\right]}}
\newcommand{\vecAllOne}{\vec{\mathbbm{1}}}
\newcommand{\eigenvalue}{\varphi}
\newcommand{\vecTargetSupport}{\vec{\mathbbm{1}}_{{0:n}}}
\newcommand\inner[2]{\left\langle #1, #2 \right\rangle}
\newcommand\tinner[2]{\langle #1, #2 \rangle}
\newcommand\clip[3]{\text{clip} \bracket{ {#1}, {#2}, {#3} } }
\NewDocumentCommand\DiscreteLapNoise{ m g }{
  \ensuremath{
    \IfNoValueTF{#2}
    {\operatorname{\mathbb{DL}ap}\left( {#1} \right)}
    {\operatorname{\mathbb{DL}ap}\left( {#1}, {#2} \right)}
  }
}
\NewDocumentCommand\TDiscreteLapNoise{ m g }{
  \ensuremath{
    \IfNoValueTF{#2}
    {\operatorname{\mathbb{TDL}ap}\left( {#1} \right)}
    {\operatorname{\mathbb{TDL}ap}\left( {#1}, {#2} \right)}
  }
}
\NewDocumentCommand\ExpNoise{ m g }{
  \ensuremath{
    \IfNoValueTF{#2}
    {\operatorname{\mathbb{DE}xp}\left( {#1} \right)}
    {\operatorname{\mathbb{DE}xp}\left( {#1}, {#2} \right)}
  }
}
\newif\ifcomment
\definecolor{DarkGreen}{rgb}{0.1,0.5,0.1}
\newcommand{\hao}[1]{\textcolor{blue}{[HAO: #1]}}
\newcommand{\hao}[1]{%
  \@bsphack
  \@esphack
}
\begin{document}

\twocolumn[
\icmltitle{Profile Reconstruction from Private Sketches}

\icmlsetsymbol{equal}{*}

\begin{icmlauthorlist}
\icmlauthor{Hao WU}{yyy}
\icmlauthor{Rasmus Pagh}{yyy}
\end{icmlauthorlist}

\icmlaffiliation{yyy}{Department of Computer Science, University of Copenhagen, Denmark}

\icmlcorrespondingauthor{Hao WU}{hawu@di.ku.dk}
\icmlcorrespondingauthor{Rasmus Pagh}{pagh@di.ku.dk}

\icmlkeywords{Machine Learning, ICML}

\vskip 0.3in
]

\printAffiliationsAndNotice{}  %

\begin{abstract}
Given a multiset of $n$ items from $\DataDomain$, the \emph{profile reconstruction} problem is to estimate, for $t = 0, 1, \dots, n$, the fraction $\profile[t]$ of items in $\DataDomain$ that appear exactly $t$ times.
We consider differentially private profile estimation in a distributed, space-constrained setting where we wish to maintain an updatable, private sketch of the multiset that allows us to compute an approximation of $\profile = (\profile[0], \dots, \profile[n])$.
Using a histogram privatized using discrete Laplace noise, we show how to ``reverse'' the noise, using an approach of Dwork et al.~(ITCS '10).
We show how to speed up their LP-based technique from polynomial time to $O(d + n \log n)$, where $d = |\DataDomain|$, and analyze the achievable error in the $\ell_1$, $\ell_2$ and $\ell_\infty$ norms.
In all cases the dependency of the error on $d$ is $O( 1 / \sqrt{d})$ --- we give an information-theoretic lower bound showing that this dependence on $d$ is asymptotically optimal among all private, updatable sketches for the profile reconstruction problem with a high-probability error guarantee.
\end{abstract}

\section{Introduction}\label{sec:introduction}

The \emph{profile} is a fundamental statistic of a data set.
Given a multiset of $n$ items from a finite domain $\DataDomain$, the profile is a vector $\profile$ whose $t^{(th)}$ entry equals the fraction $\profile[t]$ of items in $\DataDomain$ that appear exactly $t$ times.
The concept has arisen in many contexts under various names, including \emph{rarity}, \emph{incidence}, \emph{fingerprint}, \emph{prevalence}, \emph{collision statistics}, \emph{anonymous histogram}, \emph{unattributed histogram}, and \emph{histogram of histograms}.
Besides succinctly summarizing frequency information in a dataset, for example the degree distribution in a graph, a profile is sufficient to compute any \emph{symmetric} function of the dataset histogram~\citep{chen_et_al:LIPIcs.ITCS.2024.32}.

In settings where space, communication or privacy constraints prevents us from computing the exact profile it is of interest to \emph{approximate} the profile $\profile$ by a vector $\estProfile$ such that $\estProfile - \profile$ has small $\ell_p$ norm for some parameter $p$.

{\bf Streaming algorithms.} \citet{datar2002estimating} considered estimating the profile in a streaming setting, showing that a small sample of the domain $\DataDomain$ suffices to ensure small error (they focus on $\ell_1$ error, but the technique applies to general $\ell_p$ norms).
The sample can be described by a compact hash function, making it suitable for use in distributed settings.
More recently,~\citet*{chen_et_al:LIPIcs.ITCS.2024.32} improved the error bound, and showed matching space lower bounds for streaming algorithms.
General estimators for symmetric functions based on samples were presented and analyzed by \citet*{icmlAcharyaDOS17}. %

{\bf Differentially private algorithms.}
Differential privacy~\cite{DworkMNS06} is a leading approach to ensure that the output of an algorithm, from simple statistics to machine learning models, does not reveal too much about the inputs.
It is one of the central techniques for enabling collection of statistics, training of machine learning models, and other computations on data that may hold sensitive information.
We will focus on the privacy of individual elements in the multiset, that is, our differential privacy guarantee is defined in terms of pairs of neighboring datasets where the number of occurrences of a particular element differs by 1, and that are otherwise identical.

In the \emph{central model} of differential privacy (where a trusted curator holds the input data), privately releasing a \emph{profile} with small $\ell_2$ error was considered by \citet*{HayLMJ09, HayRMS10}, the latter focusing on the application of graph degree distributions. %
Private synthetic graphs with specific degree distribution were also studied by \citet*{KarwaS12}. %

Motivated by estimating repetitions in password datasets, \citet*{BlockiDB16} studied differentially private protocols for profile estimation, focusing on $\ell_2$ error.

\citet*{Suresh19} considered an alternative (incomparable) error metric, where the profile is represented in the ``verbatim'' form of a vector encoding all frequencies in sorted order. They presented an algorithm to release verbatim profiles with a small $\ell_1$ error. Recently \citet*{Manurangsi22} showed improved tight upper and lower bounds on the privacy/$\ell_1$ error trade-off for verbatim profiles.

{\bf Differentially private distributed/streaming algorithms}
\citet*{DworkNPRY10} %
studied profile estimation (referred to as \emph{$t$-incidence}) in the pan-private streaming setting, where the internal state of the streaming algorithm must satisfy a differential privacy guarantee.
Pan-private algorithms protect against \emph{intrusions}, i.e., unintended release of the internal state of the algorithm.
Unlike the other works on private profile estimation mentioned above, the neighboring relation considered by~\citet{DworkNPRY10} is defined on the profile itself, i.e., datasets are neighboring if they differ \emph{by any amount} on the count of one item (so the profiles differ in two entries).
In a nutshell%
\footnote{Since there is no good bound on the difference between histograms with their notion of neighboring datasets, ~\citet{DworkNPRY10} actually work with histograms of counts modulo several small random primes $p_i$, and use these to estimate profiles on the counts modulo $p_i$, which are in turn used to estimate the original profile. In our context this complication goes away.},
their algorithm works by maintaining a \emph{histogram} that stores the number of occurrences of each item perturbed by additive noise.
Since the noisy histogram is a linear function of the histogram it is \emph{updatable} when the count of an item changes.
\citet{DworkNPRY10} finally show how to derive an estimate of the profile from the noisy histogram by solving a linear programming problem.
The size of the sketch required for achieving $\ell_\infty$ error $\alpha$ on the profile estimate with high probability is stated as $\text{poly}(\log n, 1/\eps, 1/\alpha )$ with unspecified exponents.
Unfortunately, the paper does not analyze for what values of $\alpha$ it is possible to solve the linear program (with high probability), meaning that the privacy-utility trade-off obtained remains unclear.

\subsection{Our contributions}

In this paper we study how to approximate the profile from a noisy histogram of the data.
Specifically, we consider the setting where \emph{discrete Laplace} noise has been added to the counts in the histogram.
(This is different from the noise distribution used by~\citet{DworkNPRY10}, which is nearly uniform modulo a small prime~$p_i$.) 
Computing the empirical profile directly from the noisy histogram can lead to large errors.
For example, consider the situation where $\DataDomain$ has $n$ items that each appear once in the dataset and let $\eps > 0$ be a small constant.
After adding noise to the histogram to make it $\eps$-differentially private (and possibly rounding up negative values to 0), the fraction of items with a count of~1 will be $\approx \eps$, far from the true fraction of $1$.

Instead we use an approach similar to that of~\citet{DworkNPRY10} to approximately invert the effect of the noise.
Since we use a known noise distribution, the \emph{expected} value of the profile computed from the noisy histogram is a linear function of the true profile.
Due to concentration we can argue that the observed values in the profile based on the noisy histogram are in fact close to the expected ones, and thus it makes sense to find a histogram whose expected profile after noise would be close to the observed noisy profile.
Unlike~\citet{DworkNPRY10} who formulate this as a feasibility question of a linear program, we show that it can be reduced to solving a linear $\ell_p$-regression problem with a circulant matrix, paving the way for a near-linear time algorithm.
We also give a careful analysis of the error $\err{p}$ in $\ell_p$ norm, for $p\in\{1,2,\infty\}$, showing the following upper bound:

\begin{theorem} \label{theorem: property of profile approximation algorithm}
    Let $\eta \in (0, 1)$, $\eps > 0$. 
    Denote $B \doteq \frac{1}{\eps} \ln \big( \max \set{ \frac{2d}{\eta(e^\eps + 1)}, \frac{8 \, e^\eps}{e^{2\eps} - 1} } \big)$, and assume that~$n \ge B$.
    There is an algorithm that, given a private version~$\privHist$ of a $d$-dimensional histogram~$\hist$ published by the~$\eps$-DP discrete Laplace mechanism and a parameter~$p = 1, 2$ or~$\infty$, returns an approximate~$\estProfile$ for the profile~$\profile$ of~$\hist$ in~$O(d + n \log n)$ time. 
    Moreover, with probability at least $1 - \eta$, the estimation error~$\err{p} \doteq || \estProfile - \profile ||_p$ satisfies
    
    \vspace{-4mm}
    \resizebox{\linewidth}{!}{
      \begin{minipage}{1.2\linewidth}
        \begin{align}
            \err{1} 
            &\in O \PAREN{ 
                    \PAREN{
                        \frac{ 
                            e^{ \eps } + 1
                        }{ 
                            e^{ \eps } - 1
                        }
                    }^2 \PAREN{  
                        \frac{
                            \sum_{t = -B}^{n + B} \sqrt{ \E{ \tilde{f} [t] } }
                        }{ \sqrt{d} } 
                        + \sqrt{
                            \frac{2\ln \frac{1}{\eta}}{d}
                        }
                    }
                }
            \label{ineq: error 1}
            \\
            \err{2}
            &\in O \PAREN{  
                    \PAREN{
                        \frac{ 
                            e^{ \eps } + 1
                        }{ 
                            e^{ \eps } - 1
                        }
                    }^2 
                    \sqrt{
                        \frac{\ln \frac{1}{\eta}}{d}
                    } 
                }
            \label{ineq: error 2}
            \\
            \err{\infty}
                &\in \PAREN{ 
                    \PAREN{
                       \frac{ 
                            e^{ \eps } + 1
                        }{ 
                            e^{ \eps } - 1
                        }
                    }^2  
                    \PAREN{
                        \sqrt{ 
                            \frac{ 
                                2 \, \PAREN{e^{ \eps } - 1}
                            }{  
                                d \,\PAREN{e^{ \eps } + 1}
                            } \ln \frac{n}{\eta} 
                        } 
                        + 
                        \frac{\ln \frac{n}{\eta}}{ 3 \, d }  
                    }
                }  
            \label{ineq: error 3}
        \end{align}
      \end{minipage}
    }
    
    \vspace{-1mm}
    where 
    $
        \tilde{f} [t] = %
            \frac{1}{d} | \{\ell \in \DataDomain : \privHist[\ell] = t \} |,
    $
    ~$t = \text{-}B, \ldots, n + B.$
\end{theorem}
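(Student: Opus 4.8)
The plan is to exploit that the empirical profile $\boundedProfileOfNoisyHist$ of the noisy histogram is, in expectation, a fixed linear image of $\profile$. Write $g$ for the discrete Laplace mass function used by the mechanism, $g[k] = \tfrac{e^\eps - 1}{e^\eps + 1}\, e^{-\eps |k|}$, and $\privHist[\ell] = \hist[\ell] + Z_\ell$ with the $Z_\ell$ i.i.d.\ $\sim g$. By linearity of expectation, $\E{\boundedProfileOfNoisyHist[t]} = \tfrac1d \sum_{\ell \in \DataDomain} \p{Z_\ell = t - \hist[\ell]} = \sum_{s=0}^{n} \profile[s]\, g[t-s] = (\profile \ast g)[t]$, i.e.\ the expected noisy profile is the convolution of $\profile$ with $g$; as a matrix this is a (circulant) convolution operator $A$. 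The algorithm forms $\boundedProfileOfNoisyHist$ from $\privHist$ in $O(d)$ time (one bucketing pass over the histogram, truncating to the window $\{-B,\dots,n+B\}$, which by the choice of $B$ contains every entry of $\privHist$ with probability $\ge 1-\eta$) and then returns the deconvolution $\estProfile := A^{-1}\boundedProfileOfNoisyHist$ --- equivalently the $\ell_p$-regression solution $\argmin_{\estProfile}\|A\estProfile - \boundedProfileOfNoisyHist\|_p$, since $A$ is invertible --- which, because $A$ is circulant, is computable via the FFT in $O(n\log n)$ time, for a total of $O(d + n\log n)$; clipping each coordinate of $\estProfile$ to $[0,1]$ afterwards makes it a bona fide profile and, since $\profile[t]\in[0,1]$, does not increase any $\err{p}$.

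\textbf{Reduction to two estimates.}
Since $A^{-1}$ is exactly the inverse convolution filter, $A^{-1}(\profile\ast g) = \profile$; and on the event above $\boundedProfileOfNoisyHist$ captures all the noisy counts, so $A\profile$ and $\E{\boundedProfileOfNoisyHist}$ differ only by the mass of $g$ falling outside $\pm B$, which by the definition of $B$ is $O(\eta/d)$ and hence negligible against everything else. Writing $\estProfile - \profile = A^{-1}(\boundedProfileOfNoisyHist - \E{\boundedProfileOfNoisyHist}) + A^{-1}(\E{\boundedProfileOfNoisyHist} - A\profile)$ and using Young's convolution inequality $\|A^{-1}v\|_p \le \|h\|_1\,\|v\|_p$ (where $h$ is the filter of $A^{-1}$), we get
\[
  \err{p} = \|\estProfile - \profile\|_p \;\le\; \|h\|_1\cdot\big\|\boundedProfileOfNoisyHist - \E{\boundedProfileOfNoisyHist}\big\|_p + O(\eta/d).
\]
It therefore suffices to (i) evaluate $\|h\|_1$, and (ii) control the fluctuation $\big\|\boundedProfileOfNoisyHist - \E{\boundedProfileOfNoisyHist}\big\|_p$ with high probability, for $p\in\{1,2,\infty\}$.

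\textbf{The deconvolution filter (the crux).}
The discrete-time Fourier transform of $g$ is $\hat g(\theta) = \tanh(\eps/2)\cdot\tfrac{\sinh\eps}{\cosh\eps - \cos\theta}$, which ranges over $[\tanh^2(\eps/2),\,1]$ and attains its minimum $\tanh^2(\eps/2) = \big(\tfrac{e^\eps-1}{e^\eps+1}\big)^2$ at $\theta = \pi$. The decisive fact is that $1/\hat g(\theta) = \coth(\eps/2)\big(\coth\eps - \tfrac{\cos\theta}{\sinh\eps}\big)$ is a trigonometric polynomial of degree one, so the deconvolution filter $h$ has only \emph{three} nonzero taps: $h[0] = \coth(\eps/2)\coth\eps$ and $h[\pm1] = -\tfrac{\coth(\eps/2)}{2\sinh\eps}$. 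A one-line computation gives $\|h\|_1 = \coth(\eps/2)\cdot\tfrac{\cosh\eps + 1}{\sinh\eps} = \coth^2(\eps/2) = \big(\tfrac{e^\eps+1}{e^\eps-1}\big)^2$ (which also equals $1/\min_\theta\hat g(\theta)$, the $\ell_2\!\to\!\ell_2$ operator norm of $A^{-1}$, so the same factor governs all three norms) --- precisely the leading factor in \eqref{ineq: error 1}--\eqref{ineq: error 3}. Plugging $\|h\|_1$ into the display above reduces the theorem to the concentration bounds.

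\textbf{Concentrating the noisy profile, and the main obstacle.}
For fixed $t$, $d\,\boundedProfileOfNoisyHist[t] = \sum_{\ell\in\DataDomain}\indicator{\privHist[\ell] = t}$ is a sum of independent $\{0,1\}$ variables whose variances sum to at most $\sum_\ell\p{\privHist[\ell]=t} = d\,\E{\boundedProfileOfNoisyHist[t]} \le d\,g[0] = d\,\tfrac{e^\eps-1}{e^\eps+1}$. Bernstein's inequality per coordinate plus a union bound over the $O(n)$ coordinates give, with probability $\ge 1-\eta$, $\big\|\boundedProfileOfNoisyHist - \E{\boundedProfileOfNoisyHist}\big\|_\infty \le \sqrt{\tfrac{2(e^\eps-1)}{d(e^\eps+1)}\ln\tfrac n\eta} + \tfrac{\ln(n/\eta)}{3d}$, which after multiplying by $\|h\|_1$ is \eqref{ineq: error 3}. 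For $p=2$, $\E{\big\|\boundedProfileOfNoisyHist - \E{\boundedProfileOfNoisyHist}\big\|_2^2} = \sum_t\Var{\boundedProfileOfNoisyHist[t]} \le \tfrac1d\sum_t\E{\boundedProfileOfNoisyHist[t]} \le \tfrac1d$, and changing one entry of $\privHist$ perturbs $\big\|\boundedProfileOfNoisyHist - \E{\boundedProfileOfNoisyHist}\big\|_2$ by at most $\sqrt2/d$, so McDiarmid's inequality yields $\big\|\boundedProfileOfNoisyHist - \E{\boundedProfileOfNoisyHist}\big\|_2 = O(\sqrt{\ln(1/\eta)/d})$, giving \eqref{ineq: error 2}. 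For $p=1$, Jensen gives $\E{\big\|\boundedProfileOfNoisyHist - \E{\boundedProfileOfNoisyHist}\big\|_1} \le \sum_t\sqrt{\Var{\boundedProfileOfNoisyHist[t]}} \le \tfrac1{\sqrt d}\sum_{t=-B}^{n+B}\sqrt{\E{\boundedProfileOfNoisyHist[t]}}$, the bounded-difference constant is now $2/d$, so McDiarmid adds $\sqrt{2\ln(1/\eta)/d}$, giving \eqref{ineq: error 1}. I expect the main obstacle to be the filter step: recognizing that inverting discrete Laplace noise is a \emph{finite}, three-tap operation, which is exactly what forces the $\ell_1$, $\ell_2$ and $\ell_\infty$ operator norms of $A^{-1}$ to coincide (without this one loses control of the $\eps$-dependence in the $\ell_1$ and $\ell_\infty$ bounds); a secondary, more mechanical difficulty is handling the $\pm B$ truncation (and, if one phrases $A$ as a finite circulant, the wrap-around) cleanly while keeping the total failure probability at $\eta$.
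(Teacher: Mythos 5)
Your proposal is correct in substance and reaches the stated bounds, but by a genuinely different route from the paper, so it is worth comparing. The paper conditions on the event that every noise value is bounded by $B$, so that the expected noisy profile equals $\TransMatrix\profile$ for a circulant matrix built from the \emph{truncated and renormalized} kernel; its algorithm then solves the constrained $\ell_p$-regression (the hyperplane projection in Algorithm~\ref{algo: fast inversion}) followed by the rounding procedure of Algorithm~\ref{algo: rounding}, and the error analysis passes through optimality of the regression solution (giving $2\norm{\TransMatrix^{-1}}_p\norm{\boundedProfileOfNoisyHist-\TransMatrix\profile}_p$), bounds $\norm{\TransMatrix^{-1}}_1=\norm{\TransMatrix^{-1}}_\infty$ by a three-row cancellation argument with $e^{-\eps B}$ correction terms (Lemmas~\ref{lemma: bounds on matrix norms} and~\ref{lemma: bounds on inverting a single entry}), bounds $\norm{\TransMatrix^{-1}}_2$ via the eigenvalues, and needs a separate lemma (Lemma~\ref{lemma: error of rounding}) to control the rounding loss. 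You instead keep the untruncated kernel, observe that $1/\hat g(\theta)$ is a degree-one trigonometric polynomial so the inverse is an \emph{exact} three-tap filter with $\norm{h}_1=\big(\tfrac{e^\eps+1}{e^\eps-1}\big)^2$, and get all three operator-norm bounds at once from Young's inequality; the $\pm B$ truncation and wrap-around become a deterministic additive bias of order $\eta/d$ rather than a perturbation of the operator, and your algorithm is plain FFT deconvolution plus coordinate clipping, with no hyperplane projection or sum-fixing, which indeed suffices for the error bounds (the paper's extra machinery buys a feasible profile as output and is what forces its factor $2$; your direct-inversion identity avoids even that). Your concentration step is essentially identical to the paper's Lemma~\ref{lemma: deviation of profile of noisy histogram} (Bernstein plus a union bound for $\ell_\infty$, McDiarmid with bounded differences $\sqrt2/d$ and $2/d$ for $\ell_2$ and $\ell_1$, Jensen for the $\ell_1$ expectation). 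What your write-up buys is a sharper and more transparent treatment of the inversion step and a simpler algorithm; what the paper's route buys is an output that is itself a valid profile and a treatment that also covers the clipped variant of the mechanism. That last point is your one omission: if $\privHist$ has been clipped to $[0,n]$, the convolution identity $\E{\boundedProfileOfNoisyHist}=\profile\ast g$ fails at the boundary bins, and you need the memorylessness ``unfolding'' step (the paper's Lemma~\ref{lemma: recover clipped histogram}) before your argument applies; with that added, and the routine window/wrap-around bookkeeping you already flag, your proof goes through.
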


\noindent
Notice that the error bounds in all three norms are proportional to $O(1/\sqrt{d})$, where $d$ is the domain size.
As we will discuss below, this dependence is optimal under $\varepsilon$-differential privacy for updatable sketches.
On the other hand, 
when~$\eps \in O(1)$,
the $O(1 / \eps^{1.5} )$ dependence for~$\err{\infty}$
on the privacy parameter does not match the $\Omega(1/\varepsilon)$ lower bound on the error of releasing a single counter~\cite{HardtT10}.
We note that the bound on $\err{1}$ is a refinement of the bound implied by the bound on $\err{2}$ and the fact that $\ell_1$ distances in $n$ dimensions are at most $\sqrt{n}$ times larger than $\ell_2$ distances.
Since errors are normalized by a $1/d$ factor one might hope to achieve errors that decrease more rapidly with $d$ than those of Theorem~\ref{theorem: property of profile approximation algorithm}.
However, we show the following lower bound:  

\begin{theorem}
    \label{theorem: accuracy lower bound of profile estimation}
    For every $\eps$-DP and updatable algorithm~$\cA : \IntSet{0}{n}^d \rightarrow \cY$, and every profile estimation algorithm~$\cR : \cY \rightarrow [0, 1]^{n + 1}$, 
    if the sensitivity~$\Delta_{\cR \circ \cA} \in o \big( \frac{1}{\sqrt{d} } \cdot \frac{\eps}{e^{c \cdot \eps}} \big)$ (for some universal constant $c > 0$), then
    there exists an input~$\hist \in \IntSet{0}{n}^d$ such that
    \begin{equation} 
        \label{eq: accuray lower bound}
        \begin{array}{c}
            \E{ \norm{ \profile - \cR \PAREN{ \cA(\hist) } }_\infty }
                \in \Omega \Big( \frac{1}{\sqrt{d} \cdot e^{c \cdot \eps}} \Big).
        \end{array}
    \end{equation}
\end{theorem}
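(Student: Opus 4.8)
The key intuition is that an updatable sketch $\cA$ with low sensitivity $\Delta_{\cR\circ\cA}$ cannot distinguish between many different inputs that all have very different profiles, because $\varepsilon$-DP forces the distribution of $\cA(\hist)$ to change slowly as we move between neighboring datasets, and updatability means we pay sensitivity-cost per coordinate change.

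\textbf{Step 1: construct a hard family of inputs.} First I would exhibit two (or a small family of) histograms $\hist_0, \hist_1 \in \IntSet{0}{n}^d$ whose profiles are $\Omega(1)$-far in $\ell_\infty$ but which differ in only $k = \Theta(\sqrt{d})$ coordinates, each by a bounded amount. The natural choice: let $\hist_0$ be the all-zeros histogram (so $\profile[0] = 1$) and let $\hist_1$ have some $k$ coordinates equal to $1$ (so $\profile[1] = k/d$, $\profile[0] = 1 - k/d$). With $k = \Theta(\sqrt d)$ these profiles differ by $\Theta(1/\sqrt d)$ in $\ell_\infty$. More refined: take a uniformly random subset $S$ of size $k$ and set those coordinates to $1$; this randomization over inputs is what drives the averaging in the expectation bound.

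\textbf{Step 2: relate sensitivity to indistinguishability.} Since $\hist_0$ and $\hist_1$ differ in $k$ coordinates, there is a path $\hist_0 = \hist^{(0)}, \hist^{(1)}, \dots, \hist^{(k)} = \hist_1$ of neighboring datasets (changing one coordinate by $1$ at a time). By group privacy / the triangle inequality applied to $\cR\circ\cA$ as a randomized map into $[0,1]^{n+1}$, the expected $\ell_\infty$ distances $\E{\norm{\cR(\cA(\hist_{i})) - \cR(\cA(\hist_{i+1}))}_\infty}$ are each at most $\Delta_{\cR\circ\cA}$, so $\E{\norm{\cR(\cA(\hist_0)) - \cR(\cA(\hist_1))}_\infty} \le k \cdot \Delta_{\cR\circ\cA}$. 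Here the factor $e^{c\varepsilon}$ enters: a careful accounting of how the output distribution drifts along the path (rather than a crude union bound) gives the $\varepsilon/e^{c\varepsilon}$ scaling --- one expects the sensitivity to degrade by roughly a factor $e^{\varepsilon}$ per step but be controllable via the DP guarantee over the whole path; the precise constant $c$ will absorb this. If $\Delta_{\cR\circ\cA} \in o\big(\frac{1}{\sqrt d}\cdot\frac{\varepsilon}{e^{c\varepsilon}}\big)$ and $k = \Theta(\sqrt d)$, then the two output profiles are expected to be $o(1/(\sqrt d\, e^{c\varepsilon}) \cdot \sqrt d \cdot \varepsilon) = o(\varepsilon/e^{c\varepsilon})$ apart — i.e., essentially the same reconstruction must be returned on both inputs with non-trivial probability.

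\textbf{Step 3: derive the error lower bound.} Since the true profiles of $\hist_0$ and $\hist_1$ differ by $\Theta(1/\sqrt d)$ in $\ell_\infty$ while the estimator's outputs are $o(1/\sqrt d)$ apart in expectation (after rescaling by the appropriate $e^{c\varepsilon}$ factor), the triangle inequality forces $\E{\norm{\profile - \cR(\cA(\hist))}_\infty} \in \Omega(1/(\sqrt d\, e^{c\varepsilon}))$ for at least one of $\hist \in \{\hist_0, \hist_1\}$ (or, in the randomized-subset version, on average over the family, hence for some fixed input). \textbf{The main obstacle is Step 2:} getting the right dependence on $\varepsilon$ — specifically showing that updatability plus $\varepsilon$-DP yields sensitivity $\Delta_{\cR\circ\cA} = \Omega(\frac{1}{\sqrt d}\cdot\frac{\varepsilon}{e^{c\varepsilon}})$ rather than the weaker bound one gets from a single coordinate. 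This requires a clean notion of "sensitivity" for the composite randomized map and a careful argument that it cannot be amortized below the stated threshold; a hybrid argument interpolating between $\hist_0$ and $\hist_1$ one coordinate at a time, combined with the per-step DP bound $e^{-\varepsilon} \le \frac{\Pr[\cA(\hist^{(i)}) \in E]}{\Pr[\cA(\hist^{(i+1)}) \in E]} \le e^{\varepsilon}$, should close this, but balancing $k$ against the $e^{k\varepsilon}$-type blowup to land exactly at $k = \Theta(\sqrt d)$ with the claimed constants is the delicate part.
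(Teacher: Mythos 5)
Your Step 2--3 is where the argument breaks, and the failure is quantitative, not cosmetic. Definition~\ref{def: composed sensitivity} does give you, for fixed coin tosses $c_A$ and a path of $k$ neighboring histograms, the hybrid bound $\norm{\cR(F_A(\hist_0,c_A)) - \cR(F_A(\hist_1,c_A))}_\infty \le k\,\Delta_{\cR\circ\cA}$. But with $k=\Theta(\sqrt d)$ this only says the two reconstructions are $o\big(\eps/e^{c\eps}\big)$ apart --- a quantity that does not shrink with $d$ --- whereas the true profiles of your two inputs differ by only $\Theta(k/d)=\Theta(1/\sqrt d)$ in $\ell_\infty$. For your triangle-inequality conclusion you would need $k\,\Delta_{\cR\circ\cA}\ll k/d$, i.e.\ $\Delta_{\cR\circ\cA}\in o(1/d)$, which is far stronger than the hypothesis $o\big(\tfrac{1}{\sqrt d}\cdot\tfrac{\eps}{e^{c\eps}}\big)$ when $\eps$ is a constant. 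Worse, the sensitivity hypothesis alone is satisfied by the trivial \emph{non-private} sketch ($\cA=\mathrm{id}$ with exact profile computation), whose per-neighbor sensitivity is exactly $1/d$ and whose error is $0$; so no proof that uses only the sensitivity plus a hybrid can work --- the $\eps$-DP property of $\cA$ must enter in an essential, quantitative way, and your sketch only gestures at it. The natural ways to bring DP in also fail here: group privacy along a length-$\Theta(\sqrt d)$ path costs a factor $e^{\Theta(\sqrt d)\,\eps}$ (vacuous), and a two-point packing argument with $k\approx 1/\eps$ only yields error $\Omega(1/(\eps d))$, well below the claimed $\Omega(1/\sqrt d)$. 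Finally, your argument never uses updatability, which is a load-bearing hypothesis.

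The paper proves the theorem by a reduction to the two-party differentially private inner-product lower bound (Fact~\ref{fact: two party accuracy lower bound}), which is exactly the machinery that delivers the $\sqrt d$ barrier that an elementary hybrid cannot. Alice sends the sketch $\cA(\vec{x}+\vecAllOne)$; Bob uses the update algorithm $\cU$ (this is where updatability is essential) to fold in $\vec{y}+\vecAllOne$, applies $\cR$, and publishes $d\,( \estProfile[4]+\estProfile[0]-\estProfile[2])$ plus Laplace noise scaled by $\Delta_{\cR\circ\cA}$; the identity $\inner{\vec{x}}{\vec{y}} = d\,(\profile[4]+\profile[0]-\profile[2])$ converts an $o\big(1/(\sqrt d\, e^{c\eps})\big)$-accurate profile reconstruction into an inner-product protocol accurate to $o\big(\sqrt d / e^{c\eps}\big)$, contradicting the key-agreement-based impossibility. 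Note how the hypotheses are actually used there: DP of $\cA$ privatizes Alice's message, updatability lets Bob merge his data into that private sketch, and the sensitivity assumption is used only to calibrate the Laplace noise making Bob's announcement private --- not, as in your proposal, to argue indistinguishability of the reconstructions.
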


This lower bound is not limited to algorithms reconstructing the profile from a perturbed histogram published by the discrete Laplace mechanism. 
Instead, it applies to a broader class of algorithms that attempt to reconstruct the profile from outputs of \emph{updatable} $\eps$-differentially private algorithms.
The formal definition of an ``updatable'' algorithm and the sensitivity~$\Delta_{\cR \circ \cA}$ will be given in Section~\ref{sec: lower bound}.
Informally, the former denotes the property that when the input histogram $\hist$ is updated, we can directly modify $\cA(\hist)$ to reflect the change, and the latter represents the maximum distance between the profiles reconstructed from the outputs of $\cA$ for neighboring histograms.
The lower bound shows that an error of~$\Omega (1 / \sqrt{d})$ is unavoidable for such sketches.

\vspace{-2mm}
\subsection{Technical overview}
As noted by~\citet{datar2002estimating} and~\citet{DworkNPRY10} it suffices to estimate the profile of a sample of elements from the domain, since the profile of the sample will closely resemble the true profile.
Thus we will assume that this sampling step has already been done, and consider the full histogram of (sampled) items.
The final error bounds add the sampling error and the error from privacy using the triangle inequality, but here we focus on the latter.

Given a noisy histogram~$\privHist$, we can compute an empirical profile~$\boundedProfileOfNoisyHist$.
As the noisy histogram is obtained by adding independent noise to each coordinate of the full histogram~$\hist$, the empirical profile can be viewed as a function of a set of independent random variables and should concentrate around its expectation~$\mathbb{E}[\boundedProfileOfNoisyHist]$.
Hence, we regard the empirical profile as a reliable representative of its expectation.

Although the expectation~$\mathbb{E}[\boundedProfileOfNoisyHist]$ itself may not be a good approximation of the original profile~$\profile$ of~$\hist$, there is a subtle connection between them:~$\mathbb{E}[\boundedProfileOfNoisyHist]$ can be computed via a linear transform of the original profile~$\profile$.
Due the symmetry of discrete Laplace noise, we can write the linear transform as~$\TransMatrix \profile$ for a circulant matrix~$\TransMatrix$ with non-zero eigenvalues.

Since the inversion for such linear transforms can be computed in~$O(n \log n)$ time via Fast Fourier transform (FFT), it motivates us to compute~$\TransMatrix^{-1} \boundedProfileOfNoisyHist$ as an approximation of the true profile.
However, such an approximation may not be a valid profile: it may contain coordinates outside~$[0, 1]$, and its coordinates do not sum up to~$1$.
We will design a systematic procedure for fixing these violations, which also relies on FFT and runs in~$O(n \log n)$ time.

The approximation error comes from three sources: the deviation of~$\boundedProfileOfNoisyHist$ from its expectation, the inverse linear transform, and the violation fixing procedure. 
Each part will be analyzed separately. 
The second part, in particular, might be of independent interest as it involves examining various matrix norms of~$\TransMatrix^{-1}$, which naturally arises from the discrete Laplace mechanism.

To establish a lower bound, we employ a reduction from the $d$-dimensional inner product estimation problem in the framework of two-party differential privacy where an error of~$\Omega(\sqrt{d})$ is inevitable~\citep{McGregorMPRTV10, HaitnerMST22}.

\vspace{-3mm}
\section{Problem Description}

Let~$\DataDomain \doteq \set{1, \ldots, d}$ be a set of~$d$ items.
Given a multiset of~$n$ items from $\DataDomain$,
for each item~$\ell \in \DataDomain$, let~$\hist[\ell]$ be the number of times it appears in the multiset.
The \emph{histogram} is a vector~$\hist \doteq \paren{ \hist[1], \ldots, \hist[d] } \in \IntSet{0}{n}^d$. 
The \emph{profile} of the histogram~$\hist$ is a frequency vector~$\profile \doteq (\profile[0], \ldots, \profile[n]) \in [0,1]^{n + 1},$ 
where 
$
    \profile[t] = 
        \PAREN{1 / d} \cdot | \{ \ell \in \cD : \hist[\ell] = t \} |
$
is the fraction of elements in $\cD$ which appears~$t$ times in the multiset, for each $t \in \IntSet{0}{n}$.

Our goal is to design an efficient and accurate algorithm for estimating the profile~$\profile$ from a private histogram~$\privHist$ output by the \emph{discrete Laplace mechanism} (or geometric mechanism) due to~\citet*{GhoshRS09}.

\vspace{-3mm}
\paragraph{Discrete Laplace Mechanism.} 
    Given an input histogram~$\hist$, the mechanism aims at outputting a perturbed version~$\privHist$ of~$\hist$ that is insensitive to 
    the addition/removal of an arbitrary item in the multiset.
    Specifically, the mechanism creates~$\privHist$ by adding independent noises to each coordinate of~$\hist$, such that
    \vspace{-1mm}
    \begin{equation*}
        \privHist[\ell] = \hist[\ell] + \DiscreteLapNoise{e^{-\eps}},\, 
        \forall \ell \in \DataDomain,
    \end{equation*}
    where $\DiscreteLapNoise{e^{-\eps}}$ denotes a random variable following discrete Laplace distribution: 
    $
        \P{ \DiscreteLapNoise{e^{-\eps}} = t } = \frac{1 - e^{-\eps}}{1 + e^{-\eps}} \cdot e^{-\eps \cdot \card{t} }, \, \forall t \in \Z.
    $
    In practice, a clipping step can be applied after coordinate perturbation:
    \vspace{-1mm}
    \begin{equation*}
        \privHist[\ell] = \clip{\hist[\ell] + \DiscreteLapNoise{e^{-\eps}}}{0}{n},\, 
        \forall \ell \in \DataDomain.
    \end{equation*}
    where~$\text{clip} [\hist[\ell] + \DiscreteLapNoise{e^{-\eps}}, 0, n] \doteq \max \{ 0, \min \{ \hist[\ell]$ $+ \DiscreteLapNoise{e^{-\eps}}, n \} \}$ ensures the returned value is in~$[0 \,.\,.\, n]$. 

    {\it Privacy Guarantee.}
    The privacy guarantee provided by the mechanism is measured by the framework of \emph{differential privacy}.
    Formally, we call two histograms~$\hist$ and~$\hist'$ \emph{neighboring}, written as~$\hist \sim \hist'$, 
    if 
    $\exists \ell \in \DataDomain$, s.t.,~$| \hist[\ell] - \hist'[\ell] | \le 1$, and $\forall \ell' \in \cD \setminus \set{\ell}$, it holds that $\hist[\ell'] = \hist'[\ell']$. 
    The discrete Laplace mechanism ensures that the output histograms have similar distributions for neighboring inputs.

\begin{definition}[${\eps}$-Differentially Privacy~\citep{DR14}] \label{def: Differential Privacy}
    Given~$\eps > 0$, a randomized algorithm 
    $\cA: \N^d \rightarrow \cZ$ 
    is called~${\eps}$-differentially private (DP),
    if for every~$\hist, \hist' \in \N^d$ such that~$\hist \sim \hist'$, 
    and 
    all (measurable) $Z \subseteq \cZ$,
    \begin{equation} \label{ineq: def private algo}
        \begin{array}{c}
            \P{ \cA (\hist) \in Z } \le e^\eps \cdot \Pr [ \cA (\hist') \in Z ]\,.
        \end{array}
    \end{equation}
    \vspace{-5mm}
\end{definition}

\citet{GhoshRS09} showed that the discrete Laplace mechanism is~$\eps$-DP (with or without the clipping step).

\vspace{-2mm}
\section{Preliminaries}

\begin{definition}[Circulant Matrix]
    Let~$p \in \N^+$ and~$\vec{c} = \begin{bmatrix}
                c_0, c_1, \cdots, c_{p - 1} 
    \end{bmatrix} \in \C^p$
    be a row vector.
    Define the (right) circular shift operator~$\sigma: \C^p \rightarrow \C^p$ by
    $
        \sigma \PAREN{\vec{c}} = \begin{bmatrix}
                c_{p - 1}, c_0, \cdots, c_{p - 2} 
        \end{bmatrix}.
    $
    The circulant matrix~$\CirculantMatrix{\vec{c}}$ is a matrix whose rows~$\vec{c}, \sigma \PAREN{\vec{c}}, \ldots, \sigma^{p - 1} \PAREN{\vec{c}}$ are generated by iteratively applying the operator~$\sigma$ on~$\vec{c}$.
\end{definition}

\begin{fact}[Eigenvectors and Eigenvalues~\citep{Gray05}]
    \label{fact: eigenvalues and vectors of circulant matrix}
    Let~$w_t \doteq e^{- \frac{2 \pi t \cdot i}{p} },$ for each~$t \in \IntSet{0}{p - 1}$.
    Then the eigenvalues~$\eigenvalue_0, \ldots, \eigenvalue_{p - 1}$ and the corresponding eigenvectors~$\vec{v}_0, \ldots, \vec{v}_{p - 1}$ of the matrix
    $
        \CirculantMatrix{
            \vec{c}
        }
    $ 
    are given by 
    \vspace{-1mm}
    \begin{align}
        \varphi_t &\doteq c_0 + c_1 (w_t)^1 + \cdots + c_{p - 1} (w_t)^{p - 1}, 
        \\
        \vec{v}_t &\doteq 1 / {\sqrt{p}} \cdot \begin{bmatrix}
        1   ,  (w_t)^1   ,  \cdots  , (w_t)^{p - 1}
        \end{bmatrix}^\top.
        \label{eq: definition of column of fourier transform matrix}
    \end{align}
    \vspace{-1mm}
\end{fact}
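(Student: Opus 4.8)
The plan is to prove the Fact by direct verification: first check that each claimed pair $(\varphi_t, \vec{v}_t)$ is genuinely an eigenpair of $\CirculantMatrix{\vec{c}}$, and then observe that the $p$ vectors $\vec{v}_0, \ldots, \vec{v}_{p-1}$ are linearly independent, so that they constitute a full eigenbasis and $\varphi_0, \ldots, \varphi_{p-1}$ are exactly the eigenvalues, counted with multiplicity.

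For the first part I would start from the entrywise description of the circulant matrix. Indexing rows and columns by $0, \ldots, p-1$, the $j$-th row of $\CirculantMatrix{\vec{c}}$ is $\sigma^j(\vec{c})$, and since $\sigma$ shifts coordinates cyclically to the right, the $(j,k)$ entry equals $c_{(k-j) \bmod p}$ (one checks $j=1$ against $\sigma(\vec{c}) = [c_{p-1}, c_0, \ldots, c_{p-2}]$ to fix the direction of the shift). I would also record the one arithmetic fact that drives the computation: $w_t^p = e^{-2\pi t i} = 1$, so exponents of $w_t$ may be reduced modulo $p$ freely. Then the $j$-th coordinate of $\CirculantMatrix{\vec{c}}\,\vec{v}_t$ is $\frac{1}{\sqrt{p}} \sum_{k=0}^{p-1} c_{(k-j)\bmod p}\, w_t^k$; re-indexing by $m = (k-j)\bmod p$ and writing $w_t^k = w_t^m w_t^j$ (legitimate by the previous observation), this becomes $w_t^j \cdot \frac{1}{\sqrt{p}} \sum_{m=0}^{p-1} c_m w_t^m = \varphi_t \cdot (\vec{v}_t)_j$. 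Since this holds for every $j$, we conclude $\CirculantMatrix{\vec{c}}\,\vec{v}_t = \varphi_t \vec{v}_t$, which is the claimed eigenrelation.

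For the second part I would note that the matrix whose columns are $\sqrt{p}\,\vec{v}_0, \ldots, \sqrt{p}\,\vec{v}_{p-1}$ is the Vandermonde matrix in the nodes $w_0, \ldots, w_{p-1}$, which are the $p$ distinct $p$-th roots of unity; hence its determinant $\prod_{s<t}(w_t - w_s) \neq 0$, so the $\vec{v}_t$ form a basis of $\C^p$. Consequently $\CirculantMatrix{\vec{c}}$ is diagonalized by the $\vec{v}_t$ and its spectrum is precisely $\set{\varphi_0, \ldots, \varphi_{p-1}}$ with multiplicity, as asserted. There is no substantive obstacle in this argument — the only point requiring care is the index bookkeeping modulo $p$ (including the direction of the cyclic shift $\sigma$), which is exactly what makes the re-indexing step valid; everything else is a routine calculation.
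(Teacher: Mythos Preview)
Your argument is correct and is the standard direct verification: compute the action of the circulant on each $\vec{v}_t$ via the cyclic re-indexing, then invoke Vandermonde (or, equivalently, orthonormality of the DFT columns) to conclude the $\vec{v}_t$ form a basis. The paper itself does not prove this statement at all --- it is recorded as a cited \emph{Fact} from \cite{Gray05} and used as a black box --- so there is no ``paper's proof'' to compare against; your write-up simply supplies the omitted standard proof.
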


\vspace{-4mm}
\begin{fact}[Fast Fourier Transform (FFT)~\citep{CLRS09}]
    \label{fact: fast fourier transform}
    Let~$\DFTMatrix \doteq \bracket{ \vec{v}_0 \cdots \vec{v}_{p - 1} } \in \C^{p \times p}$ be the matrix composed of column vectors
    $\vec{v}_t$ (defined in Equation~\eqref{eq: definition of column of fourier transform matrix})
    for $t = 0, 1, \ldots, p - 1.$
    Then for each~$\vec{x} \in \C^p$, $\DFTMatrix \vec{x}$, $\DFTMatrix^{-1} \vec{x}$,~$\vec{x}^\top \DFTMatrix$ and~$\vec{x}^\top \DFTMatrix^{-1}$ can be computed in~$O(p \log p)$ time.
\end{fact}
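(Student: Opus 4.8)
The plan is to recognize $\DFTMatrix$ as a scaled copy of the classical $p$-point discrete Fourier transform matrix and to use its symmetry and unitarity to reduce all four requested products to a single forward transform, which the Fast Fourier Transform evaluates in $O(p\log p)$ time. First I would record the two structural properties. By Equation~\eqref{eq: definition of column of fourier transform matrix}, the $(j,t)$ entry of $\DFTMatrix$ is $\tfrac{1}{\sqrt{p}}(w_t)^j = \tfrac{1}{\sqrt{p}} e^{-2\pi i j t/p}$, which is invariant under swapping $j$ and $t$; hence $\DFTMatrix^\top = \DFTMatrix$. Next, computing $(\DFTMatrix^\ast \DFTMatrix)_{s,t} = \tfrac1p\sum_{j=0}^{p-1} e^{2\pi i j(s-t)/p}$ and evaluating the geometric sum (which is $1$ if $s=t$ and $0$ otherwise, since $e^{2\pi i(s-t)}=1$ for integer $s-t$) shows $\DFTMatrix^\ast\DFTMatrix = I$; so $\DFTMatrix$ is unitary, and combined with symmetry, $\DFTMatrix^{-1} = \DFTMatrix^\ast = \overline{\DFTMatrix}$, the entrywise conjugate.

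These two facts collapse the four cases to one. Symmetry gives $\vec{x}^\top\DFTMatrix = (\DFTMatrix\vec{x})^\top$ and $\vec{x}^\top\DFTMatrix^{-1} = (\DFTMatrix^{-1}\vec{x})^\top$, so the two row-vector products follow from the two column-vector products by an $O(p)$-time re-indexing. And $\DFTMatrix^{-1}\vec{x} = \overline{\DFTMatrix}\,\vec{x} = \overline{\DFTMatrix\,\overline{\vec{x}}}$, so $\DFTMatrix^{-1}\vec{x}$ follows from a single forward product $\DFTMatrix\vec{y}$ with $\vec{y} = \overline{\vec{x}}$ by two $O(p)$-time conjugations. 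It therefore suffices to compute $\DFTMatrix\vec{y}$ for arbitrary $\vec{y}\in\C^p$ in $O(p\log p)$ time. Here $(\DFTMatrix\vec{y})_j = \tfrac{1}{\sqrt{p}}\sum_{t=0}^{p-1} y_t e^{-2\pi i j t/p}$ is $1/\sqrt{p}$ times the $j$-th DFT coefficient of $\vec{y}$ (the scaling costs $O(p)$). When $p$ is a power of two this is the classical Cooley–Tukey FFT \citep{CLRS09}, which splits the sum into even- and odd-indexed $t$ to recurse on two size-$p/2$ transforms, yielding the $O(p\log p)$ bound. For general $p$ one invokes Bluestein's chirp-$z$ algorithm: the identity $2jt = j^2 + t^2 - (j-t)^2$ rewrites the transform as $\tfrac{1}{\sqrt p}\,e^{-\pi i j^2/p}\sum_t\bigl(y_t e^{-\pi i t^2/p}\bigr)e^{\pi i(j-t)^2/p}$, i.e.\ a linear convolution of two length-$p$ sequences, which after zero-padding to a power of two $\ge 2p$ is computed by three Cooley–Tukey FFTs, again in $O(p\log p)$ time. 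Chaining the $O(p)$-time reductions with this $O(p\log p)$-time subroutine gives the claimed bound for each of $\DFTMatrix\vec{x}$, $\DFTMatrix^{-1}\vec{x}$, $\vec{x}^\top\DFTMatrix$, and $\vec{x}^\top\DFTMatrix^{-1}$.

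The statement is classical, so there is no genuine obstacle. The one point needing care is that plain Cooley–Tukey requires a highly composite (e.g.\ power-of-two) $p$, whereas the lemma asserts the bound for \emph{every} $p$; handling an arbitrary $p$ is precisely what the detour through Bluestein's algorithm (or a mixed-radix/prime-factor FFT) supplies. A secondary, purely bookkeeping point is to track the symmetric, $1/\sqrt{p}$-normalized convention throughout, so that $\DFTMatrix^{-1}$ really is the entrywise conjugate $\overline{\DFTMatrix}$ and no hidden rescaling is introduced when passing between the forward and inverse transforms.
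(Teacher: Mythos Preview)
Your proposal is correct. The paper does not prove this statement at all: it is presented as a \textbf{Fact} in the Preliminaries section with a citation to \citep{CLRS09}, and is simply invoked as a known black box. So there is no ``paper's proof'' to compare against; you have supplied a self-contained argument for a result the authors take for granted.

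That said, your write-up is sound and well organized. The reduction of all four products to a single forward transform via the symmetry $\DFTMatrix^\top=\DFTMatrix$ and unitarity $\DFTMatrix^{-1}=\overline{\DFTMatrix}$ is clean, and you are right to flag that the bare Cooley--Tukey recursion only handles highly composite $p$, so that Bluestein's chirp-$z$ (or an equivalent mixed-radix scheme) is needed for the stated ``for each $\vec{x}\in\C^p$'' generality. In the paper's application the dimension is $p=n+2B+1$, which need not be a power of two, so this care is warranted even if the authors do not dwell on it.
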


\vspace{-2mm}
\section{Near-linear Time Algorithm}
\label{sec: near-linear algorithm}

In this section, we introduce an algorithm that achieves the properties stated in Theorem~\ref{theorem: property of profile approximation algorithm}.
Our presentation is structured as follows.
In Section~\ref{sec: unfolding clipped histogram}, we demonstrate that we can transform the histograms published by the discrete Laplace mechanisms with the clipping step into ones without clipping steps, ensuring they follow the same distribution. 
This allows us to address perturbed histograms with and without clipping in a unified approach. 
Subsequently, in Section~\ref{sec: Searching for Profiles}, we formulate the task of searching for a good estimate of the profile as an optimization problem.
Section~\ref{sec: optimization algorithm} presents a near-linear algorithm for systematically finding an approximate solution to the optimization problem.
Sections~\ref{sec: Time Analysis} and~\ref{sec: error analysis} analyze the time complexity and the approximation error of the proposed algorithms, respectively.
Finally, Section~\ref{subsec: comparison} provides a more detailed comparison of our algorithm with the previous approach.

\vspace{-1mm}
\subsection{Unfolding Clipped Histograms}
\label{sec: unfolding clipped histogram}

Assume that~$\privHist$ has been clipped, so that each entry has value between~$0$ and~$n$. 
Based on the clipped histogram, we can construct a new histogram that has the same distribution as the an unclipped one. 
Consider an entry~$\privHist[\ell] = \clip{\hist[\ell] + \DiscreteLapNoise{e^{-\eps}}}{0}{n}$ for some~$\ell \in \DataDomain$, and the recovery procedure: 
\begin{equation} 
    \begin{array}{c}
        \privHist'[\ell] \doteq
            \begin{cases}
                \privHist[\ell],    & \text{ if } 0 < \privHist[\ell] < n, \\
                \privHist[\ell] - \ExpNoise{e^{-\eps}},    & \text{ if } \privHist[\ell] = 0, \\
                \privHist[\ell] + \ExpNoise{e^{-\eps}},    & \text{ if } n = \privHist[\ell],
            \end{cases}
    \end{array}
\end{equation}
where $\ExpNoise{e^{-\eps}}$ denotes a random variable following the geometric distribution: 
$
    \P{ \ExpNoise{e^{-\eps}} = t } = \PAREN{1 - e^{-\eps}} \cdot e^{-\eps \cdot {t} }, \, \forall t \in \N.
$

\begin{lemma}
    \label{lemma: recover clipped histogram}
    $\privHist'[\ell]$ has the same distribution as~$\hist[\ell] + Z$, where~$Z$ is an independent copy of~$\DiscreteLapNoise{e^{-\eps}}$.
\end{lemma}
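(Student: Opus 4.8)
The plan is to fix an arbitrary coordinate $\ell \in \DataDomain$, write $a \doteq \hist[\ell] \in \IntSet{0}{n}$, let $W$ be the copy of $\DiscreteLapNoise{e^{-\eps}}$ used to form $\privHist[\ell]$, and set $X \doteq a + W$, so that $\privHist[\ell] = \clip{X}{0}{n}$. The goal then becomes: show that $\privHist'[\ell]$ and $X$ have the same law. I would prove this by conditioning on which of the three events $R_0 \doteq \set{X \le 0}$, $R_\mathrm{mid} \doteq \set{0 < X < n}$, $R_n \doteq \set{X \ge n}$ the variable $X$ lands in. These three events partition the sample space, and crucially they coincide with the three branches of the clipping step ($\privHist[\ell] = 0$, $0 < \privHist[\ell] < n$, $\privHist[\ell] = n$), hence with the three branches of the recovery procedure that defines $\privHist'[\ell]$ (note that, with $n \ge 1$, the events $\privHist[\ell]=0$ and $\privHist[\ell]=n$ are disjoint and neither lies in $0<\privHist[\ell]<n$; if $n=0$ the statement is degenerate).

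The one genuine ingredient is a memorylessness property of the tails of the discrete Laplace law. Using $0 \le a \le n$ and summing geometric series, I would first establish $\P{W \le -a} = \frac{e^{-\eps a}}{1 + e^{-\eps}}$ and (by symmetry of the distribution) $\P{W \ge n - a} = \frac{e^{-\eps (n-a)}}{1 + e^{-\eps}}$, and then, for every integer $m \ge 0$,
\[
    \P{ -a - W = m \,\middle|\, W \le -a } \;=\; \P{ W - (n - a) = m \,\middle|\, W \ge n - a } \;=\; (1 - e^{-\eps}) e^{-\eps m} \;=\; \P{ \ExpNoise{e^{-\eps}} = m }.
\]
In words: conditioned on $X \le 0$, the overshoot $-X = -a - W$ is distributed as $\ExpNoise{e^{-\eps}}$, and symmetrically, conditioned on $X \ge n$, the overshoot $X - n$ is distributed as $\ExpNoise{e^{-\eps}}$. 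This is the only step that requires computation; everything I rely on here is just a geometric sum, so I will not grind it out.

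With this in hand the three conditional comparisons are immediate. On $R_\mathrm{mid}$ we have $\privHist[\ell] = X$ and the recovery leaves it unchanged, so $\privHist'[\ell] = X$ there. On $R_0$ we have $\privHist[\ell] = 0$, the recovery outputs $-\ExpNoise{e^{-\eps}}$ for a fresh independent copy, while the identity above gives $(X \mid R_0) \sim -\ExpNoise{e^{-\eps}}$; hence $\privHist'[\ell]$ and $X$ have the same conditional law given $R_0$. The case $R_n$ is symmetric, both sides having conditional law $n + \ExpNoise{e^{-\eps}}$. Since the recovery branch is selected precisely by the region of $X$, the law of total probability yields, for every integer $k$,
\[
    \P{\privHist'[\ell] = k} \;=\; \sum_{R \in \set{R_0,\, R_\mathrm{mid},\, R_n}} \P{X \in R}\, \P{\privHist'[\ell] = k \mid X \in R} \;=\; \sum_{R} \P{X \in R}\, \P{X = k \mid X \in R} \;=\; \P{X = k},
\]
which equals $\P{\hist[\ell] + Z = k}$, as required. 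The points to be careful about are that the three regions are genuinely disjoint and exhaustive, and that the auxiliary $\ExpNoise{e^{-\eps}}$ variables are drawn independently of $W$ so the conditional computations on $R_0$ and $R_n$ are valid; the main obstacle is really just the geometric-series/memorylessness identity, and the rest is bookkeeping.
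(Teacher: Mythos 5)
Your proposal is correct and follows essentially the same route as the paper: the paper's appendix proof also splits into the three clipping branches, handles the middle range trivially, and verifies the boundary case $t \ge n$ (the other by symmetry) via exactly the geometric-series/memorylessness computation you describe, just written as a direct pointwise pmf calculation rather than through conditional overshoot laws and total probability. Your bookkeeping of independence of the fresh $\ExpNoise{e^{-\eps}}$ noise and the $n\ge 1$ edge case is fine and matches the intended argument.
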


Appendix~\ref{appendix: sec: unfolding clipped histogram} contains the detailed proof of the theorem. 
Here, we present a high-level overview of the proof's intuition: when restricting the two-directional discrete Laplace random variable~$\DiscreteLapNoise{e^{-\eps}}$ to a single direction, it reduces to a random variable that follows the geometric distribution. 
It is a well-known fact that for a geometric random variable, truncating it at a certain value and adding an independent copy of the random variable at the truncated value preserves its geometric distribution. 
This property is commonly referred to as the "memorylessness" property of the geometric distribution.

\subsection{Searching for Profiles} 
\label{sec: Searching for Profiles}

A natural attempt would be estimating the profile~$\profile$ with~$\boundedProfileOfNoisyHist\IntSet{0}{n} \doteq \paren{\boundedProfileOfNoisyHist[0], \ldots, \boundedProfileOfNoisyHist[n]}$, where~$
    \boundedProfileOfNoisyHist [t] = 
        \PAREN{1 / d} \cdot | \{\ell \in \DataDomain : \privHist[\ell] = t \} |,
    \forall t \in \Z. 
$ 
There are several issues with such an estimate. First, $\boundedProfileOfNoisyHist\IntSet{0}{n}$ is not an empirical profile, as its entries do not sum up to $1$. 
An even more significant concern is that such an estimate can incur an error of $\Omega(1)$.

\vspace{-3mm}
\paragraph{Example.} 
Consider the example where~$\hist[\ell] = 1, \forall \ell \in \DataDomain.$
Hence, $\profile[1] = 1$ and~$\profile[t] = 0$ for all~$t \neq 1$.
But, it can be shown that~$\mathbb{E}[\boundedProfileOfNoisyHist[1] ] = \frac{1 - e^{-\eps}}{1 + e^{-\eps}}$ and~$\mathbb{V}ar[\boundedProfileOfNoisyHist[1] ] \le \frac{1}{d} \cdot \frac{1 - e^{-\eps}}{1 + e^{-\eps}}.$
By Chebyshev's inequality,
when~$\eps \in \Theta(1)$, with constant probability, $\|\boundedProfileOfNoisyHist\IntSet{0}{n} - \profile \|_p \ge | \boundedProfileOfNoisyHist[1] - \profile[1] | \in \Omega(1),$ for~$p = 1, 2, \infty.$

To deepen our understanding of the relationship between $\boundedProfileOfNoisyHist$ and $\profile$, let us consider a fixed $t \in \IntSet{0}{n}$.
There are $\profile[t] \cdot d$ items in $\DataDomain$ such that $\hist[\ell] = t$. For each of these items, the random variable $\privHist[\ell] = \hist[\ell] + \DiscreteLapNoise{e^{-\eps}}$ is independent and identically distributed.
Our \emph{analysis} considers an algorithm, outlined in Algorithm~\ref{algo: private profile generator}, for generating $\boundedProfileOfNoisyHist$ directly from $\profile$.
We emphasize that our final algorithm is the combination of Algorithms~\ref{algo: fast inversion} and~\ref{algo: rounding}, and does not make use of Algorithm~\ref{algo: private profile generator}.

\vspace{-1mm}
\begin{algorithm}[!ht]
    \caption{Private Profile Generator~$\cA \PAREN{ \estProfile }$}
    \label{algo: private profile generator}
    \hspace{2mm}{\bfseries Input:~$\estProfile$}
    \begin{algorithmic}[1]
        \STATE Initialize an all zero profile~$\vec{v}$
        \STATE {\bf for} $t \in \IntSet{0}{n}$ {\bf do}
            \STATE $\quad$ 
            {\bf for} $j \leftarrow 1$ to~$\estProfile[t] \cdot d$ {\bf do}
                \STATE $\quad$ $\quad$ $I \leftarrow t + \DiscreteLapNoise{e^{-\eps}}$
                \STATE $\quad$ $\quad$ $\vec{v}[I] \leftarrow \vec{v}[I] + {1 / d}$
        \STATE \textbf{return} $\vec{v}$  \hfill\COMMENT{the perturbed profile}
    \end{algorithmic}
\end{algorithm}

To search for a good approximation~$\estProfile$ of~$\profile$, our strategy shifts towards searching for a~$\estProfile$ such that $\cA(\estProfile)$ (the vector returned by Algorithm~\ref{algo: private profile generator} given input~$\estProfile$) is close to~$\boundedProfileOfNoisyHist$.
The search is framed as an optimization problem.
Our goal is then to design efficient algorithm for solving the optimization problem, and to prove that the proximity between $\cA(\estProfile)$ and~$\boundedProfileOfNoisyHist$ indeed implies the proximity between~$\estProfile$ and~$\profile$.

\paragraph{Optimization Problem.}
Let $B \in \R^+$ be the parameter (as defined in Theorem~\ref{theorem: property of profile approximation algorithm}), and $\cE_{B}$ be the event that the all noises introduced by Algorithm~\ref{algo: private profile generator} are absolutely bounded by~$B$, and 
$\Lb \doteq -B$, $\Ub \doteq n + B$, 
$\boundedProfileOfNoisyHist \doteq \paren{\boundedProfileOfNoisyHist[\Lb], \ldots, \boundedProfileOfNoisyHist[\Ub]}.$
Given~$p = 1, 2$ of~$\infty$, the optimization problem we rely on is given by 
{\small
\begin{equation} 
\begin{aligned}
  \min_{\estProfile \in \R^{n + 1}}    && \quad \norm{ \E{\cA }{\cA \PAREN{\estProfile} \mid \cE_B } - \boundedProfileOfNoisyHist }_p&  \\
  \text{s.t.}           && \quad \sum_{t = 0}^n \estProfile[t]  = 1,& \\[1mm]
                        && \quad 0 \le \estProfile[t] \le 1,        &\quad \forall t \in \IntSet{0}{n}.
\end{aligned}
\tag{$\text{P}_1$}\label{opt-original}
\end{equation}
\vspace{-3mm}
}

To simplify the notation, when it is clear from the context, we write~$\E{\cA }{\cA \PAREN{\estProfile} \mid \cE_B }$ as~$\E{\cA \PAREN{\estProfile} \mid \cE_B }$.
In what follows, we provide motivation for this formulation.

{\it Conditioning on~$\cE_B.$}
Without the event~$\cE_B$, $\cA \PAREN{\estProfile}$ can have unbounded support, as the noises introduced by Algorithm~\ref{algo: private profile generator} are unbounded. 
Conditioned on event~$\cE_B$, only~$\cA \PAREN{\estProfile}[\Lb], \ldots, \cA \PAREN{\estProfile}[\Ub]$ process nonzero values. 
Therefore, $\E{\cA \PAREN{\estProfile} \mid \cE_B }$ can be viewed as a vector indexed from~$\Lb$ to~$\Ub$.
Further, by proper choice of~$B$, the event~$\overline{\cE_B}$ only occurs with sufficiently small probability.
Therefore, the event~$\cE_B$ encompasses the majority of cases.

{\it Taking the Expectation.}
Without the expectation, the error~$\| {\cA \PAREN{\estProfile} } - \boundedProfileOfNoisyHist \|_p$ is itself a random variable (even conditioned on~$\cE_B$).
The objective in (\ref{opt-original}) is a relaxation of the expected error, as it provides a lower bound for the latter. 
To see this, note that~$\norm{ \cdot }_p$ is convex for~$p \ge 1$, and according to Jensen's inequality, we have 
\begin{equation*} 
    \hspace{-1mm}
    \begin{array}{c}
        \norm{ \E{\cA}{\cA \PAREN{\estProfile} \mid \cE_B } - \boundedProfileOfNoisyHist }_p
            \le \E{\cA}{\norm{ {\cA \PAREN{\estProfile} } - \boundedProfileOfNoisyHist }_p \mid \cE_B }.
    \end{array}
\end{equation*}

{\it Matrix Form.}
We now reformulate the above in matrix form.
The primary advantage of this formulation lies in its potential for expressing the objective in special matrix form, facilitating the development of efficient optimization algorithms.
Let $\vec{c} \in \R^{n + 2B + 1}$ be a vector defined as follows:
\begin{itemize}[leftmargin=4.5mm, topsep=1pt, itemsep=1pt, partopsep=1pt, parsep=1pt]
    \item The first $B$ entries are set to $1, e^{\text{-}\eps}, \ldots, e^{\text{-}\eps \cdot B}$.
    \item The last $B$ entries are set to $e^{\text{-}\eps \cdot B}, e^{\text{-}\eps \cdot (B - 1)}, \ldots, e^{\text{-}\eps}$.
    \item All other entries are set to $0$.
\end{itemize}
Define 
$
    \ProbNorm 
        \doteq \frac{ 
                    1 + e^{ -\eps } - 2 \,(e^{ -\eps })^{ B + 1 }
                }{ 
                    1 - e^{ -\eps } 
                },
$
and
\begin{equation} 
    \label{equa: def of trans matrix}
    \begin{array}{c}
        \TransMatrix 
            \doteq \frac{1}{\ProbNorm} \, \CirculantMatrix{\vec{c}}. 
    \end{array}
\end{equation}
    
\vspace{-2mm}
To maintain consistency with~$\boundedProfileOfNoisyHist$, we assume that both rows and columns of~$\TransMatrix$ are indexed by~$\IntSet{\Lb}{\Ub}$.
The following lemma states that, we can express the objective in~\eqref{opt-original} in matrix form.

\begin{lemma}
    \label{lemma: expectation in matrix form}
    Let~$\cA(\cdot)$ be the Algorithm~\ref{algo: private profile generator},~$\TransMatrix$ be matrix defined as Equation~\eqref{equa: def of trans matrix}, and~$\estProfile = \PAREN{ \estProfile[0], \ldots, \estProfile[n] } \in \R_{\ge 0}^{n + 1}$ be a vector such that~$\sum_{t \in \IntSet{0}{n}} \estProfile[t] = 1.$
    Assume that we expand the vector $\estProfile$ to one in $\R^{n + 2B + 1}$ by padding $\estProfile[-B] = 
    \cdots = \estProfile[-1] = 0$ to the front and $\estProfile[n + 1] = 
    \cdots = \estProfile[n + B] = 0$ to the back of~$\estProfile$.
    The following claim holds:
    $
        \E{\cA \PAREN{\estProfile} \mid \cE_B } 
            = \TransMatrix \times \estProfile.
    $
\end{lemma}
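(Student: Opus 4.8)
The plan is to compute the conditional expectation $\E{\cA(\estProfile) \mid \cE_B}$ coordinate by coordinate directly from Algorithm~\ref{algo: private profile generator}, and then recognize the resulting vector as $\TransMatrix\estProfile$. First I would observe that for each source value $t\in\IntSet{0}{n}$ the inner loop performs $\estProfile[t]\cdot d$ independent draws of $\DiscreteLapNoise{e^{-\eps}}$, each adding $1/d$ to coordinate $t+(\text{noise})$ of the output; since $\cE_B$ is the intersection of the independent events ``the $j$-th noise has absolute value at most $B$'', conditioning on $\cE_B$ leaves all these draws independent, each now distributed as $\DiscreteLapNoise{e^{-\eps}}$ conditioned on lying in $\IntSet{-B}{B}$, and every incremented coordinate then lies in $\IntSet{\Lb}{\Ub}$. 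By linearity of expectation over the $\sum_{t}\estProfile[t]d$ increments,
\[
  \E{\cA(\estProfile)[s] \mid \cE_B}
  = \sum_{t=0}^{n}\estProfile[t]\cdot
     \Pr\!\bigl[\DiscreteLapNoise{e^{-\eps}} = s-t \,\big|\, |\DiscreteLapNoise{e^{-\eps}}|\le B\bigr],
  \qquad s\in\IntSet{\Lb}{\Ub},
\]
and $\cA(\estProfile)[s]=0$ for $s\notin\IntSet{\Lb}{\Ub}$ on $\cE_B$.

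Second I would evaluate the truncated probability. With $Z_B\doteq\sum_{|k|\le B}e^{-\eps|k|}$, a geometric-series computation gives $Z_B=\frac{1+e^{-\eps}-2(e^{-\eps})^{B+1}}{1-e^{-\eps}}=\ProbNorm$, and the common prefactor $\frac{1-e^{-\eps}}{1+e^{-\eps}}$ of the discrete-Laplace pmf cancels in the conditional probability, so $\Pr[\DiscreteLapNoise{e^{-\eps}}=k\mid|\DiscreteLapNoise{e^{-\eps}}|\le B]=e^{-\eps|k|}/\ProbNorm$ for $|k|\le B$. Substituting $k=s-t$ turns the display into
\[
  \E{\cA(\estProfile)[s] \mid \cE_B}
  = \frac{1}{\ProbNorm}\sum_{t=0}^{n}e^{-\eps|s-t|}\,\indicator{|s-t|\le B}\;\estProfile[t],
  \qquad s\in\IntSet{\Lb}{\Ub}.
\]

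Third I would match this against $\TransMatrix\estProfile=\frac{1}{\ProbNorm}\CirculantMatrix{\vec c}\,\estProfile$. With $p\doteq n+2B+1$ (the length of $\vec c$ and the size of $\IntSet{\Lb}{\Ub}$), the construction of $\vec c$ is precisely the even, truncated kernel $k\mapsto e^{-\eps|k|}\indicator{|k|\le B}$ in circulant layout: $c_0=1$, $c_k=c_{p-k}=e^{-\eps k}$ for $1\le k\le B$, and $c_k=0$ otherwise (in particular $\CirculantMatrix{\vec c}$ is symmetric, so the precise shift direction in the circulant is immaterial). Under the stated indexing of $\TransMatrix$ by $\IntSet{\Lb}{\Ub}$, row $s$ of $\CirculantMatrix{\vec c}$ is $\sigma^{\,s+B}(\vec c)$, so its $t$-th entry is $c_{(t-s)\bmod p}$. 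The only delicate point is that the circulant wrap-around must not manufacture spurious nonzeros: for $t\in\IntSet{0}{n}$ and $s\in\IntSet{\Lb}{\Ub}$ the offset obeys $|t-s|\le n+B<p-B$, and for offsets in this range one checks that $(t-s)\bmod p$ lands in the nonzero index set $\{0,\dots,B\}\cup\{p-B,\dots,p-1\}$ exactly when $|t-s|\le B$, in which case the value equals $e^{-\eps|s-t|}$ — this is exactly why $p$ is taken one larger than the maximal gap $\Ub-\Lb=n+2B$. Using the padding $\estProfile[t]=0$ for $t\notin\IntSet{0}{n}$ we obtain, for every $s\in\IntSet{\Lb}{\Ub}$,
\[
  (\TransMatrix\estProfile)[s]=\frac{1}{\ProbNorm}\sum_{t=-B}^{n+B}c_{(t-s)\bmod p}\,\estProfile[t]
  =\frac{1}{\ProbNorm}\sum_{t=0}^{n}e^{-\eps|s-t|}\,\indicator{|s-t|\le B}\;\estProfile[t]
  =\E{\cA(\estProfile)[s]\mid\cE_B},
\]
which is the asserted identity.

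The only routine calculation here is the geometric sum $Z_B=\ProbNorm$; the step I would treat most carefully is the index bookkeeping in the third paragraph, namely aligning the definition of $\vec c$ with the band $e^{-\eps|s-t|}$ under the right-shift convention for $\sigma$ and checking the inequality $|t-s|<p-B$ that rules out circulant wrap-around contamination.
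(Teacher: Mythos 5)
Your proof is correct and takes essentially the same route as the paper's: conditioning on $\cE_B$ turns each noise draw into a truncated discrete Laplace variable with pmf $e^{-\eps|k|}/\ProbNorm$ on $\IntSet{-B}{B}$, linearity of expectation yields the convolution formula, and that convolution is identified entry-by-entry with the rows of $\TransMatrix$. Your explicit verification that the circulant wrap-around creates no spurious nonzero terms (via $|t-s|\le n+B < n+2B+1-B$) is sound bookkeeping that the paper handles only implicitly through its modular-index shorthand.
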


Based on the above lemma, the optimization problem~\eqref{opt-original} is equivalent to the following one: 
{\small
    \begin{equation} 
    \hspace{-1mm}
    \begin{aligned}
      \min_{\estProfile \in \R^{n + 2B + 1}}    && \norm{ \TransMatrix \times \estProfile - \boundedProfileOfNoisyHist }_p&  \\
      \text{s.t.}           && \sum_{t = 0}^n \estProfile[t]  = 1,& \\[1mm]
                            &&  0 \le \estProfile[t] \le 1,        &\,\, \forall t \in \IntSet{0}{n}, 
                            \\[1mm]
                            &&  \estProfile[t] = 0,                &\,\, \forall t \in \IntSet{\Lb}{\Ub} \setminus \IntSet{0}{n}.
    \end{aligned}
    \tag{$\text{P}_2$}\label{opt-final}
    \end{equation}
    \vspace{-2mm}
}

\vspace{-2mm}
An important advantage lies in the potential efficiency gained from the circulant matrix~$\TransMatrix$, as detailed in Section~\ref{sec: Time Analysis}, enabling~$\TransMatrix^{-1} \vec{x}$ to be computed in~$O(n \log n)$ time for each vector~$\vec{x}$. 
This motivates the choice of computing~$\TransMatrix^{-1} \boundedProfileOfNoisyHist$ as the approximate profile~$\estProfile$, resulting in zero error in the objective function:
\begin{equation*}
    \begin{array}{c}
        \norm{
            \TransMatrix \times \TransMatrix^{-1} \times \boundedProfileOfNoisyHist
             - \boundedProfileOfNoisyHist
        }_p = 0.
    \end{array}
\end{equation*}
However, it's crucial to note that such a solution may potentially violate other constraints outlined in~\eqref{opt-final}.
This observation motivates the development of an algorithm for systematically solving~\eqref{opt-final}, which will be discussed in the next section.

\subsection{Optimization Algorithms}
\label{sec: optimization algorithm}

In this section, we present effective algorithms to address the optimization problem defined by~\eqref{opt-final}. 
Our approach consists of two steps: initially, we concentrate on a relaxed version of~\eqref{opt-final} and introduce an algorithm that optimally solves it. Subsequently, building upon the solution obtained in the first step, we introduce a rounding procedure designed to yield an approximate solution for~\eqref{opt-final}, satisfying all the specified constraints.
The proofs of all lemmas in this section are included in the 
\emph{Appendix~\ref{appendix: sec: optimization algorithm}.}

\vspace{-3mm}
\paragraph{Relaxation.}
There are two categories of constraints on $\estProfile$ in optimization problem~\eqref{opt-final}: its coordinates should sum up to $ 1 $; and each of its coordinates should be within a given range.
Optimizing~\eqref{opt-final} simultaneously with two constraints is non-trivial, so we initially focus on the first constraint, leading to optimization problem~\eqref{opt-relaxed}:
\begin{equation} 
    \begin{array}{ccccc}
        &\min           &   \norm{ \TransMatrix \times \estProfile - \boundedProfileOfNoisyHist }_p 
        &\text{s.t.}    &   \sum_{t = 0}^n \estProfile[t] = 1.          
    \end{array}
    \tag{$\text{P}_3$}\label{opt-relaxed}
\end{equation}
The optimization problem~\eqref{opt-relaxed} has a simple geometric interpretation.
Since $\TransMatrix$ is invertible (as will be shown in Lemma~\ref{lemma: property of transmatrix}), the function $\Vert \cdot \Vert_{\TransMatrix, p} : \R^{\IntSet{0}{n}} \rightarrow \R$, defined by 
\vspace{-1mm}
$$
\begin{array}{c}
    \Vert \vec{v} \Vert_{\TransMatrix, p} \doteq \Vert \TransMatrix \times \vec{v} \Vert_p, 
    \quad \forall \vec{v} \in \R^{\IntSet{0}{n}}.
\end{array}
\vspace{-1mm}
$$
constitutes a norm.
The objective of~\eqref{opt-relaxed} is equivalent to 
$$
    \begin{array}{c}
        \Vert \TransMatrix \times \paren{ \estProfile - \TransMatrix^{-1} \times \boundedProfileOfNoisyHist }\Vert_p
    \end{array}
$$
Hence,~\eqref{opt-relaxed} seeks to find the point nearest to $\TransMatrix^{-1} \times \boundedProfileOfNoisyHist$ (measured by the distance induced by the norm $\Vert \cdot \Vert_{\TransMatrix, p}$) within the hyperplane specified by $\sum_{t = 0}^n \estProfile[t] = 1$.

This inspires Algorithm~\ref{algo: fast inversion} for solving~\eqref{opt-relaxed}. 
First, it computes a vector~$\vec{u} = {\TransMatrix}^{-1} \times { \boundedProfileOfNoisyHist }$.
Let~$\vecTargetSupport$ be a binary vector such that~$\vecTargetSupport[t] = 1$ only when~$t \in \IntSet{0}{n}$.
If~$\tinner{\vecTargetSupport}{\vec{u}} \neq 1$, the algorithm rectifies this by adjusting~$\vec{u}$ along the direction~$\TransMatrix^{-1} \times \vec{a}$ (by an appropriate amount), where~$\vec{a}$ is the unit vector (in terms of the norm~$\norm{\cdot}_p$) which maximizes~$\tinner{\vecTargetSupport}{\TransMatrix^{-1} \times \vec{a}}$.
In essence, adjusting~$\vec{u}$ along~$\TransMatrix^{-1} \times \vec{a}$ is the most cost-effective way to correct the violation~$\tinner{\vecTargetSupport}{\vec{u}} \neq 1$.
Formally, Lemma~\ref{lemma: correctness of algo fast inversion} holds.

\begin{algorithm}[!t]
    \caption{Fast Inversion~$\algoFstInv$}
    \label{algo: fast inversion}
    
    \hspace{2mm}{\bfseries Input:
    $\privHist$
    and~$p \in\{ 1, 2, \infty\}$}
    
    \begin{algorithmic}[1]

        \STATE Compute~$\boundedProfileOfNoisyHist$ based on~$\privHist$
        \STATE $\vec{u} \leftarrow {\TransMatrix}^{-1} \times { \boundedProfileOfNoisyHist }$
        
        \vspace{-3mm}
        \STATE 
        Let~$\vecTargetSupport$ be a vector, s.t., 
        $
            \vecTargetSupport[t] = \begin{cases}
                1, & \forall t \in \IntSet{0}{n} \\
                0, & \text{ otherwise } 
            \end{cases}
        $

        \STATE Let $\vec{a} \leftarrow \arg\max_{\norm{\vec{a}'}_p = 1} 
            \langle{\vecTargetSupport}{\TransMatrix^{-1} \times \vec{a}'}\rangle$
        \STATE $\estProfile \leftarrow \vec{u} -  
                \frac{
                    { \inner{\vecTargetSupport}{\vec{u}} - 1 }
                }{ 
                    \inner{\vecTargetSupport}{\TransMatrix^{-1} \times \vec{a}}
                }  
                \cdot 
                \TransMatrix^{-1} \times \vec{a}
        $ 
        \STATE \textbf{return} $\estProfile$ \hfill\COMMENT{the estimated profile}
    \end{algorithmic}
\end{algorithm}

\begin{lemma}
    \label{lemma: correctness of algo fast inversion}
    Algorithm~\ref{algo: fast inversion} returns an optimal solution for the optimization problem~\eqref{opt-relaxed}. 
\end{lemma}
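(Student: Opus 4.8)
The plan is to verify two things about the output $\estProfile$ of Algorithm~\ref{algo: fast inversion}: that it is feasible for \eqref{opt-relaxed}, i.e. $\inner{\vecTargetSupport}{\estProfile} = 1$, and that it is optimal, i.e. no feasible point achieves a smaller value of $\norm{\TransMatrix \times \estProfile - \boundedProfileOfNoisyHist}_p$. Feasibility is a direct computation: since $\estProfile = \vec{u} - \lambda\, \TransMatrix^{-1}\vec{a}$ with $\lambda = (\inner{\vecTargetSupport}{\vec{u}}-1)/\inner{\vecTargetSupport}{\TransMatrix^{-1}\vec{a}}$, linearity of the inner product gives $\inner{\vecTargetSupport}{\estProfile} = \inner{\vecTargetSupport}{\vec{u}} - \lambda \inner{\vecTargetSupport}{\TransMatrix^{-1}\vec{a}} = 1$, provided $\inner{\vecTargetSupport}{\TransMatrix^{-1}\vec{a}} \neq 0$. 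I would note that the maximizer $\vec{a}$ in line~4 has strictly positive value of $\inner{\vecTargetSupport}{\TransMatrix^{-1}\vec{a}'}$ for some unit $\vec{a}'$ (because $\TransMatrix^{-1}$ is invertible by Lemma~\ref{lemma: property of transmatrix}, so $\TransMatrix^{-\top}\vecTargetSupport \neq 0$), hence the denominator is nonzero and in fact equals $\norm{\TransMatrix^{-\top}\vecTargetSupport}_q$ where $q$ is the Hölder conjugate of $p$.

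For optimality I would pass to the geometric reformulation already set up in the text. Writing $\vec{w} \doteq \TransMatrix^{-1}\boundedProfileOfNoisyHist = \vec{u}$, the objective equals $\norm{\estProfile - \vec{w}}_{\TransMatrix,p}$, so \eqref{opt-relaxed} asks for the point on the affine hyperplane $H = \{\estProfile : \inner{\vecTargetSupport}{\estProfile} = 1\}$ closest to $\vec{w}$ in the norm $\norm{\cdot}_{\TransMatrix,p}$. The key step is a change of variables: for any $\vec{v}$, $\norm{\vec{v}}_{\TransMatrix,p} = \norm{\TransMatrix\vec{v}}_p$, and under the substitution $\vec{z} = \TransMatrix\vec{v}$ the hyperplane $H$ (translated to pass through the origin, i.e. the constraint $\inner{\vecTargetSupport}{\estProfile - \vec{w}} = 1 - \inner{\vecTargetSupport}{\vec{w}}$) becomes another hyperplane $\{\vec{z} : \inner{\TransMatrix^{-\top}\vecTargetSupport}{\vec{z}} = c\}$ with $c = 1 - \inner{\vecTargetSupport}{\vec{w}}$. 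So the problem reduces to: find the minimum $\ell_p$-norm point on a hyperplane with normal vector $\vec{g} \doteq \TransMatrix^{-\top}\vecTargetSupport$ and offset $c$. This is a textbook fact — the minimum $\ell_p$ distance from the origin to $\{\vec{z} : \inner{\vec{g}}{\vec{z}} = c\}$ is $|c|/\norm{\vec{g}}_q$, attained at $\vec{z}^\star = c\,\vec{a}/\inner{\vec{g}}{\vec{a}}$ where $\vec{a}$ is the unit $\ell_p$-vector maximizing $\inner{\vec{g}}{\vec{a}}$ (i.e. achieving the Hölder bound $\inner{\vec{g}}{\vec{a}} = \norm{\vec{g}}_q$). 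Translating back through $\vec{v} = \TransMatrix^{-1}\vec{z}$ and $\estProfile = \vec{v} + \vec{w}$ yields exactly the formula in line~5 of the algorithm, since $\inner{\vec{g}}{\vec{a}} = \inner{\TransMatrix^{-\top}\vecTargetSupport}{\vec{a}} = \inner{\vecTargetSupport}{\TransMatrix^{-1}\vec{a}}$ and $c = 1 - \inner{\vecTargetSupport}{\vec{u}}$.

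I expect the main obstacle to be stating and justifying the "textbook fact" cleanly for all three values $p \in \{1, 2, \infty\}$ simultaneously, since the maximizer $\vec{a}$ need not be unique for $p = 1$ or $p = \infty$ (any Hölder-tight vector works, and the algorithm just picks one). The clean way to handle this is to prove the lower bound and the matching construction separately: for the lower bound, any feasible $\estProfile$ satisfies $\inner{\vec{g}}{\TransMatrix(\estProfile - \vec{w})} = c$, so by Hölder $|c| \le \norm{\vec{g}}_q \norm{\TransMatrix(\estProfile-\vec{w})}_p$, giving $\norm{\TransMatrix\estProfile - \boundedProfileOfNoisyHist}_p \ge |c|/\norm{\vec{g}}_q$; for the upper bound, one checks the algorithm's output attains this, using that $\vec{a}$ satisfies $\inner{\vec{g}}{\vec{a}} = \norm{\vec{g}}_q$ and $\norm{\vec{a}}_p = 1$ (which is the defining property of the $\argmax$ in line~4, up to verifying the max equals $\norm{\vec{g}}_q$ — itself just the Hölder duality $\norm{\vec{g}}_q = \max_{\norm{\vec{a}}_p = 1}\inner{\vec{g}}{\vec{a}}$). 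A minor point to dispatch along the way is the degenerate case $\inner{\vecTargetSupport}{\vec{u}} = 1$, where $\lambda = 0$, $\estProfile = \vec{u}$, and the objective is already $0$; this is consistent with the formula and needs only a one-line remark.
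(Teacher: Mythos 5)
Your proof is correct, and it rests on the same geometric picture the paper uses: after the substitution $\vec{z} = \TransMatrix \times (\estProfile - \vec{u})$ the task is to find the minimum-$\ell_p$-norm point on a hyperplane, and the direction $\TransMatrix^{-1} \times \vec{a}$ from line~4 is the cheapest correction direction. The difference is in how optimality is certified. You make the dual-norm structure explicit: with $\vec{g} \doteq (\TransMatrix^{-1})^\top \vecTargetSupport$ and $c = 1 - \inner{\vecTargetSupport}{\vec{u}}$, H\"older gives the uniform lower bound $\card{c} / \norm{\vec{g}}_q$ on the objective for every feasible point, and you check the algorithm's output attains it because $\inner{\vec{g}}{\vec{a}} = \norm{\vec{g}}_q$ and $\norm{\vec{a}}_p = 1$. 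The paper never names the dual norm: it compares the output against an arbitrary feasible $\vec{v}$ directly, expressing $1 - \inner{\vecTargetSupport}{\vec{u}}$ once through the $\ell_p$-normalized direction of $\TransMatrix \times (\vec{v} - \vec{u})$ and once through $\vec{a}$, and then invokes the maximality of $\vec{a}$. The two arguments are equivalent in substance (the maximality of $\vec{a}$ \emph{is} the dual-norm characterization), but your organization buys a couple of small things: the optimal value comes out explicitly, and working with $\card{c}$ avoids the sign bookkeeping in the paper's normalization step (its identity "normalized residual $= \vec{a}$" is only literally correct for one sign of $1 - \inner{\vecTargetSupport}{\vec{u}}$), while the paper's version stays self-contained and only uses the explicit maximizers for $p \in \{1,2,\infty\}$ later, in the running-time proof. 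Your explicit treatment of the nonzero denominator $\inner{\vecTargetSupport}{\TransMatrix^{-1} \times \vec{a}}$ (via invertibility of $\TransMatrix$) and of the degenerate case $\inner{\vecTargetSupport}{\vec{u}} = 1$ covers points the paper passes over silently or handles only implicitly.
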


\paragraph{Rounding.}
Given a solution~$\estProfile$ returned by Algorithm~\ref{algo: fast inversion}, Algorithm~\ref{algo: rounding} post-processes it to ensure satisfaction of all constraints in~\eqref{opt-final}.

Algorithm~\ref{algo: rounding} unfolds in three phases: Phase 1 (Line~\ref{line: round to zero}), wherein the algorithm sets~$\estProfile[t]$ to~$0$ for each~$t \in \IntSet{\Lb}{\Ub} \setminus \IntSet{0}{n}$; Phase 2 (Line~\ref{line: large set} to Line~\ref{line: round from below}), where the algorithm clips~$\estProfile[t]$ to the range of~$[0, 1]$; and Phase 3 (Line~\ref{line: compute adjust amount} to Line~\ref{line: end of fix}), during which the algorithm adjusts~$\estProfile[t]$ to ensure their sum equals~$1$.
It is essential to observe that Phase 3 will not generate any coordinate outside the range~$[0, 1]$. 
Moreover, in Phase 3, it is adequate to solely consider decreasing the coordinate values of~$\estProfile$,
which will be formally explained in the proof of Lemma~\ref{lemma : correctness of algo rounding}.

\begin{algorithm}[!ht]
    \caption{Rounding~$\algoRounding$}
    \label{algo: rounding}
    
    \hspace{2mm}{\bfseries Input:~$\estProfile$}
    \begin{algorithmic}[1]
        
        \STATE $\estProfile[t] \leftarrow 0, \forall t \in \IntSet{\Lb}{\Ub} \setminus \IntSet{0}{n}$ 
        \alglinelabel{line: round to zero}
        
        \STATE $\largeSet \leftarrow \set{t \in \IntSet{0}{n} : \estProfile[t] > 1}$
        \alglinelabel{line: large set}
        \STATE $\smallSet \leftarrow \set{t \in \IntSet{0}{n} : \estProfile[t] < 0}$

        \STATE $\surplus \leftarrow \sum_{t \in \largeSet} \PAREN{1 - \estProfile[t]}$
        \STATE $\deficit \leftarrow \sum_{t \in \smallSet} \PAREN{0 - \estProfile[t]}$

        \STATE $\estProfile[t] \leftarrow 1$, $\forall t \in \largeSet$    
        \alglinelabel{line: round from above}
        \STATE $\estProfile[t] \leftarrow 0$, $\forall t \in \smallSet$
        \alglinelabel{line: round from below}
        \STATE $s \leftarrow \surplus + \deficit$. \hfill {$\triangleright$ we must have~$s \ge 0$}
        \alglinelabel{line: compute adjust amount}
        \IF{$s > 0$} 
            \STATE \hspace{-3mm} Let~$\tau \ge 0$ be the solution for~$\sum_{t = 0}^n \min \set{ \tau, \estProfile[t] } = s$
            \alglinelabel{line: search for the threshold}
            \STATE \hspace{-3mm} $\estProfile[t] \leftarrow \estProfile[t] - \min \set{ \tau, \estProfile[t] }$, $\forall t \in \IntSet{0}{n}$  
            \alglinelabel{line: adjust based on computed threshold}
        \ENDIF
        \alglinelabel{line: end of fix}

        \STATE \textbf{return} $\estProfile$ \hfill \COMMENT{the rounded profile}
    \end{algorithmic}
\end{algorithm}

\begin{lemma}
    \label{lemma : correctness of algo rounding}
    The vector returned by Algorithm~\ref{algo: rounding} satisfies all constraints in the optimization problem~\eqref{opt-final}. 
\end{lemma}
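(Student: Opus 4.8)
The plan is to track the three families of constraints of~\eqref{opt-final} separately — the support constraint $\estProfile[t]=0$ for $t\in\IntSet{\Lb}{\Ub}\setminus\IntSet{0}{n}$, the box constraint $0\le\estProfile[t]\le 1$ on $\IntSet{0}{n}$, and the normalization $\sum_{t=0}^n\estProfile[t]=1$ — and check that each of the three phases of Algorithm~\ref{algo: rounding} either establishes a constraint or preserves those already established. The starting point is that the input $\estProfile$ is an optimal (hence feasible) solution of~\eqref{opt-relaxed} by Lemma~\ref{lemma: correctness of algo fast inversion}, so it already satisfies $\sum_{t=0}^n\estProfile[t]=1$.

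First I would observe that Phase~1 sets every out-of-range coordinate to $0$, establishing the support constraint, and that Phases~2 and~3 only ever modify coordinates indexed by $\IntSet{0}{n}$, so the support constraint survives to the end; Phase~1 also does not change $\sum_{t=0}^n\estProfile[t]$, which therefore remains $1$. For Phase~2, clipping each $\estProfile[t]$ into $[0,1]$ makes the box constraint hold immediately, and the only thing to do is bookkeeping the sum. Introducing the complementary index set $M\doteq\set{t\in\IntSet{0}{n}: 0\le\estProfile[t]\le 1}$, one sees the clipping changes the coordinate $\estProfile[t]$ by $1-\estProfile[t]$ for $t\in\largeSet$ and by $-\estProfile[t]$ for $t\in\smallSet$, so the total changes by exactly $\surplus+\deficit = s$; after Phase~2 we thus have $\sum_{t=0}^n\estProfile[t] = 1+s$.

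The crux is the inequality $s\ge 0$ claimed in the comment on Line~\ref{line: compute adjust amount}: this is what makes it legitimate for Phase~3 to restore normalization by \emph{decreasing} coordinates only. Using $\sum_{t\in\IntSet{0}{n}}\estProfile[t]=1$ (valid at the start of the phase) to eliminate $\sum_{t\in\largeSet}\estProfile[t]+\sum_{t\in\smallSet}\estProfile[t]$ yields the identity $s = |\largeSet| + \sum_{t\in M}\estProfile[t] - 1$. If $\largeSet\neq\emptyset$ then $|\largeSet|\ge 1$ and $\sum_{t\in M}\estProfile[t]\ge 0$, so $s\ge 0$; if $\largeSet=\emptyset$ then $\surplus=0$ and $s=\deficit=\sum_{t\in\smallSet}(-\estProfile[t])\ge 0$. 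Either way $s\ge 0$, and this short case split is the one genuinely non-routine step of the proof.

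Finally I would handle Phase~3. If $s=0$ the phase is vacuous and $\sum_{t=0}^n\estProfile[t]=1$ already. If $s>0$, the function $\tau\mapsto\sum_{t=0}^n\min\set{\tau,\estProfile[t]}$ is continuous and nondecreasing, equals $0$ at $\tau=0$ (all coordinates are now nonnegative), and equals $\sum_{t=0}^n\estProfile[t]=1+s$ once $\tau\ge 1$ (all coordinates are now at most $1$); since $0\le s\le 1+s$, the intermediate value theorem gives a threshold $\tau\in(0,1]$ with $\sum_{t=0}^n\min\set{\tau,\estProfile[t]}=s$, which justifies Line~\ref{line: search for the threshold}. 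Subtracting $\min\set{\tau,\estProfile[t]}\in[0,\estProfile[t]]$ from each coordinate keeps it in $[0,\estProfile[t]]\subseteq[0,1]$, preserving the box constraint, and decreases the total by exactly $s$, bringing $\sum_{t=0}^n\estProfile[t]$ back to $1$. Combining the three phases gives all constraints of~\eqref{opt-final}. The only delicate point is the sign of $s$; the remainder is direct bookkeeping and a standard continuity argument.
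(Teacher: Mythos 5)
Your proposal is correct and follows essentially the same route as the paper's proof: establish the support and box constraints in Phases 1--2, show $s=\surplus+\deficit\ge 0$ by the case split on whether $\largeSet$ is empty, and invoke the intermediate value theorem to obtain the threshold $\tau$ so that Phase~3 restores $\sum_{t=0}^n\estProfile[t]=1$ without leaving $[0,1]$. Your algebraic identity $s=|\largeSet|-1+\sum_{t\in M}\estProfile[t]$ is just a restatement of the paper's observation that after clipping the sum is at least $1$ whenever $\largeSet\neq\emptyset$, so there is no substantive difference.
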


\subsection{Time Analysis} 
\label{sec: Time Analysis}

In this section, we study the running time of both Algorithm~\ref{algo: fast inversion} (Theorem~\ref{theorem: running time of algo fast inversion}) and Algorithm~\ref{algo: rounding} (Theorem~\ref{theorem: running time of algo rounding}).

\begin{theorem}
    \label{theorem: running time of algo fast inversion}
    Algorithm~\ref{algo: fast inversion} runs in $O(d + n \log n)$ time.
\end{theorem}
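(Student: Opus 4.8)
\textbf{Proof proposal for Theorem~\ref{theorem: running time of algo fast inversion}.}

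The plan is to account for the running time of Algorithm~\ref{algo: fast inversion} line by line, showing that each step costs either $O(d)$ or $O(n \log n)$.

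First I would handle Line~1: computing $\boundedProfileOfNoisyHist$ from $\privHist$ means scanning the $d$ entries of the (unfolded) private histogram and, for each, incrementing a bucket counter indexed by its value in $\IntSet{\Lb}{\Ub}$. The index range has size $n + 2B + 1$, and by the assumption $n \ge B$ this is $O(n)$, so allocating and populating these buckets costs $O(d + n)$. Next, Line~2 requires computing $\vec{u} = \TransMatrix^{-1} \times \boundedProfileOfNoisyHist$. Here the key is that $\TransMatrix = \frac{1}{\ProbNorm}\CirculantMatrix{\vec{c}}$ is a circulant matrix of dimension $p = n + 2B + 1 = O(n)$. By Fact~\ref{fact: eigenvalues and vectors of circulant matrix} its eigendecomposition is $\TransMatrix = \DFTMatrix \, \diag{\eigenvalue_0, \ldots, \eigenvalue_{p-1}} \, \DFTMatrix^{-1}$ (up to the scalar), so $\TransMatrix^{-1} = \DFTMatrix \, \diag{1/\eigenvalue_0, \ldots, 1/\eigenvalue_{p-1}} \, \DFTMatrix^{-1}$. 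The eigenvalues $\eigenvalue_t$ are themselves the DFT of $\vec{c}$ scaled by $1/\ProbNorm$, so all $p$ of them can be computed in $O(p \log p)$ time by one FFT (Fact~\ref{fact: fast fourier transform}); applying $\DFTMatrix^{-1}$, then the diagonal scaling, then $\DFTMatrix$ to $\boundedProfileOfNoisyHist$ is two more FFTs plus $O(p)$ work. Hence Line~2 is $O(n \log n)$, provided all eigenvalues are nonzero --- which I would cite from Lemma~\ref{lemma: property of transmatrix} (the invertibility claim referenced in the text).

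For Lines~3--5 I would argue as follows. Line~3 builds $\vecTargetSupport$, an explicit $O(n)$-dimensional $0/1$ vector, in $O(n)$ time. Line~4 asks for $\vec{a} = \arg\max_{\norm{\vec{a}'}_p = 1}\tinner{\vecTargetSupport}{\TransMatrix^{-1}\vec{a}'}$; by duality this maximum is $\norm{(\TransMatrix^{-1})^\top \vecTargetSupport}_{q}$ where $q$ is the Hölder conjugate of $p$, and the optimal $\vec{a}$ has a closed form in terms of the vector $\vec{w} \doteq (\TransMatrix^{-1})^\top \vecTargetSupport = (\TransMatrix^\top)^{-1}\vecTargetSupport$. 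Since $\TransMatrix^\top$ is again circulant (the transpose of a circulant matrix is circulant), $\vec{w}$ is computed by another $O(n \log n)$ FFT-based solve. Given $\vec{w}$, the maximizer $\vec{a}$ for each of $p \in \{1, 2, \infty\}$ is obtained by an $O(n)$-time formula (for $p=\infty$ take signs of $\vec{w}$; for $p=1$ put all mass on a coordinate of maximum $|\vec{w}[t]|$; for $p=2$ normalize $\vec{w}$). Finally Line~5 computes $\TransMatrix^{-1}\vec{a}$ with one more FFT-based solve in $O(n \log n)$, the two inner products $\tinner{\vecTargetSupport}{\vec{u}}$ and $\tinner{\vecTargetSupport}{\TransMatrix^{-1}\vec{a}}$ in $O(n)$, and then a scalar-times-vector subtraction in $O(n)$. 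Summing over all lines gives $O(d + n\log n)$.

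The main obstacle I anticipate is Line~4: one must be careful that the $\arg\max$ over the unit $\ell_p$-ball really does reduce to a single FFT plus $O(n)$ post-processing for each of the three norms, rather than to an optimization that is only polynomial-time. The clean way to do this is to note that $\tinner{\vecTargetSupport}{\TransMatrix^{-1}\vec{a}'} = \tinner{(\TransMatrix^{-1})^\top\vecTargetSupport}{\vec{a}'}$ is linear in $\vec{a}'$, so maximizing it over $\norm{\vec{a}'}_p = 1$ is exactly the definition of the dual norm $\norm{\cdot}_q$ evaluated at the fixed vector $(\TransMatrix^{-1})^\top\vecTargetSupport$, whose maximizer is explicit; the only nontrivial cost is forming that fixed vector, which is the circulant solve. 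A secondary point worth stating explicitly is that $p = n + 2B + 1$ must be $O(n)$: since $B \le n$ by hypothesis, we have $p \le 3n + 1$, so $O(p \log p) = O(n \log n)$; and the $O(d)$ term is present only because of the initial pass over the $d$ histogram entries, which dominates when $d \gg n \log n$.
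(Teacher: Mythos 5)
Your proposal is correct and follows essentially the same route as the paper: an $O(d)$ pass to build $\boundedProfileOfNoisyHist$, FFT-based circulant solves (via the eigendecomposition of Lemma~\ref{lemma: property of transmatrix}) for every multiplication by $\TransMatrix^{-1}$, and the observation that the Line~4 maximization reduces to a closed-form maximizer of a linear functional over the unit $\ell_p$-ball once the vector $\PAREN{\vecTargetSupport}^\top \TransMatrix^{-1}$ (your $(\TransMatrix^\top)^{-1}\vecTargetSupport$) has been computed by one more $O(n\log n)$ solve. Your explicit dual-norm framing and the remark that $n + 2B + 1 = O(n)$ under the assumption $n \ge B$ are just slightly more detailed versions of steps the paper states directly.
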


The proof of Theorem~\ref{theorem: running time of algo fast inversion} relies heavily on the properties of~$\TransMatrix$ as a circulant matrix.

\begin{lemma}
    \label{lemma: property of transmatrix}
    The eigenvalues~$\eigenvalue_t, t = -B, \ldots, n + B$ of $\TransMatrix$ can be computed in~$O(n)$ time.
    Further, $\TransMatrix$ is invertible and 
    \begin{equation*} 
        \begin{array}{c}
            \TransMatrix^{-1}
                = \DFTMatrix \times 
                \diag{\eigenvalue_{-B}^{-1}, \eigenvalue_{-B + 1}^{-1}, \ldots, \eigenvalue_{n + B}^{-1} }
                \times \DFTMatrix^{-1},
        \end{array}
    \end{equation*}
    where~$\diag{\eigenvalue_{-B}^{-1}, \eigenvalue_{-B + 1}^{-1}, \ldots, \eigenvalue_{n + B}^{-1} }$ is a diagonal matrix, and~$\DFTMatrix \doteq \bracket{ \vec{v}_{-B} \cdots \vec{v}_{n + B} }$ is the matrix comprising column vectors
    $
        \vec{v}_t = \frac{1}{\sqrt{n + 2B + 1}} \cdot \begin{bmatrix}
            1   ,  (w_t)^1   ,  \cdots  , (w_t)^{n + 2B}
        \end{bmatrix}^\top, \,
    $
    $   w_t \doteq
            e^{- \frac{2 \pi \cdot \PAREN{t + B} \cdot i}{n + 2B + 1} },
    $
    for~$t = -B, \ldots, n + B$.
\end{lemma}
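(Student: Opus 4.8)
The plan is to invoke Fact~\ref{fact: eigenvalues and vectors of circulant matrix} directly, after reindexing. Recall $\TransMatrix = \frac{1}{\ProbNorm}\CirculantMatrix{\vec{c}}$ where $\vec{c} \in \R^{n+2B+1}$ has its first $B$ entries equal to $1, e^{-\eps}, \dots, e^{-\eps B}$, its last $B$ entries equal to $e^{-\eps B}, \dots, e^{-\eps}$, and zeros elsewhere. Set $p = n + 2B + 1$. Fact~\ref{fact: eigenvalues and vectors of circulant matrix} gives that $\CirculantMatrix{\vec{c}}$ is diagonalized by the Fourier basis with eigenvalue $\sum_{j=0}^{p-1} c_j w_t^j$ for $w_t = e^{-2\pi t i / p}$, $t = 0, \dots, p-1$. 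I would relabel the index set $\{0, \dots, p-1\}$ as $\{-B, \dots, n+B\}$ via $t \mapsto t + B$, which is exactly the reindexing already baked into the statement (note the formula $w_t = e^{-2\pi (t+B) i/p}$ there), and then plug $\vec c$ into the eigenvalue formula.

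The key computation is to evaluate the eigenvalue in closed form. Using the structure of $\vec{c}$: the contribution of the first $B$ entries is $\sum_{k=0}^{B} e^{-\eps k} w^k$ minus the $k=0$ duplicate bookkeeping (I'd be careful about whether the entry at index $0$ is counted once), and the contribution of the last $B$ entries, at indices $p-B, \dots, p-1$, is $\sum_{k=1}^{B} e^{-\eps k} w^{-k}$ since $w^{p - k} = w^{-k}$. Summing, one gets a symmetric expression of the form $1 + \sum_{k=1}^{B} e^{-\eps k}(w^k + w^{-k}) = 1 + 2\sum_{k=1}^{B} e^{-\eps k}\cos(2\pi (t+B) k / p)$, and hence a real eigenvalue $\eigenvalue_t = \frac{1}{\ProbNorm}\big(1 + 2\sum_{k=1}^B e^{-\eps k}\cos(\tfrac{2\pi(t+B)k}{p})\big)$. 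This is a partial geometric sum, so it has a rational closed form in $w_t$; evaluating it for all $t$ naively costs $O(n)$ per eigenvalue, but since the vector of all eigenvalues is precisely $\DFTMatrix^{-1}$ (up to scaling) applied to $\vec{c}$ — or can be read off via one FFT — all of them together cost $O(p \log p)$; however, the claim is only $O(n)$, so I would instead use the closed-form geometric-sum expression, which gives each $\eigenvalue_t$ in $O(1)$ once $w_t$ is known, for $O(n)$ total. I'd double-check the $O(n)$ versus $O(n\log n)$ point against how the lemma is used downstream.

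For invertibility I must show $\eigenvalue_t \neq 0$ for every $t$. Here I'd argue via the probabilistic meaning: $\frac{1}{\ProbNorm}\vec c$ is (essentially) the conditional pmf of discrete Laplace noise restricted to $\{-B,\dots,B\}$, a strictly positive even sequence, so its discrete Fourier transform — a cosine polynomial with positive coefficients summing to something $\le 1$ — could in principle vanish, so the cleaner route is to lower-bound $|\eigenvalue_t|$ directly from the closed form: $1 + 2\sum_{k=1}^{B} e^{-\eps k}\cos\theta_k \geq 1 - 2\sum_{k=1}^{\infty} e^{-\eps k} = 1 - \frac{2 e^{-\eps}}{1 - e^{-\eps}} = \frac{1 - 3e^{-\eps}}{1-e^{-\eps}}$, which is positive only when $e^{\eps} > 3$; since that is not guaranteed, the crude bound fails and I expect the \textbf{main obstacle} to be establishing the nonvanishing (and a usable lower bound on $|\eigenvalue_t|$) in the regime of small $\eps$. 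The fix I would pursue: exploit that $\vec c / \ProbNorm$ is a truncation of a true discrete Laplace pmf whose exact Fourier transform is the known characteristic function $\frac{\text{(something)}}{(1 + e^{-2\eps}) - 2 e^{-\eps}\cos\theta}$, which is bounded below away from $0$ for all $\theta$; the truncation perturbs this by a tail of mass $O(e^{-\eps B})$, which by choice of $B$ (as in Theorem~\ref{theorem: property of profile approximation algorithm}, $B \geq \frac{1}{\eps}\ln\frac{8 e^\eps}{e^{2\eps}-1}$) is small enough that the perturbed transform stays bounded below by a constant multiple of $\frac{e^\eps - 1}{e^\eps + 1}$. Finally, invertibility of $\CirculantMatrix{\vec c}$ plus Fact~\ref{fact: eigenvalues and vectors of circulant matrix} gives the eigendecomposition $\TransMatrix^{-1} = \DFTMatrix \diag{\eigenvalue_t^{-1}} \DFTMatrix^{-1}$ verbatim, completing the proof.
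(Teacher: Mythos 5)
Your route is the same as the paper's: reindex so Fact~\ref{fact: eigenvalues and vectors of circulant matrix} applies, evaluate each eigenvalue as a closed-form geometric sum (so $O(1)$ per eigenvalue and $O(n)$ in total), and then read off $\TransMatrix^{-1} = \DFTMatrix \times \diag{\eigenvalue_{-B}^{-1}, \ldots, \eigenvalue_{n+B}^{-1}} \times \DFTMatrix^{-1}$ once the eigenvalues are nonzero. Your closed form $\eigenvalue_t = \frac{1}{\ProbNorm}\big(1 + 2\sum_{k=1}^{B} e^{-\eps k}\cos\frac{2\pi(t+B)k}{n+2B+1}\big)$ is the one consistent with the first row $\TransMatrix[-B,\cdot]$; the paper's appendix computation carries an extra unit-modulus factor $w_t^B$, a discrepancy that is immaterial for $|\eigenvalue_t|$ and for all downstream norm bounds.

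The one piece you leave unfinished is the non-vanishing of the $\eigenvalue_t$, and your fallback, as stated, does not quite close: bounding the truncation tail uniformly by $2e^{-\eps(B+1)}/(1-e^{-\eps})$ gives, under the assumption on $B$, roughly $(1+e^{-\eps})/4 \approx 1/2$, which swamps the minimum $\frac{1-e^{-\eps}}{1+e^{-\eps}} \approx \eps/2$ of the untruncated characteristic function when $\eps$ is small --- the same failure mode as your first crude bound. The comparison does work if you keep the tail bound pointwise in $\theta$: with $S_\infty(\theta) = \frac{1-e^{-2\eps}}{|1-e^{-\eps}e^{i\theta}|^2}$ and $\big|\sum_{k>B} e^{-\eps k}(e^{ik\theta}+e^{-ik\theta})\big| \le \frac{2e^{-\eps(B+1)}}{|1-e^{-\eps}e^{i\theta}|}$, one gets $\ProbNorm\,|\eigenvalue_t| \ge \frac{1-e^{-2\eps} - 2e^{-\eps(B+1)}(1+e^{-\eps})}{(1+e^{-\eps})^2} = \frac{1-e^{-\eps}-2e^{-\eps(B+1)}}{1+e^{-\eps}}$, which is strictly positive under the assumption $B \ge \frac{1}{\eps}\ln\frac{8e^{\eps}}{e^{2\eps}-1}$. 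This is exactly the bound the paper derives (by bounding the numerator and denominator of the rational closed form) in the proof of Lemma~\ref{lemma: bounds on matrix norms}; the paper's proof of the present lemma likewise only asserts $\eigenvalue_t \neq 0$ there, so your plan matches the paper's, but to be complete you should carry out this quantitative step rather than cite the $O(e^{-\eps B})$ tail heuristically.
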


The proof of Lemma~\ref{lemma: property of transmatrix} is included in the Appendix~\ref{appendix: sec: time analysis}.
Now we are ready to prove Theorem~\ref{theorem: running time of algo fast inversion}. 

\begin{proof}[Proof of Theorem~\ref{theorem: running time of algo fast inversion}]
    First, computing the empirical profile~$\boundedProfileOfNoisyHist$ from $\privHist$ takes $O(d)$ time.
    Next, by fast Fourier transform (Fact~\ref{fact: fast fourier transform}), multiplying a vector with both $\DFTMatrix$ and $\DFTMatrix^{-1}$ can be performed in~$O((n + 2B + 1) \log (n + 2B + 1)) = O(n \log n)$ time.
    Based on the decomposition of~$\TransMatrix^{-1}$ (Lemma~\ref{lemma: property of transmatrix}), we see that multiplying~$\TransMatrix^{-1}$ with a vector takes~$O(n \log n)$ time.
    Therefore, it suffices to show that~$\vec{a}$ can also be found in~$O(n \log n)$ time. 

    For consistency, assume that the coordinates of~$\vec{a}$ are indexed from~$-B$ to~$n + B$ and write~$\vec{a} = \bracket{a_{-B}, \ldots, a_{n + B}}$.
    Then 
    $
        \tinner{\vecTargetSupport}{\TransMatrix^{-1} \times \vec{a}}
            = \sum_{t = -B}^{n + B} \tinner{\vecTargetSupport}{\TransMatrix^{-1}[\cdot, t]} \cdot a_t,
    $
    where~$\TransMatrix^{-1}[\cdot, t]$ is the~$t^{(th)}$ column of~$\TransMatrix^{-1}$.
    For convenience, let~$c_t$ be a short hand for~$\tinner{\vecTargetSupport}{\TransMatrix^{-1}[\cdot, t]}$.
    Observe that
    $
        \vec{c} \doteq \bracket{c_{-B}, \ldots, c_{n + B}} = \PAREN{\vecTargetSupport}^\top \times \TransMatrix^{-1}
    $
    can be computed in $O(n \log n)$ time.
    Once the~$c_t$ are known, finding~$\vec{a}$ is equivalent to solving the following optimization problem: 
    $
        \max_{\norm{\vec{a}}_p = 1} \inner{\vec{c}}{\vec{a}}.
    $
    
    It is straightforward to see that: 
    i) $\vec{a} = \text{sign}(c_t) \cdot \vec{e}_t$ where~$\vec{e}_t$ is a standard basis vector and~$t = \arg\max_t \card{c_t}$ when~$p = 1$; 
    ii) $\vec{a} = \frac{\vec{c}}{\norm{\vec{c}}_2}$ when~$p = 2$; 
    iii) $\vec{a} = \bracket{\text{sign}\PAREN{c_{-B}}, \ldots, \text{sign}\PAREN{c_{n + B}}}^\top$ when~$p = \infty$. 
    \vspace{-4mm}

    \vspace{-2mm}
\end{proof}

\vspace{-2mm}
\begin{theorem}
    \label{theorem: running time of algo rounding}
    Algorithm~\ref{algo: rounding} has an implementation which runs in~$O(n \log n)$ time.
\end{theorem}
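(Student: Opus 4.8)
The plan is to walk through Algorithm~\ref{algo: rounding} phase by phase and show that every step is a constant number of linear scans, except for the threshold search, which costs a single sort. Phase~1 (Line~\ref{line: round to zero}) overwrites only the $2B$ coordinates in $\IntSet{\Lb}{\Ub}\setminus\IntSet{0}{n}$, which is $O(n)$ since the theorem assumes $n\ge B$. Phase~2 (Lines~\ref{line: large set}--\ref{line: round from below}) builds $\largeSet$ and $\smallSet$, accumulates $\surplus$ and $\deficit$, and clips each of the $n+1$ coordinates indexed by $\IntSet{0}{n}$ into $[0,1]$, all in one pass, hence $O(n)$. In Phase~3, computing $s=\surplus+\deficit$ (Line~\ref{line: compute adjust amount}) is $O(1)$, and once the threshold $\tau$ from Line~\ref{line: search for the threshold} is available, the update on Line~\ref{line: adjust based on computed threshold} is one more $O(n)$ pass. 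So the total running time is $O(n)$ plus the cost of Line~\ref{line: search for the threshold}.

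For Line~\ref{line: search for the threshold} I would define $g(\tau)\doteq\sum_{t=0}^n\min\{\tau,\estProfile[t]\}$ and use that, after Phase~2, every $\estProfile[t]$ with $t\in\IntSet{0}{n}$ lies in $[0,1]$, so $g:\R_{\ge 0}\to\R_{\ge 0}$ is continuous, nondecreasing, concave and piecewise linear, with $g(0)=0$ and $g$ eventually constant equal to $\sum_{t=0}^n\estProfile[t]$. Existence of a root $\tau\ge 0$ of $g(\tau)=s$ then follows from $0\le s\le\sum_{t=0}^n\estProfile[t]$, which is established in the proof of Lemma~\ref{lemma : correctness of algo rounding}. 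To find the root, sort the $n+1$ values $\estProfile[0],\dots,\estProfile[n]$ into $v_1\le\dots\le v_{n+1}$ in $O(n\log n)$ time and precompute the prefix sums $\Sigma_k\doteq\sum_{j=1}^k v_j$; then for $\tau\in[v_k,v_{k+1}]$ one has the exact identity $g(\tau)=\Sigma_k+(n+1-k)\,\tau$ (splitting the sum at sorted position $k$, rather than by value, sidesteps any tie issues), so the $g(v_k)$ are obtained in $O(1)$ each and are nondecreasing in $k$. A single linear scan over $k$ locates the interval on which $g$ crosses $s$, and within that interval $g$ is affine with a known slope, so $\tau$ is the solution of one linear equation. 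This search costs $O(n)$ on top of the sort, so Algorithm~\ref{algo: rounding} runs in $O(n\log n)$ time, the sort being the only superlinear ingredient.

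Finally I would add a remark that the sort is not strictly necessary: the breakpoint interval containing the root of $g$ can be isolated by a weighted linear-time selection (repeatedly discarding half the remaining breakpoints using a median and updating a running partial sum), giving an $O(n)$ implementation; but since $O(n\log n)$ already matches the bound of Theorem~\ref{theorem: running time of algo fast inversion} there is no need to include this in the main text. The only real care needed in the argument is the structural claim about $g$ (monotone, concave, piecewise linear with breakpoints at the $\estProfile[t]$) together with the exact per-interval formula above, plus citing Lemma~\ref{lemma : correctness of algo rounding} for the existence of $\tau$; everything else is bookkeeping, and I expect no genuine obstacle.
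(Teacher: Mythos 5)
Your proposal is correct and follows essentially the same route as the paper's proof: everything outside Line~\ref{line: search for the threshold} is a constant number of linear passes, and the threshold $\tau$ is found by sorting the $\estProfile[t]$ in $O(n\log n)$ time, locating the breakpoint of the piecewise-linear function $\sum_{t}\min\{\tau,\estProfile[t]\}$ by a linear scan (the paper phrases this as finding the smallest index $t^*$ with $\sum_{t=0}^{t^*-1}\estProfile[t]+(n+1-t^*)\estProfile[t^*]\ge s$), and then solving a single linear equation for $\tau$. Your added details (explicit existence of $\tau$ via Lemma~\ref{lemma : correctness of algo rounding}, tie handling, and the optional $O(n)$ selection-based variant) are fine but do not change the argument.
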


The detailed proof of the theorem is included in the Appendix~\ref{appendix: sec: time analysis}.
At a high level, we can initially sort the $\estProfile[t]$ in non-decreasing order in $O(n \log n)$ time. Subsequently, we identify the smallest integer $t^*$ for which $\estProfile[t^*] \ge \tau$. It can be demonstrated that $\tau = \frac{s - \sum_{t = 0}^{t^* - 1} \estProfile[t]}{n + 1 - t^*}$.

\subsection{Error Analysis}

\label{sec: error analysis}

In this section, we analyze the estimation error when we combined algorithms~\ref{algo: fast inversion} and~\ref{algo: rounding}, and show that this results in the estimation error stated in Theorem~\ref{theorem: property of profile approximation algorithm}.

The approximation errors arise from two sources: 
the difference between the optimal solution of~\eqref{opt-relaxed} and the actual profile; and the error introduced during the rounding procedure.
We will demonstrate that the errors resulting from the former are of the same order of magnitudes as those indicated in Theorem~\ref{theorem: property of profile approximation algorithm}, while the latter introduces only additional constants to the error of the former.
Formally, the following two lemmas hold. 

\begin{lemma}[{Error from Optimization~\eqref{opt-relaxed}}]
    \label{lemma: error from opt-relaxed}
    Given the profile~$\boundedProfileOfNoisyHist$ of a private version~$\privHist$ of a histogram~$\hist$ published by the~$\eps$-DP discrete Laplace mechanism and a parameter~$p = 1, 2$ or~$\infty$, with probability at least $1 - \eta$, Algorithm~\ref{algo: fast inversion} returns a vector~$\estProfile$ with the estimation error~$\err{p} \doteq \| \estProfile - \profile \|_p$ satisfying
    Inequalities~(\ref{ineq: error 1},\ref{ineq: error 2},\ref{ineq: error 3}).
\end{lemma}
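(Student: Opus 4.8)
The plan is to bound the $\ell_p$ error $\|\estProfile - \profile\|_p$ where $\estProfile = \algoFstInv(\privHist, p)$ is the optimal solution of~\eqref{opt-relaxed}. I would begin with a triangle-inequality decomposition into three conceptual pieces: (i) the ``reconstruction error'' $\|\TransMatrix^{-1}\boundedProfileOfNoisyHist - \TransMatrix^{-1}\E{\boundedProfileOfNoisyHist}\|_p$ coming from the deviation of the empirical noisy profile from its expectation; (ii) the ``inversion identity'' piece, using that $\E{\boundedProfileOfNoisyHist}$ (restricted to $\cE_B$) equals $\TransMatrix\profile$ by Lemma~\ref{lemma: expectation in matrix form}, so $\TransMatrix^{-1}\E{\boundedProfileOfNoisyHist} = \profile$ up to a small term accounting for the conditioning on $\cE_B$ and the truncation; and (iii) the cost of the hyperplane correction step (Line~5 of Algorithm~\ref{algo: fast inversion}), which since $\estProfile$ is \emph{optimal} for~\eqref{opt-relaxed} is controlled by comparing against the feasible point that simply rescales/shifts $\profile$ itself. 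Since $\profile$ already satisfies $\sum_t \profile[t] = 1$, the distance from $\TransMatrix^{-1}\boundedProfileOfNoisyHist$ to the constraint hyperplane is at most $\|\TransMatrix^{-1}\boundedProfileOfNoisyHist - \profile\|_{\TransMatrix,p}$-type quantity, so step (iii) at most doubles the bound from (i)+(ii). This reduces everything to bounding $\|\TransMatrix^{-1}(\boundedProfileOfNoisyHist - \E{\boundedProfileOfNoisyHist})\|_p$ together with a negligible $\overline{\cE_B}$ correction (which is where the hypothesis $n \ge B$ and the definition of $B$ enter, making $\Pr[\overline{\cE_B}] \le \eta/(\text{something})$ small).

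Next I would turn to the heart of the argument: a concentration bound on $\boundedProfileOfNoisyHist[t] - \E{\boundedProfileOfNoisyHist[t]}$ for each $t \in \IntSet{\Lb}{\Ub}$, followed by passing through $\TransMatrix^{-1}$. For each coordinate $t$, $\boundedProfileOfNoisyHist[t] = \frac1d \sum_{\ell} \indicator{\privHist[\ell] = t}$ is an average of $d$ independent $\{0, 1/d\}$-scaled Bernoullis, so Bernstein/Chernoff gives $|\boundedProfileOfNoisyHist[t] - \E{\boundedProfileOfNoisyHist[t]}| \lesssim \sqrt{\E{\boundedProfileOfNoisyHist[t]}\ln(1/\delta)/d} + \ln(1/\delta)/d$ with probability $1-\delta$, and $\sum_t \E{\boundedProfileOfNoisyHist[t]} \le 1$. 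To move to $\TransMatrix^{-1}$, I need the matrix-norm bounds on $\TransMatrix^{-1}$: using Fact~\ref{fact: eigenvalues and vectors of circulant matrix} and Lemma~\ref{lemma: property of transmatrix}, the eigenvalues $\eigenvalue_t$ of $\TransMatrix$ can be computed as a geometric-type sum and bounded below in magnitude; crucially one shows $\eigenvalue_0 = 1$ (row sums) and $|\eigenvalue_t| \ge \frac{e^\eps - 1}{e^\eps + 1}$ for all $t$ (this is the origin of the $((e^\eps+1)/(e^\eps-1))^2$ factor — one factor from $\|\TransMatrix^{-1}\|$ and, for $p=\infty$, effectively another from relating $\|\cdot\|_\infty$ through the Fourier basis). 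For $\err{2}$ this is cleanest: $\|\TransMatrix^{-1}\|_{2\to 2} = \max_t |\eigenvalue_t|^{-1}$, so $\err{2} \lesssim |\eigenvalue_{\min}|^{-1}\|\boundedProfileOfNoisyHist - \E{\boundedProfileOfNoisyHist}\|_2$, and $\E{\|\boundedProfileOfNoisyHist - \E{\boundedProfileOfNoisyHist}\|_2^2} = \sum_t \Var{\boundedProfileOfNoisyHist[t]} \le \frac1d \sum_t \E{\boundedProfileOfNoisyHist[t]} \le \frac1d$, then a high-probability version via a bounded-differences/McDiarmid argument on $\|\boundedProfileOfNoisyHist - \E{\boundedProfileOfNoisyHist}\|_2$ yields~\eqref{ineq: error 2}. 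For $\err{1}$, I would bound $\|\TransMatrix^{-1}v\|_1 \le \|\TransMatrix^{-1}v\|_2 \sqrt{n+2B+1}$ but that loses a $\sqrt n$; the sharper route stated in~\eqref{ineq: error 1} keeps $\sum_t \sqrt{\E{\boundedProfileOfNoisyHist[t]}}$, which comes from applying the per-coordinate Bernstein bound, summing, and using $\|\TransMatrix^{-1}\|_{1\to 1} = \|\TransMatrix^{-1}\|_{\infty\to\infty}$ equal to the max absolute row/column sum of $\TransMatrix^{-1}$, which is $O(((e^\eps+1)/(e^\eps-1)))$ — I would establish this row-sum bound explicitly from the circulant structure. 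For $\err{\infty}$, $\|\TransMatrix^{-1}v\|_\infty \le \|\TransMatrix^{-1}\|_{\infty\to\infty}\|v\|_\infty$, and $\|\boundedProfileOfNoisyHist - \E{\boundedProfileOfNoisyHist}\|_\infty$ is controlled by a union bound over the $\le n + 2B+1 = O(n)$ coordinates of the per-coordinate Bernstein bound with $\delta = \eta/O(n)$, producing the $\sqrt{\frac{e^\eps-1}{d(e^\eps+1)}\ln(n/\eta)} + \frac{\ln(n/\eta)}{d}$ shape after bounding $\E{\boundedProfileOfNoisyHist[t]} \le \frac{1-e^{-\eps}}{1+e^{-\eps}}$ per coordinate.

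The main obstacle I anticipate is \textbf{step (i) combined with the matrix-norm bounds}, specifically getting the operator norms of $\TransMatrix^{-1}$ tight enough to produce exactly the $((e^\eps+1)/(e^\eps-1))^2$ dependence and no worse, and simultaneously controlling the concentration of $\|\TransMatrix^{-1}(\boundedProfileOfNoisyHist - \E{\boundedProfileOfNoisyHist})\|_p$ rather than just its expectation. The subtlety is that $\TransMatrix^{-1}$ does not have a simple closed-form entrywise description — it is $\DFTMatrix \operatorname{diag}(\eigenvalue_t^{-1}) \DFTMatrix^{-1}$ — so bounding its $\ell_\infty\to\ell_\infty$ (max row sum) norm requires summing $|(\TransMatrix^{-1})_{jk}| = |\frac1{n+2B+1}\sum_t \eigenvalue_t^{-1} w_t^{j-k}|$ over $k$, which needs a non-trivial argument; I would instead exploit that $\TransMatrix^{-1}$ is itself circulant (inverse of a circulant is circulant) so its row sum equals $\sum_k |(\TransMatrix^{-1})_{0,k}|$, and relate the first row of $\TransMatrix^{-1}$ to the coefficient sequence whose DFT is $(\eigenvalue_t^{-1})$; bounding this requires recognizing $\TransMatrix$ essentially as a (normalized, truncated) discrete-Laplace convolution operator so that $\TransMatrix^{-1}$ is a sparse ``deconvolution'' stencil with coefficients of total absolute value $O((e^\eps+1)/(e^\eps-1))$. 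For the concentration (as opposed to expectation) bound on $\|\TransMatrix^{-1}(\boundedProfileOfNoisyHist - \E{\boundedProfileOfNoisyHist})\|_p$, I would use a bounded-differences argument: changing one $\privHist[\ell]$ changes $\boundedProfileOfNoisyHist$ by at most $2/d$ in two coordinates, hence changes $\|\TransMatrix^{-1}(\cdot)\|_p$ by at most $\frac{2}{d}\|\TransMatrix^{-1}\|_{\cdot\to p}$, and apply McDiarmid over the $d$ independent noise draws, which contributes the $\sqrt{\ln(1/\eta)/d}$ additive term visible in all three bounds. Finally I would add back the $\overline{\cE_B}$ correction and verify it is dominated, completing the proof of the lemma.
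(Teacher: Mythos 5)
Your overall architecture is the same as the paper's: use optimality of $\estProfile$ in~\eqref{opt-relaxed} against the feasible point $\profile$ (which gives the factor $2$), pass through an operator-norm bound on $\TransMatrix^{-1}$, identify $\E{\boundedProfileOfNoisyHist}=\TransMatrix\times\profile$ after conditioning on $\cE_B$, and control $\boundedProfileOfNoisyHist-\E{\boundedProfileOfNoisyHist}$ coordinatewise (variance/Jensen plus McDiarmid for $\ell_1,\ell_2$; Bernstein plus a union bound with $\E{\boundedProfileOfNoisyHist[t]}\le 1/\ProbNorm$ for $\ell_\infty$), exploiting that $\TransMatrix^{-1}$ is circulant so $\norm{\TransMatrix^{-1}}_1=\norm{\TransMatrix^{-1}}_\infty$ and $\norm{\TransMatrix^{-1}}_2=\max_t\card{\eigenvalue_t}^{-1}$. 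All of that is exactly how the paper proceeds (Lemmas~\ref{lemma: bounds on matrix norms} and~\ref{lemma: deviation of profile of noisy histogram}).

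There is, however, a genuine error in your key quantitative claims about $\TransMatrix^{-1}$. You assert $\card{\eigenvalue_t}\ge\frac{e^\eps-1}{e^\eps+1}$ and that the max absolute row sum of $\TransMatrix^{-1}$ is $O\bigparen{\frac{e^\eps+1}{e^\eps-1}}$; both are off by a full power. Since $\TransMatrix$ is the (row-stochastic) truncated discrete-Laplace convolution, its symbol at the highest frequency is approximately $\PAREN{\frac{1-e^{-\eps}}{1+e^{-\eps}}}^2$, so $\min_t\card{\eigenvalue_t}\approx\PAREN{\frac{e^\eps-1}{e^\eps+1}}^2$ and $\norm{\TransMatrix^{-1}}_2\approx\PAREN{\frac{e^\eps+1}{e^\eps-1}}^2$; likewise the deconvolution stencil you invoke is (up to truncation corrections) $\frac{1}{(1-e^{-\eps})^2}\,(-e^{-\eps},\,1+e^{-2\eps},\,-e^{-\eps})$, whose absolute coefficient sum is $\PAREN{\frac{1+e^{-\eps}}{1-e^{-\eps}}}^2$, not the single power. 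This also shows your attribution of the squared factor is a misdiagnosis: the square lives entirely in $\norm{\TransMatrix^{-1}}_p$ for every $p\in\{1,2,\infty\}$; there is no additional ``Fourier basis'' factor special to $p=\infty$, and your $\ell_2$ chain with the single-power eigenvalue bound would yield a strictly stronger version of~\eqref{ineq: error 2} than is actually true of $\norm{\TransMatrix^{-1}}_2$, so that step cannot be carried out as written. The fix is exactly the paper's Lemma~\ref{lemma: bounds on matrix norms}: lower-bound $\card{\eigenvalue_t}$ by $\frac{1}{\ProbNorm}\cdot\frac{1-e^{-\eps}-2e^{-\eps(B+1)}}{1+e^{-\eps}}$ for $p=2$, and for $p=1,\infty$ bound the row sums of $\TransMatrix^{-1}$ via the three-row combination $\TransMatrix[t,\cdot]-\frac{1}{e^{-\eps}+e^{\eps}}\PAREN{\TransMatrix[t-1,\cdot]+\TransMatrix[t+1,\cdot]}$, which also handles the $e^{-\eps B}$ wrap-around terms that your ``infinite convolution'' heuristic ignores (this is where the assumption on $B$, hence $n\ge B$, is used). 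With those corrected norms your decomposition recovers Inequalities~(\ref{ineq: error 1},\ref{ineq: error 2},\ref{ineq: error 3}) exactly as in the paper.
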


\begin{lemma}[Error from Rounding]
    \label{lemma: error of rounding}
    Given a vector~$\estProfile$ returned by Algorithm~\ref{algo: fast inversion}, let~$\err{p}$ denote the error before running Algorithm~\ref{algo: rounding}, and~$\err{p}'$ denote the error afterwards.
    Then~$
        \err{p}' \le \err{p} \,
    $ 
    for $p = 1, 2,$ 
    and
    $
        \err{\infty}' \le 2 \, \err{\infty}.
    $
\end{lemma}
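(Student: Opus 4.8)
The plan is to follow the vector through the three phases of Algorithm~\ref{algo: rounding}. Write $\estProfile^{(0)}$ for the input (the output of $\algoFstInv$, which by Lemma~\ref{lemma: correctness of algo fast inversion} satisfies $\sum_{t\in\IntSet{0}{n}}\estProfile^{(0)}[t]=1$), and $\estProfile^{(1)},\estProfile^{(2)},\estProfile^{(3)}$ for the vector after Phases 1, 2, 3; recall $\profile$ is supported on $\IntSet{0}{n}$ with $\profile[t]\in[0,1]$ and $\sum_t\profile[t]=1$. Phases 1 and 2 are the easy part. Phase 1 sets $\estProfile[t]\leftarrow 0$ on the coordinates $t\in\IntSet{\Lb}{\Ub}\setminus\IntSet{0}{n}$, where $\profile[t]=0$, so each such coordinate's error drops to $0$ and no coordinate's error grows. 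Phase 2 replaces $\estProfile[t]$ by $\clip{\estProfile[t]}{0}{1}$ on $\IntSet{0}{n}$; since $\profile[t]\in[0,1]$, this is the coordinatewise metric projection onto an interval containing $\profile[t]$, hence nonexpansive. Therefore $\norm{\estProfile^{(2)}-\profile}_p\le\norm{\estProfile^{(1)}-\profile}_p\le\norm{\estProfile^{(0)}-\profile}_p=\err{p}$ for every $p$, and all the work is in Phase~3.

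The key observation is that Phase~3 is exactly the Euclidean projection onto the probability simplex $\Delta\doteq\{x\ge 0:\sum_t x[t]=1\}$. Indeed, after Phase 2 we have $\estProfile^{(2)}\in[0,1]^{\IntSet{\Lb}{\Ub}}$ with $\sum_t\estProfile^{(2)}[t]=1+s$ and $s\ge 0$ (the bound asserted in Lemma~\ref{lemma : correctness of algo rounding}), and Phase 3 returns $\estProfile^{(3)}[t]=\max\{0,\estProfile^{(2)}[t]-\tau\}$ where $\tau\ge 0$ is chosen so that $\sum_t\max\{0,\estProfile^{(2)}[t]-\tau\}=1$ — precisely the soft-thresholding characterization of $\mathrm{proj}_\Delta(\estProfile^{(2)})$. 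For $p=2$ this settles things at once: since $\profile\in\Delta$ and projections onto convex sets are $1$-Lipschitz in $\ell_2$, $\norm{\estProfile^{(3)}-\profile}_2=\norm{\mathrm{proj}_\Delta(\estProfile^{(2)})-\mathrm{proj}_\Delta(\profile)}_2\le\norm{\estProfile^{(2)}-\profile}_2$, and chaining with the Phase 1--2 bound gives $\err{2}'\le\err{2}$.

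For $p=1$ I would not argue coordinatewise — Phase 3 can move a coordinate \emph{away} from $\profile$ — but instead exploit that $\estProfile^{(3)},\profile\in\Delta$, so $\norm{\estProfile^{(3)}-\profile}_1=2\sum_t\max\{0,\estProfile^{(3)}[t]-\profile[t]\}$, and likewise $\norm{\estProfile^{(1)}-\profile}_1=2\sum_t\max\{0,\estProfile^{(1)}[t]-\profile[t]\}$ because $\sum_t\estProfile^{(1)}[t]=1=\sum_t\profile[t]$ after Phase 1. Clipping-then-thresholding only decreases coordinates and never below $0$, so $\estProfile^{(3)}[t]\le\max\{0,\estProfile^{(1)}[t]\}$; combined with $\profile[t]\ge 0$ this gives the coordinatewise inequality $\max\{0,\estProfile^{(3)}[t]-\profile[t]\}\le\max\{0,\estProfile^{(1)}[t]-\profile[t]\}$, and summing yields $\norm{\estProfile^{(3)}-\profile}_1\le\norm{\estProfile^{(1)}-\profile}_1\le\err{1}$. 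For $p=\infty$, since $\estProfile^{(3)}[t]=\estProfile^{(2)}[t]-\min\{\tau,\estProfile^{(2)}[t]\}$ we get $\norm{\estProfile^{(3)}-\estProfile^{(2)}}_\infty\le\tau$, hence $\err{\infty}'\le\norm{\estProfile^{(2)}-\profile}_\infty+\tau$, and it remains to show $\tau\le\norm{\estProfile^{(2)}-\profile}_\infty$. Split $\IntSet{0}{n}$ into $T_1=\{t:\estProfile^{(2)}[t]\ge\tau\}$ and $T_2=\IntSet{0}{n}\setminus T_1$; equating the two expressions $s=\sum_t\min\{\tau,\estProfile^{(2)}[t]\}=|T_1|\tau+\sum_{T_2}\estProfile^{(2)}[t]$ and $s=\sum_t(\estProfile^{(2)}[t]-\profile[t])$ and cancelling $\sum_{T_2}\estProfile^{(2)}[t]$ gives $|T_1|\tau+\sum_{T_2}\profile[t]=\sum_{T_1}(\estProfile^{(2)}[t]-\profile[t])\le|T_1|\cdot\norm{\estProfile^{(2)}-\profile}_\infty$; since $T_1\ne\emptyset$ (otherwise $s=\sum_t\estProfile^{(2)}[t]=1+s$) and $\sum_{T_2}\profile[t]\ge 0$, dividing by $|T_1|$ yields $\tau\le\norm{\estProfile^{(2)}-\profile}_\infty$, so $\err{\infty}'\le 2\norm{\estProfile^{(2)}-\profile}_\infty\le 2\err{\infty}$.

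The main obstacle is Phase 3: the hope that rounding moves the estimate \emph{towards} $\profile$ coordinatewise is false, so the argument must use the global constraints $\sum_t\estProfile^{(3)}[t]=\sum_t\profile[t]=1$ — through the identity $\norm{x-y}_1=2\sum_t\max\{0,x[t]-y[t]\}$ for $x,y\in\Delta$ in the $\ell_1$ case, through the $\ell_2$-projection property in the $\ell_2$ case, and through the bound $\tau\le\norm{\estProfile^{(2)}-\profile}_\infty$ (where the extra factor of $2$ enters) in the $\ell_\infty$ case. The remaining steps — monotonicity of $\ell_p$ norms under coordinatewise shrinkage, the soft-thresholding form of the simplex projection — are routine.
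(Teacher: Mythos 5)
Your proof is correct, and it reaches the stated bounds by a partly different route than the paper's. Your handling of Phases 1--2 (coordinatewise nonexpansiveness, since $\profile[t]=0$ outside $\IntSet{0}{n}$ and $\profile[t]\in[0,1]$ inside) and your $\ell_\infty$ argument (show $\tau\le\norm{\estProfile^{(2)}-\profile}_\infty$, then pay it once more via the triangle inequality) match the paper's in substance -- the paper gets the same threshold bound by comparing $\sum_t\bigl(\estProfile^{(2)}[t]-\min\{b_{\max},\estProfile^{(2)}[t]\}\bigr)\le 1$ with the defining equation for $\tau$. Where you genuinely diverge is in $\ell_1$ and $\ell_2$. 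For $\ell_1$ the paper does surplus/deficit bookkeeping: Phase 2 changes the error by $-\surplus+\deficit$ and Phase 3 adds at most $\surplus+\deficit$, netting $2\,\surplus\le 0$; you instead use that $\estProfile^{(1)}$, $\estProfile^{(3)}$ and $\profile$ all have coordinate sum $1$ (the first by Lemma~\ref{lemma: correctness of algo fast inversion} plus zeroing coordinates where $\profile$ vanishes), so each $\ell_1$ error equals twice the sum of positive parts, and the coordinatewise bound $\estProfile^{(3)}[t]\le\max\{0,\estProfile^{(1)}[t]\}$ makes those positive parts only shrink -- a cleaner argument that avoids tracking per-phase changes. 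For $\ell_2$ the paper replaces the thresholding step by an iterated adjustment (Algorithm~\ref{algo: itereated algorithm}) and verifies the squared error is nonincreasing per iteration; you identify Phase 3 outright as the Euclidean projection onto the probability simplex (the soft-thresholding/KKT form $\max\{0,\estProfile^{(2)}[t]-\tau\}$ with the sum-to-one constraint, the out-of-range coordinates staying at $0$ because $\tau>0$) and invoke nonexpansiveness of convex projections together with $\profile$ lying in the simplex. That is shorter and more conceptual, at the mild cost of importing the standard simplex-projection characterization, which is indeed routine to check. Both routes lean on Lemma~\ref{lemma : correctness of algo rounding}'s guarantee that $s\ge 0$ so that $\tau$ exists, and your use of it introduces no circularity.
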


Combing Lemma~\ref{lemma: error from opt-relaxed} and~\ref{lemma: error of rounding} immediately proves the error bounds in Theorem~\ref{theorem: property of profile approximation algorithm}.
In the rest of the section, we present the proofs for Lemma~\ref{lemma: error from opt-relaxed} and Lemma~\ref{lemma: error of rounding} separately.

\vspace{-2mm}
\paragraph{Error from Optimization~\eqref{opt-relaxed}}

\begin{proof}[Proof of Lemma~\ref{lemma: error from opt-relaxed}]
    \vspace{-3mm}
    We need to prove that, the closeness between~$\TransMatrix \times \estProfile$ and~$\boundedProfileOfNoisyHist$ implies the closeness between~$\estProfile$ and~$\profile$.
    To bound $\| \estProfile - \profile  \|_p,$ we observe that  

    \vspace{-3mm}
    \resizebox{\linewidth}{!}{
      \begin{minipage}{1.2\linewidth}
        \begin{align*}
             \norm{ \estProfile - \profile }_p
            &= 
            \norm{ \TransMatrix^{-1} \times \TransMatrix \times \PAREN{ \estProfile - \profile } }_p 
            \le \norm{ \TransMatrix^{-1} }_p \cdot \norm{ \TransMatrix \times \PAREN{ \estProfile - \profile } }_p \\
            &\le \norm{ \TransMatrix^{-1} }_p \cdot \PAREN{ 
                \norm{ \TransMatrix \times \estProfile - \tilde{f} }_p 
                + \norm{ \tilde{f} - \TransMatrix \times \profile }_p 
            } \\
            &\le \norm{ \TransMatrix^{-1} }_p \cdot \PAREN{ 2 \cdot \norm{ \tilde{f} - \TransMatrix \times \profile }_p },
            \label{ineq: relate fianl error with immediate one}
        \end{align*}
      \end{minipage}
    }

    where the first inequality comes from the definition of matrix norm $\| \TransMatrix^{-1} \|_p \doteq \max_{\vec{v}  : \| \vec{v} \|_p = 1 } \| \TransMatrix^{-1} \times \vec{v} \|_p$; the last one from that~$\estProfile$ is the solution for optimization problem~\eqref{opt-relaxed}.
    It remains to bound~$\|\TransMatrix^{-1} \|_p$ and~$\| \tilde{f} - \TransMatrix \times \profile \|_p$ separately.

    \begin{lemma} Let~$\eigenvalue_t, t = -B, \ldots, n + B$ be the eigenvalues of~$\TransMatrix$. 
    Then 
        \label{lemma: bounds on matrix norms}

        \vspace{-4mm}
        \resizebox{\linewidth}{!}{
          \begin{minipage}{1.15\linewidth}
            \begin{align*}
                \norm{ \TransMatrix^{-1} }_1 
                &= \norm{ \TransMatrix^{-1} }_\infty
                \le \frac{
                        2 + e^{ -\eps } + e^{ \eps }
                    }{
                        e^{ \eps } - e^{ -\eps } - 4 \cdot e^{-\eps \cdot B }
                    } 
                    \cdot \ProbNorm, 
                \\
                \norm{ \TransMatrix^{-1} }_2 
                &\le 
                \max_t \frac{1}{|\eigenvalue_t|}
                \le \frac{
                        \ProbNorm \cdot \PAREN{ 1 + e^{ -\eps } }
                    }{
                        \PAREN{
                            1 - e^{ -\eps } - 2 \cdot e^{ -\eps \cdot \PAREN{ B + 1 } } 
                        } 
                    }.
            \end{align*}
          \end{minipage}
        }
    \end{lemma}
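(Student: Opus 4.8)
Both bounds are driven by the fact that $\TransMatrix = \frac{1}{\ProbNorm}\CirculantMatrix{\vec{c}}$ acts on $\R^{n+2B+1}$ and is circulant, hence unitarily diagonalised by $\DFTMatrix$ (Fact~\ref{fact: eigenvalues and vectors of circulant matrix}, Lemma~\ref{lemma: property of transmatrix}); in particular $\TransMatrix^{-1}$ is again circulant and normal. Writing $w_t = e^{-i\theta_t}$ with $\theta_t \doteq \frac{2\pi(t+B)}{n+2B+1}$, the index-$t$ eigenvalue of $\TransMatrix$ is $\eigenvalue_t = \frac{1}{\ProbNorm}\sum_{k=-B}^{B}e^{-\eps|k|}w_t^{k}$, a real number. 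Normality gives $\norm{\TransMatrix^{-1}}_2 = 1/\min_t|\eigenvalue_t|$ (the first inequality in the $\ell_2$ claim), and for any circulant matrix the $\ell_1$ and $\ell_\infty$ operator norms both equal the $\ell_1$ norm of its defining vector; so it suffices to (a) lower-bound $|\ProbNorm\eigenvalue_t|$ and (b) control the defining vector of $\TransMatrix^{-1}$.

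\emph{The $\ell_2$ bound (step (a)).} I would read $\ProbNorm\eigenvalue_t$ as a truncated discrete-Laplace characteristic function: for $|z|=1$ one has $\sum_{k\in\Z}e^{-\eps|k|}z^{k} = \frac{1-e^{-2\eps}}{|1-e^{-\eps}z|^{2}}$, and the discarded tail equals $2\operatorname{Re}\frac{(e^{-\eps}z)^{B+1}}{1-e^{-\eps}z}$, of modulus at most $\frac{2e^{-\eps(B+1)}}{|1-e^{-\eps}z|}$. Setting $u_t \doteq |1-e^{-\eps}w_t| \in [1-e^{-\eps},\,1+e^{-\eps}]$ this gives $\ProbNorm\eigenvalue_t \ge \frac{1-e^{-2\eps}}{u_t^{2}}-\frac{2e^{-\eps(B+1)}}{u_t} = \frac{1}{u_t}\bigl(\frac{1-e^{-2\eps}}{u_t}-2e^{-\eps(B+1)}\bigr) \ge \frac{1}{u_t}\bigl(1-e^{-\eps}-2e^{-\eps(B+1)}\bigr)$, using $u_t\le 1+e^{-\eps}$ once. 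The hypothesis $n\ge B$ with $B$ as in Theorem~\ref{theorem: property of profile approximation algorithm} makes $1-e^{-\eps}-2e^{-\eps(B+1)}\ge 0$, so applying $u_t\le 1+e^{-\eps}$ a second time yields $\ProbNorm\eigenvalue_t \ge \frac{1-e^{-\eps}-2e^{-\eps(B+1)}}{1+e^{-\eps}}>0$; dividing by $\ProbNorm$ and inverting gives the stated bound on $\norm{\TransMatrix^{-1}}_2$.

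\emph{The $\ell_1 = \ell_\infty$ bound (step (b)).} Here I would exhibit an explicit near-inverse. Let $\vec{d}\in\R^{n+2B+1}$ be the circulant stencil with $1+e^{-2\eps}$ on the diagonal and $-e^{-\eps}$ on the two cyclic off-diagonals, so $\CirculantMatrix{\vec{d}}$ has eigenvalues $1-2e^{-\eps}\cos\theta_t+e^{-2\eps}$ --- exactly the reciprocal of the \emph{untruncated} symbol above --- and put $\mathbf{X}\doteq\frac{\ProbNorm}{1-e^{-2\eps}}\CirculantMatrix{\vec{d}}$. A direct evaluation of the cyclic convolution $\vec{c}*\vec{d}$ (no wrap-around since $n\ge B$) telescopes: in the cyclic indexing where $\vec{c}$ is supported on $\IntSet{-B}{B}$, it equals $1-e^{-2\eps}$ at $0$, vanishes at every coordinate with $1\le|k|\le B-1$, and equals $e^{-\eps(B+2)}$ at $\pm B$ and $-e^{-\eps(B+1)}$ at $\pm(B+1)$. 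Hence $\TransMatrix\mathbf{X} = \frac{1}{1-e^{-2\eps}}\CirculantMatrix{\vec{c}*\vec{d}} = \mathbf{I}+\mathbf{E}$ with $\mathbf{E}$ circulant and $\norm{\mathbf{E}}_1 = \norm{\mathbf{E}}_\infty = \frac{2e^{-\eps(B+2)}+2e^{-\eps(B+1)}}{1-e^{-2\eps}} = \frac{2e^{-\eps(B+1)}}{1-e^{-\eps}} < 1$ by the choice of $B$. Since circulant matrices commute, $\TransMatrix^{-1} = \mathbf{X}(\mathbf{I}+\mathbf{E})^{-1}$, and the Neumann series gives $\norm{(\mathbf{I}+\mathbf{E})^{-1}}_1 \le (1-\norm{\mathbf{E}}_1)^{-1}$; combined with $\norm{\mathbf{X}}_1 = \frac{\ProbNorm}{1-e^{-2\eps}}\norm{\vec{d}}_1 = \frac{\ProbNorm(1+e^{-\eps})^{2}}{1-e^{-2\eps}} = \frac{\ProbNorm(1+e^{-\eps})}{1-e^{-\eps}}$ this yields $\norm{\TransMatrix^{-1}}_1 = \norm{\TransMatrix^{-1}}_\infty \le \frac{\ProbNorm(1+e^{-\eps})}{1-e^{-\eps}-2e^{-\eps(B+1)}}$, from which the stated (slightly weaker) bound $\frac{(2+e^{-\eps}+e^{\eps})\,\ProbNorm}{e^{\eps}-e^{-\eps}-4e^{-\eps B}}$ follows by an elementary manipulation.

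\emph{Main obstacle.} The conceptual step is (b): guessing the right explicit near-inverse $\mathbf{X}$. The point that makes it work is that the \emph{untruncated} discrete-Laplace convolution operator is inverted by a width-$3$ difference stencil, so truncating the kernel to $\IntSet{-B}{B}$ perturbs the product $\CirculantMatrix{\vec{c}}\CirculantMatrix{\vec{d}}$ only by a circulant whose defining vector has $\ell_1$ mass $O(e^{-\eps B})$. Once this is seen, $\vec{c}*\vec{d}$ is a short fully explicit sum and the remaining Neumann-series and circulant-norm bookkeeping is routine; the only delicate points are checking $\norm{\mathbf{E}}_1<1$ and $1-e^{-\eps}-2e^{-\eps(B+1)}\ge0$ from the definition of $B$, and verifying the support of $\vec{c}*\vec{d}$ has no cyclic overlap --- all consequences of $n\ge B$ together with the lower bound on $B$ in Theorem~\ref{theorem: property of profile approximation algorithm}.
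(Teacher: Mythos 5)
Your proposal is correct, and while your $\ell_2$ argument is essentially the paper's, your $\ell_1=\ell_\infty$ argument takes a genuinely different route. For $\norm{\TransMatrix^{-1}}_2$, both proofs reduce via the DFT diagonalization (Lemma~\ref{lemma: property of transmatrix}) to lower-bounding $\min_t|\eigenvalue_t|$; the paper bounds the closed-form numerator and denominator of $\eigenvalue_t$ directly, whereas you use the untruncated-series identity plus a geometric tail bound, landing on exactly the same estimate $\ProbNorm\,|\eigenvalue_t|\ge\frac{1-e^{-\eps}-2e^{-\eps(B+1)}}{1+e^{-\eps}}$ (one nit: the nonnegativity of $1-e^{-\eps}-2e^{-\eps(B+1)}$ follows from the lower bound on $B$ in Theorem~\ref{theorem: property of profile approximation algorithm}, not from $n\ge B$, which you only need to exclude cyclic wrap-around). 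For the $\ell_1/\ell_\infty$ bound the paper also uses that $\TransMatrix^{-1}$ is circulant, but then argues coordinatewise: Lemma~\ref{lemma: bounds on inverting a single entry} applies precisely your width-3 stencil to the rows of $\TransMatrix$ (your $\vec{d}$ is $e^{-\eps}(e^{\eps}+e^{-\eps})$ times the combination $\TransMatrix[t,\cdot]-\frac{1}{e^{-\eps}+e^{\eps}}\PAREN{\TransMatrix[t-1,\cdot]+\TransMatrix[t+1,\cdot]}$), bounds each $|\vec{x}[t]|$ by $b_{t-1},b_t,b_{t+1}$ plus $O(e^{-\eps B})$ wrap terms, and sums over $t$; you instead package the same telescoping identity as an explicit near-inverse $\mathbf{X}$ with circulant residual $\mathbf{E}$ and close with a Neumann series. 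Your computation of $\vec{c}*\vec{d}$, of $\norm{\mathbf{E}}_1=\frac{2e^{-\eps(B+1)}}{1-e^{-\eps}}\le\frac12$, and of $\norm{\mathbf{X}}_1$ all check out, giving $\norm{\TransMatrix^{-1}}_1=\norm{\TransMatrix^{-1}}_\infty\le\frac{\ProbNorm(1+e^{-\eps})}{1-e^{-\eps}-2e^{-\eps(B+1)}}$, which is slightly sharper than the stated bound: multiplying numerator and denominator by $1+e^{\eps}$ and using $2e^{-\eps(B+1)}(1+e^{\eps})=2e^{-\eps B}(1+e^{-\eps})\le 4e^{-\eps B}$ recovers $\frac{(2+e^{-\eps}+e^{\eps})\ProbNorm}{e^{\eps}-e^{-\eps}-4e^{-\eps B}}$. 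What each buys: your near-inverse-plus-Neumann formulation is cleaner, yields a marginally better constant, and isolates the single algebraic fact (the three-point stencil inverts the exponential kernel up to $O(e^{-\eps B})$ boundary mass); the paper's summed per-coordinate inequality avoids introducing an auxiliary matrix and invertibility of $\mathbf{I}+\mathbf{E}$, at the cost of messier bookkeeping and the looser constant--both ultimately need the same smallness condition on $e^{-\eps B}$ supplied by the definition of $B$.
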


    \begin{lemma}
        \label{lemma: deviation of profile of noisy histogram}
        With probability at least $1 - \eta$, it holds that 
        
        \vspace{-3mm}
        \resizebox{\linewidth}{!}{
          \begin{minipage}{1.18\linewidth}
            \begin{align*}
                \norm{ \boundedProfileOfNoisyHist - \TransMatrix \times \profile }_1 
                    &\le \frac{1}{ \sqrt{d} } \cdot { 
                        \sum_{t = -B}^{n + B} \sqrt{ \E{ \boundedProfileOfNoisyHist [t] } }
                    }
                    + \sqrt{
                        \frac{2 \cdot \ln \frac{1}{\eta}}{d}
                    }. \\
                \norm{ \boundedProfileOfNoisyHist - \TransMatrix \times \profile }_2 
                    &\le \sqrt{\frac{1}{d}} + \sqrt{
                        \frac{\ln \frac{1}{\eta}}{d}
                    }. \\
                \norm{ \boundedProfileOfNoisyHist - \TransMatrix \times \profile }_\infty 
                    &\le \frac{1}{ \sqrt{d} } \cdot \sqrt{ 2 \cdot \frac{1}{\ProbNorm} \cdot \ln \frac{n}{\eta} } + \frac{1}{ 3 d } \cdot \ln \frac{n}{\eta} .
            \end{align*}
          \end{minipage}
        }
    \end{lemma}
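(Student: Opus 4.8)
The plan is to prove Lemma~\ref{lemma: deviation of profile of noisy histogram}, which bounds the deviation of the empirical profile $\boundedProfileOfNoisyHist$ of the noisy histogram from its conditional expectation $\TransMatrix \times \profile$. The starting point is the identification, already made in Lemma~\ref{lemma: expectation in matrix form}, that $\E{\cA(\profile) \mid \cE_B} = \TransMatrix \times \profile$; combined with the observation that $\boundedProfileOfNoisyHist$ has the same distribution as $\cA(\profile)$ (via the unfolding of Lemma~\ref{lemma: recover clipped histogram}), the quantity to control is exactly the fluctuation of $\cA(\profile)$ around its conditional mean. The natural tool is McDiarmid's bounded-differences inequality: $\boundedProfileOfNoisyHist$ is a function of $n$ (really $d$, but only the $n$ nonzero-count items matter for the relevant coordinates) independent noise samples $Z_1, \dots, Z_n$, one per item, and changing a single $Z_j$ moves at most two coordinates of $\boundedProfileOfNoisyHist$ by $1/d$ each.

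First I would set up the martingale/bounded-differences framework carefully, conditioning on $\cE_B$ so that the vector is indexed by $\IntSet{\Lb}{\Ub}$ and has finitely many coordinates. For the $\ell_1$ bound: write $\norm{\boundedProfileOfNoisyHist - \E{\boundedProfileOfNoisyHist}}_1 = \sum_{t=-B}^{n+B} |\boundedProfileOfNoisyHist[t] - \E{\boundedProfileOfNoisyHist[t]}|$. Each $\boundedProfileOfNoisyHist[t]$ is a sum of indicator variables scaled by $1/d$, so $\E{|\boundedProfileOfNoisyHist[t] - \E{\boundedProfileOfNoisyHist[t]}|} \le \sqrt{\Var{\boundedProfileOfNoisyHist[t]}} \le \sqrt{\E{\boundedProfileOfNoisyHist[t]}/d}$ (using that a sum of independent indicators has variance at most its mean, and the $1/d^2$ scaling gives $\Var \le \E{\boundedProfileOfNoisyHist[t]}/d$). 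Summing gives the expectation bound $\frac{1}{\sqrt d}\sum_t \sqrt{\E{\boundedProfileOfNoisyHist[t]}}$, and then McDiarmid applied to the scalar random variable $g(Z_1,\dots,Z_n) = \norm{\boundedProfileOfNoisyHist - \E{\boundedProfileOfNoisyHist}}_1$ — which has bounded difference $2/d$ per coordinate — yields concentration within $\sqrt{2\ln(1/\eta)/d}$ of its mean, after noting $\sum (2/d)^2 = 4n/d^2$ and the hypothesis $n \ge B$ (hence the $n$ in the exponent cancels appropriately). For $\ell_2$, I would bound $\E{\norm{\boundedProfileOfNoisyHist - \E{\boundedProfileOfNoisyHist}}_2} \le \sqrt{\E{\norm{\cdot}_2^2}} = \sqrt{\sum_t \Var{\boundedProfileOfNoisyHist[t]}} \le \sqrt{\sum_t \E{\boundedProfileOfNoisyHist[t]}/d} \le \sqrt{1/d}$ since the means sum to (at most) $1$, then again apply McDiarmid to $g = \norm{\cdot}_2$ with bounded difference $2/d$.

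For the $\ell_\infty$ bound I would switch to a per-coordinate Bernstein inequality plus a union bound over the $n + 2B + 1 = O(n)$ relevant coordinates: each $\boundedProfileOfNoisyHist[t]$ is $(1/d)\sum_j \mathds{1}[\privHist[j] = t]$ with $\Var{\boundedProfileOfNoisyHist[t]} \le \E{\boundedProfileOfNoisyHist[t]}/d \le (\text{max per-item probability})/d$; since the per-item probability that noise lands on a given value is at most $\frac{1-e^{-\eps}}{1+e^{-\eps}} = 1/\ProbNorm$ (as $B \to$ the relevant range), Bernstein gives a tail of the form $\sqrt{\frac{2}{d\ProbNorm}\ln\frac{n}{\eta}} + \frac{1}{3d}\ln\frac{n}{\eta}$, matching the claimed expression; a union bound over the coordinates absorbs the $\ln n$. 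The main obstacle I anticipate is the bookkeeping around the conditioning on $\cE_B$: one must verify that conditioning on the (high-probability) event that all noises are bounded by $B$ does not meaningfully distort the means or variances — either by absorbing the $O(\eta)$-probability bad event into the failure probability $\eta$, or by showing the conditional and unconditional moments differ negligibly given the choice of $B$ in Theorem~\ref{theorem: property of profile approximation algorithm}. A secondary subtlety is making the variance bound $\Var{\boundedProfileOfNoisyHist[t]} \le \E{\boundedProfileOfNoisyHist[t]}/d$ rigorous under the conditioning, since conditioning can break independence of the indicators across items — but since $\cE_B$ factorizes over the $n$ items (it is an intersection of per-item events), independence is preserved and the argument goes through.
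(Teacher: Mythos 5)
Your overall strategy is the same as the paper's: identify $\TransMatrix\times\profile$ with the (conditional) expectation of $\boundedProfileOfNoisyHist$ via Lemma~\ref{lemma: expectation in matrix form}, bound the expected deviation coordinatewise using $\E{|X-\E{X}|}\le\sqrt{\Var{X}}$ together with $\Var{\boundedProfileOfNoisyHist[t]}\le\E{\boundedProfileOfNoisyHist[t]}/d$, then obtain concentration by McDiarmid for the $\ell_1$ and $\ell_2$ norms and by Bernstein plus a union bound over the $O(n)$ coordinates for $\ell_\infty$; your observation that $\cE_B$ factorizes over items, so conditioning preserves independence of the per-item indicators, is also exactly how the truncation is handled in the paper.

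There is, however, a concrete error in your McDiarmid step. The vector $\boundedProfileOfNoisyHist$ is a function of the $d$ independent per-item noises (every $\ell\in\DataDomain$ receives noise, including the items with $\hist[\ell]=0$, and all of them contribute to coordinates of $\boundedProfileOfNoisyHist$), not of $n$ noise variables. With $d$ variables of bounded difference $2/d$ the sum of squared differences is $4/d$, and McDiarmid gives failure probability exactly $\eta$ at deviation $\sqrt{2\ln(1/\eta)/d}$, which is the paper's calculation. With your count of $n$ variables the exponent becomes $-\lambda^2 d^2/(2n)$, which at the claimed deviation yields only $\eta^{d/n}$; this is not at most $\eta$ when $n>d$, and the hypothesis $n\ge B$ does not repair it --- nothing ``cancels'' in the exponent. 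The fix is simply to count the independent variables correctly. Two smaller points: for the $\ell_2$ bound the bounded difference of $\norm{\cdot}_2$ under a change of one item's noise is $\sqrt{2}/d$ (two coordinates each move by $1/d$), not $2/d$; with $2/d$ you only reach failure probability $\sqrt{\eta}$ at the stated radius $\sqrt{\ln(1/\eta)/d}$. And in the $\ell_\infty$ argument the maximal per-coordinate mean should be bounded by $1/\ProbNorm$ exactly (the mode of the truncated discrete Laplace has mass $e^{0}/\ProbNorm$), rather than by $\frac{1-e^{-\eps}}{1+e^{-\eps}}$, which is what produces the stated constant.
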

    The proof of Lemma~\ref{lemma: bounds on matrix norms} and Lemma~\ref{lemma: deviation of profile of noisy histogram} are included in the Appendix~\ref{appendix: sec: error analysis}. 
    Finally, combining these lemmas, 
    the assumption of~$B$ in Theorem~\ref{theorem: property of profile approximation algorithm},
    and that 
    $
        \ProbNorm 
        = \frac{ 
            1 + e^{ -\eps } - 2 \cdot (e^{ -\eps })^{ B + 1 }
        }{ 
            1 - e^{ -\eps } 
        }
    $
    finishes the proof.
\end{proof}

\paragraph{Error from Rounding.}
The full proof for Lemma~\ref{lemma: error of rounding} is provided in the Appendix~\ref{appendix: sec: error analysis}. 
Here is a brief outline of the proof for the case of~$p = 1$ to offer some insight. 
The proof for the cases of~$p = 2, \infty$ are similar but much more technical.
The algorithm is divided into three phases:
\begin{itemize}[leftmargin=4.5mm, topsep=1pt, itemsep=1pt, partopsep=1pt, parsep=1pt]
    \item Phrase 1 (Line~\ref{line: round to zero}): In this phase, the algorithm rounds~$\estProfile[t]$ to~$0$ for each~$t \in \IntSet{-B}{n + B} \setminus \IntSet{0}{n}$.

    \item Phase 2 (Line~\ref{line: large set} to Line~\ref{line: round from below}): In this phase, the algorithm clips the~$\estProfile[t]$ to the range of~$[0, 1]$.

    \item Phase 3 (Line~\ref{line: compute adjust amount} to Line~\ref{line: end of fix}): In this phase, the algorithm adjusts the~$\estProfile[t]$ so that they sum up to~$1$.
\end{itemize}
It is evident that the quantity~$\card{\estProfile[t] - \profile[t]}$ only decreases in Phases 1 and 2, as~$\profile[t] \in [0, 1]$. 
Further it can be shown that in Phase 2,~$\| \estProfile - \profile \|_1$ decreases by an amount of~$-\surplus + \deficit$, and in Phase 3,~$\| \estProfile - \profile \|_1$ can increase by at most~$\surplus + \deficit$. Hence overall,~$\| \estProfile - \profile \|_1$ decreases.

\subsection{Comparison with Previous Approach}
\label{subsec: comparison}

\citet*{DworkNPRY10} also studied the profile estimation (referred to as the \emph{$t$-incidence}) problem, with a key difference from our work: their approach aimed for pan-privacy, which resulted in significantly more noise added to the histogram \(\vec{h} \in [0 .. n]^d\). 
To simplify the comparison, we assume a sampling step involved in their method has already been executed, and focus on the full histogram of sampled items. 
Let \(\tilde{h}\) denote the noisy histogram. 
For each \(\ell \in [d]\), their requirement was that
\[
\big(
    \max_{t \in [0 .. n]} \Pr[\tilde{h}[\ell] = t]
\big) \big/ \big(
    \min_{t \in [0 .. n]} \Pr[\tilde{h}[\ell] = t]
\big)
\le e^\eps.
\]
Specifically, they used noise scaling with \(n\): \(\tilde{h}[\ell] \leftarrow \vec{h}[\ell] + Z \pmod{n + 1}\), where \(Z\) is a noise variable such that \(\Pr[Z = t] \propto e^{- \eps \cdot t / (n + 1)}\). 

In contrast, our paper considers adding symmetric discrete Laplace noise \(\mathbb{DL}ap(e^{-\eps})\) to each entry, which is fundamental and widely used for publishing privatized histograms. Our technique is applicable even in the pan-private setting, with appropriately scaled noise.

Another minor difference lies in the optimization problem: their linear program (LP) aims to achieve solutions with \(\ell_\infty\) error, whereas our algorithm supports \(\ell_1/\ell_2/\ell_\infty\) errors. Moreover, they did not provide an almost linear-time algorithm for solving their LP. However, the almost linear-time algorithm developed in our paper can solve their LP as well, as it also involves a circulant matrix.

\vspace{-2mm}
\section{Lower Bound}
\label{sec: lower bound}

In this section, we discuss the lower bound in Theorem~\ref{theorem: accuracy lower bound of profile estimation}. 
We will begin by providing the formal definition of \emph{updatable} histogram publishing algorithms and the \emph{sensitivity} of the profile reconstruction algorithms when composed with the private histogram publishing algorithm.
To establish the lower bound, we will then present a reduction from the inner product estimation problem in the two-party setting.

Our lower bound applies to an algorithm $\mathcal{A}$ that is updatable, meaning that there exists an update algorithm~$\cU$, s.t., given the output $\mathcal{A}(\hist)$ and some other histogram $\vec{h}'$, $\cU$ can adjust the output directly (by computing $\cU(\cA(\hist), \privHist)$), as if the input to $\mathcal{A}$ is $\vec{h} + \vec{h}'.$
Formally, 

\begin{definition}[Updatable Algorithm]
    A (randomized) algorithm~$\cA : \IntSet{0}{n}^d \rightarrow \cY$ is called updatable, if there exists another (randomized) algorithm~$\cU : \cY \times \IntSet{0}{n}^d \rightarrow \cY$, the \emph{update algorithm}, such that for every~$\hist, \hist' \in \IntSet{0}{n}^d$, $\cU(\cA(\hist), \hist')$ has the same distribution as~$\cA(\hist + \hist')$.
\end{definition}

\begin{example}
    If $\cA$ is the discrete Laplace mechanism, then it is updatable.
    Let~$\cU$ be the vector addition operation:
    $
        \cU(\cA(\hist), \hist')
        = \hist + \DiscreteLapNoise{e^{-\eps}}^d + \hist'
        = \cA(\hist +  \hist'), 
    $
    where~${\DiscreteLapNoise{e^{-\eps}}}^d \in \Z^d$ is a vector composed of independent discrete Laplace random variables. 
\end{example}

When~$\cA$ represents a randomized algorithm, it can be conceptualized as a function, denoted as~$F_A$, acting on both the input of~$\cA$ and the randomness inherent in its execution (commonly referred to as ''coin tosses'' in the literature). 
This randomness is formally represented by the random variable~$C_A$.
In what follows, denote~$\cR: \cY \rightarrow [0, 1]^{n + 1}$ as an algorithm trying to reconstruct the profile of~$\hist$ when given~$\cA ( \hist ).$

\begin{definition}
    \label{def: composed sensitivity}
    The sensitivity of~$\cR \circ \cA$ is defined as the maximum deviation of~$\cR \circ F_A$, over all possible realization~$c_A$ of~$C_A$ and all possible pair of neighboring inputs of~$\cA:$

    \vspace{-3mm}
    \resizebox{\linewidth}{!}{
      \begin{minipage}{1.08\linewidth}
        \begin{equation*}
            \Delta_{\cR \circ \cA} \doteq \max_{c_A} \max_{\hist \sim \hist'} 
            \norm{
                \cR \PAREN{ F_A \PAREN{ \hist, c_A } } 
                - \cR \PAREN{ F_A \PAREN{ \hist', c_A } }
            }_\infty, 
        \end{equation*}
      \end{minipage}
    }
    where the second maximum is over all~$\hist, \hist' \in \IntSet{0}{n}^d$ such that~$\| \hist - \hist' \|_1 \le 1 $
\end{definition}

In Appendix~\ref{appendix: sec: lower bound}, we show that when \(\cA\) is the discrete Laplace mechanism, \(\cR\) is the profile reconstruction algorithm stated in Theorem~\ref{theorem: property of profile approximation algorithm}, and \(\frac{1}{\sqrt{d}} \in o \bigparen{ \big( \frac{e^\eps - 1}{e^\eps + 1} \big)^2 \cdot \frac{\eps}{e^{c \cdot \eps} }}\), \(\Delta_{\cR \circ \cA}\) satisfies \(\Delta_{\cR \circ \cA} \in o \big( \frac{1}{\sqrt{d}} \cdot \frac{\eps}{e^{c \cdot \eps}} \big)\). Hence, the lower bound in Theorem~\ref{theorem: accuracy lower bound of profile estimation} applies to our algorithms.

\paragraph{Two Party Differential Privacy.}
We will prove the lower bound by establishing a connection between the profile estimation problem and the inner product estimation problem in the two party differential privacy setting~\citep{McGregorMPRTV10}. 

Two players, Alice and Bob, equipped with unrestricted computational capabilities, collaborate to compute the inner product $\inner{\vec{x}}{\vec{y}}$, 
where $\vec{x} \in \set{-1, 1}^d$ is the binary string held by Alice, and $\vec{y} \in \set{-1, 1}^d$ is one held by Bob.
The collaboration follows a pre-agreed protocol, denoted as $\Pi$.
In this protocol, they alternate turns to communicate with each other. 
During each player's turn, the communicated information is determined by their own vector, the cumulative communication up to that point, and possibly their randomness.
The sequences of messages, which collectively form the protocol's transcript, are denoted as $\Pi$.
To protect the data of both parties, it is required that each party's view of $\Pi$, as a function of the other party's data, should be differentially private. 

Formally, the view of Alice, 
denoted as~$\viewA : \set{-1, 1}^d \rightarrow \cZ$ 
(where~$\cZ$ is the set of all possible sequences of messages), is regarded as a (randomized) algorithm with input string~$\vec{y}$ when~$\vec{x}$ is fixed.
The view~$\viewA$ is~$\eps$-DP, if for each measurable~$Z \subseteq \cZ$ and every pair of neighboring vector~$\vec{y}$ and~$\vec{y}'$ s.t.~$\norm{ \vec{y} - \vec{y}' } = 1$, it holds that
\begin{equation*}
    \begin{array}{c}
        \P{ \viewA (\vec{y}) \in Z } 
            \le e^{\eps} \cdot \P{ \viewA (\vec{y}') \in Z }. 
    \end{array}
\end{equation*}
We can define the view of Bob 
$\viewB: \set{-1, 1}^d \rightarrow \cZ$ 
and the~$\eps$-DP property of~$\viewB$ in a similar manner. 
The protocol~$\Pi$ is called~$\eps$-DP, if~$\viewA$ is~$\eps$-DP for all possible~$\vec{x}$, and~$\viewB$ is~$\eps$-DP for all possible~$\vec{y}$.
\citet{McGregorMPRTV10} established a lower bound on inner product estimation, which was further refined by \citet{HaitnerMST22}.

\vspace{1mm}
\begin{fact}[The Two Party Lower Bound~\citep{HaitnerMST22}]
    \label{fact: two party accuracy lower bound}
    An $\eps$-DP two-party protocol that, for some~$\ell \ge \log d$, estimate the inner product over $\set{\text{-1}, 1}^d \times \set{\text{-1}, 1}^d$ up to an additive error~$\ell$ with probability $c \cdot e^{c \cdot \epsilon} \cdot \ell / \sqrt{d}$ (for some universal constant $c > 0$), can be used to construct a key-agreement protocol.
\end{fact}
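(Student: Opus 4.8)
The plan is to follow \citet{HaitnerMST22}, which sharpens the $\eps=O(1)$ reduction of McGregor, Mironov, Pitassi, Reingold, Talwar and Vadhan: one shows that an $\eps$-DP protocol whose accuracy beats the trivial bound for inner product can be compiled into a key-agreement (KA) protocol. The relevant baseline is that always outputting $0$ already lands within $\ell$ of $\inner{\vec{x}}{\vec{y}}$ with probability $\Theta(\ell/\sqrt{d})$, since for uniform $\vec{x},\vec{y}\in\set{-1,1}^d$ the inner product is concentrated in a window of width $\Theta(\sqrt{d})$ about $0$; the factor $e^{c\eps}$ records how far above this baseline privacy is allowed to reach. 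Since the hypothesized estimator may succeed only with small probability, I would first amplify it by running $O(1)$ (or polylogarithmically many) independent, freshly-randomized copies on the same inputs and aggregating, which by composition stays $O(\eps)$-DP while boosting the success margin; the assumption $\ell\ge\log d$ guarantees the aggregated estimate has enough resolution to survive the union bounds below.

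The crux is to distill from one execution a \emph{correlated source}: a joint distribution on triples $(V_A,V_B,V_E)$ --- Alice's view, Bob's view, and the transcript $\Pi$ seen by the eavesdropper --- together with a bit $S$ such that, post-processing their own views, Alice and Bob agree on $S$ noticeably more often than chance, while from $\Pi$ alone the eavesdropper's advantage in predicting $S$ is only $O(e^{c\eps}\ell/\sqrt{d})$. The honest agreement comes from accuracy: if the transcript did not (implicitly) pin down which side of a chosen threshold $\inner{\vec{x}}{\vec{y}}$ falls on, the estimate would be wrong too often. The cap on the eavesdropper's advantage comes from $\eps$-DP via a coordinate-resampling/hybrid argument: flipping the bounded set of ``pivot'' input coordinates that move $S$ across its threshold changes the law of $\Pi$ by a multiplicative factor that group privacy controls (a factor $e^{\eps}$ per flipped coordinate), so $S$ is nearly independent of $\Pi$. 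Because a two-party transcript already makes $\vec{x}$ and $\vec{y}$ conditionally independent given $\Pi$, building $V_A,V_B,S$ so that a genuine honest-over-eavesdropper correlation survives requires routing that correlation through the parties' private coins and, if necessary, a few reconciliation rounds appended to $\Pi$; getting this construction right and analyzing it with the sharp $e^{c\eps}$ dependence is the main content.

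Given such a source --- in which the honest-agreement probability strictly exceeds the eavesdropper's guessing probability, as the amplified accuracy guarantee ensures against the bound $O(e^{c\eps}\ell/\sqrt{d})$ under the hypothesis --- one then applies the standard compiler from correlated randomness to key agreement: parallel repetition to widen the gap, information reconciliation so that Alice and Bob hold identical strings, and privacy amplification (randomness extraction) to destroy the eavesdropper's residual knowledge. The output is the claimed KA protocol.

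The hard part is the correlated-source step with the tight $e^{c\eps}$ factor: one must argue simultaneously that accuracy \emph{forces} the transcript to encode the sign of $\inner{\vec{x}}{\vec{y}}$ relative to the threshold and that $\eps$-DP \emph{forbids} it, and reconcile this tension only through a carefully chosen family of pivot coordinates and a tight hybrid over their resamplings --- a loose black-box bound such as a mutual-information estimate would forfeit the $e^{c\eps}$ scaling. That analysis, together with the privacy bookkeeping under composition and the weak-to-strong key-agreement amplification, is the technical heart of \citet{HaitnerMST22}.
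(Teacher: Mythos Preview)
The paper does not prove this statement at all: it is stated as a \emph{Fact}, attributed to \citet{HaitnerMST22}, and invoked as a black box in the proof of Theorem~\ref{theorem: accuracy lower bound of profile estimation}. There is therefore no ``paper's own proof'' to compare your sketch against; the authors simply quote the result and use it.

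Your sketch is a reasonable high-level cartoon of the reduction strategy in the cited work, but it is not a proof and has at least one soft spot. The amplification step --- running independent copies on the same inputs --- degrades privacy under composition, so you cannot freely boost the success probability while keeping the privacy parameter at $O(\eps)$; the accounting here has to be done carefully, and in fact the cited construction works directly with a \emph{weak} key-agreement primitive rather than first amplifying accuracy. More substantively, the heart of the argument --- extracting a correlated bit that the parties share but the transcript hides, with the sharp $e^{c\eps}$ dependence --- is exactly what your last paragraph acknowledges as ``the hard part,'' and your sketch does not actually carry it out. Since the present paper treats the fact as imported, the appropriate move is simply to cite \citet{HaitnerMST22} rather than attempt a self-contained proof.
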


\vspace{-1mm}
Since an information theoretically secure key agreement does not exist~\citep{HaitnerMST22}, Fact~\ref{fact: two party accuracy lower bound} implies that such protocols do not exist either.

\vspace{-3mm}
\begin{proof}[Sketch Proof of Theorem~\ref{theorem: accuracy lower bound of profile estimation}.]
Based on Fact~\ref{fact: two party accuracy lower bound}, we present a sketch proof for the theorem.
The complete proof is included in the Appendix~\ref{appendix: sec: lower bound}.

We will prove the theorem by contraction. 
Assume there exists an $\eps$-DP and updatable algorithm~$\cA : \IntSet{0}{n}^d \rightarrow \cY$, profile estimation algorithm~$\cR : \cY \rightarrow [0, 1]^{n + 1}$, such that, for every input~$\hist \in \IntSet{0}{n}^d$, 
\begin{equation*}
    \begin{array}{c}
        \mathbb{E} [ \Vert \profile - \cR \paren{ \cA(\hist) } \Vert_\infty ] 
            \in o \bigparen{ {1} /  \paren{\sqrt{d} \cdot e^{c \cdot \eps}} }.
    \end{array}
\end{equation*}

We will construct an~$\eps$-DP protocol~$P$ such that for every~$\vec{x}$ and~$\vec{y}$,
Bob's output (denoted by $m_B$) satisfies 
\begin{equation}
    \label{eq: accuracy contradiction}
    \begin{array}{c}
        \P{ \card{m_B - \inner{\vec{x}}{\vec{y}}} 
            \ge \err{} 
        } 
        \in o(1), 
    \end{array}
\end{equation}
where $\err{} = \sqrt{d} / \paren{2 \cdot c \cdot e^{c \cdot \eps} }$,
contradicting Fact~\ref{fact: two party accuracy lower bound}.

The protocol is outlined by Algorithm~\ref{algo: two party protocol},
where~$\LapNoise{1 / \eps}$ denotes a random variable following the Laplace distribution with probability density~$p(z) = \frac{\eps}{2} \cdot \exp \PAREN{ - \eps \cdot \card{z} }, \forall z \in \R.$
    
\setlength{\textfloatsep}{2pt}%
\begin{algorithm}[!t]
    \caption{Protocol~$P$}
    \label{algo: online sampling algorithm}
    \label{algo: two party protocol}
    {\it Alice}: {\bf send} $m_A \leftarrow \cA(\vec{x} + \vecAllOne)$ to Bob.
    
    \vspace{1mm}
    {\it Bob}: $\circ$ on receiving~$m_A$, compute~$\estProfile \leftarrow \cR \paren{ \cU \paren{ m_A, \vec{y} + \vecAllOne) } }$
    
    \hspace{6.7mm} $\circ$ {\bf publish} $d \PAREN{ \estProfile[4] + \estProfile[0] - \estProfile[2] + 3\Delta_{\cR \circ \cA} \cdot \LapNoise{1 / \eps} }$
\end{algorithm}

Since $\vec{x}, \vec{y} \in \set{\text{-}1, 1}^d$ and $\vec{x} + \vecAllOne, \vec{y} + \vecAllOne \in \set{0, 2}^d$, 
it can be verified that 
\[
    \hspace{-2mm}
    \begin{array}{rl}
        \inner{\vec{x}}{\vec{y}} \hspace{-3mm}
            &= \card{ \set{ \ell \in [d] : \vec{x} [\ell] = \vec{y} [\ell]} } 
            - \card{ \set{ \ell \in [d] : \vec{x} [\ell] \neq \vec{y} [\ell]} } \\
            &= \card{ \set{ \ell \in [d] : \paren{\vec{x} + \vecAllOne} [\ell] = \paren{\vec{y} + \vecAllOne} [\ell] } } \\
            &\, - \card{ \set{ \ell \in [d] : \paren{\vec{x} + \vecAllOne} [\ell] \neq \paren{\vec{y} + \vecAllOne} [\ell] } } \\
            &= d \cdot \paren{\profile[4] + \profile[0] - \profile[2]}.
    \end{array}
\]
Via the accuracy assumption of $\cR$ and the sensitivity assumption of $\Delta_{\cR \circ \cA}$ (in Theorem~\ref{theorem: accuracy lower bound of profile estimation}), it holds that $m_B \doteq d \cdot \PAREN{ \estProfile[4] + \estProfile[0] - \estProfile[2] + 3 \cdot \Delta_{\cR \circ \cA} \cdot \LapNoise{1 / \eps} }$ is an estimate of $d \cdot \paren{\profile[4] + \profile[0] - \profile[2]}$ with expected error $o \paren{ \sqrt{d}  / {e^{c \cdot \eps}} }$:
\begin{equation*}
    \begin{array}{c}
        \E{\card{m_B - d \cdot \paren{\profile[4] + \profile[0] - \profile[2]}}}
            \in o \paren{ \sqrt{d}  / {e^{c \cdot \eps}} }.    
    \end{array}
\end{equation*}
Combing $d \cdot \paren{\profile[4] + \profile[0] - \profile[2]} = \inner{\vec{x}}{\vec{y}}$ with Markov inequality proves Equation~\eqref{eq: accuracy contradiction}.

Finally, we verify that $P$ is $\eps$-DP. 
Clearly $\cA(\vec{x} + \vecAllOne)$ is $\eps$-DP. 
The output of Bob, $d \cdot \paren{ \estProfile[4] + \estProfile[0] - \estProfile[2] + 3 \cdot \Delta_{\cR \circ \cA} \cdot \LapNoise{1 / \eps} }$ is also $\eps$-DP. 
This is because, by the definition of $\Delta_{\cR \circ \cA}$, it serves as an upper bound on the sensitivity of $\estProfile[4], \estProfile[0]$ and $\estProfile[2]$.

\vspace{-0.3cm}
\end{proof}

\vspace{-4mm}
We provide some further comments on the lower bound.
First, it is also possible to prove a lower bound of 
\begin{equation}
    \label{eq: alternative accuray lower bound}
    \begin{array}{c}
        \E{ \norm{ \profile - \cR \paren{ \cA(\hist) } }_\infty } 
        \in o \PAREN{ \frac{1}{\sqrt{d} \cdot \log d} \cdot \frac{1}{e^\eps} },
    \end{array}
\end{equation}
for Theorem~\ref{theorem: accuracy lower bound of profile estimation}, via a similar reduction to the two-party inner product lower bound accuracy lower bound by~\citet{McGregorMPRTV10}.
In particular, they showed that with probability \(1 - \delta\), inner product estimation in a two-party setting incurs an additive error of 
\(
    \Omega \PAREN{ \frac{\sqrt{d}}{\log d} \cdot \frac{\delta}{e^\eps} }
\).
Compared to Equation~\eqref{eq: accuray lower bound}, Equation~\eqref{eq: alternative accuray lower bound} avoids a constant $c$ blow up in the exponent, at the cost of an addition factor of $\log d$ in the denominator.

Second, Theorem~\ref{theorem: accuracy lower bound of profile estimation} assumes \(\Delta_{\cR \circ \cA} \in o \big( \frac{1}{\sqrt{d} } \cdot \frac{\eps}{e^{c \cdot \eps}} \big)\). We considered removing this assumption because it appears that if \(\Delta_{\cR \circ \cA} \in \Omega \big( \frac{1}{\sqrt{d} } \cdot \frac{\eps}{e^{c \cdot \eps}} \big)\), it would imply that 
\(
    \mathbb{E}[ \Vert \profile - \cR \paren{ \cA(\hist) } \Vert_\infty ]
            \in \Omega \big( \frac{1}{\sqrt{d} \cdot e^{c \cdot \eps}} \big)
\)
for some \(\hist\). However, this is not the case. Since \(\cA\) is a randomized algorithm, \(\Delta_{\cR \circ \cA}\) is defined over all randomness \(c_A\) of \(\cA\) as per Definition~\ref{def: composed sensitivity}. It is possible that \(\Delta_{\cR \circ \cA} \in \Omega \big( \frac{1}{\sqrt{d} } \cdot \frac{\eps}{e^{c \cdot \eps}} \big)\) for the \(c_A\)'s with very small probabilities. Therefore, we cannot derive a lower bound for 
\(
    \mathbb{E}[ \Vert \profile - \cR \paren{ \cA(\hist) } \Vert_\infty ].
\)

Finally, our current lower bound applies to an algorithm \(\cA\) that protects the number of occurrences of a single element by 1. 
It extends naturally to an algorithm \(\cA\) that protects the number of occurrences of a single element by an arbitrary amount, as long as it satisfies the other assumptions in Theorem~\ref{theorem: accuracy lower bound of profile estimation}.

\vspace{-4mm}
\section*{Impact Statement}%
\vspace{-2mm}
The goal of this work is to advance privacy-preserving data analysis and machine learning in distributed and large-scale settings.
Generally speaking this work adds to the toolbox of ways in which we can incorporate privacy into system designs.
There are many potential societal consequences of our work, none of which we feel must be specifically highlighted here.

\vspace{-4mm}
\section*{Acknowledgements}
\vspace{-2mm}
We thank the anonymous reviewers for their feedback which helped improve the paper.
The authors are supported by Providentia, a Data Science Distinguished Investigator grant from Novo Nordisk Fonden and affiliated with Basic Algorithms Research Copenhagen (BARC), supported by the VILLUM Foundation grant 16582.

\bibliography{reference}
\bibliographystyle{icml2024}

\newpage
\appendix
\onecolumn

\section{Probability Inequalities}

\begin{definition}[Lipschitz Condition] \label{def:Lipschitz-Condition} 
Consider an arbitrary set~$\cZ$.
A function $\Delta: \cZ^d \rightarrow \R$ satisfies the Lipschitz condition with bound $c \in \R$ if, for every $i \in [d]$ and for every sequence of values $x_1, \ldots, x_d \in \cZ$ and $y_i \in \R$, 
$$
    | \Delta( x_1, x_2, \ldots, x_{i - 1}, x_i, x_{i + 1}, \ldots, x_d ) - \Delta( x_1, x_2, \ldots, x_{i - 1}, y_i, x_{i + 1}, \ldots, x_d ) | \le c.
$$
\end{definition}

\begin{fact}[McDiarmid’s Inequality \citep{MU17, TIO17}] \label{fact:McDiarmid-Inequality}
Let $\Delta: \cZ^d \rightarrow \R$ be a function that satisfies the Lipschitz condition with bound~$c \in \R$. 
Let~$X_1, \ldots, X_d$ be independent random variables from the set~$\cZ$
Then for all $\lambda \ge 0$,
$$
    \P{ \Delta(X_1, \ldots, X_d) - \E{ \Delta(X_1, \ldots, X_d) } \ge \lambda } 
        \le \exp \PAREN{ - \frac{ 2 \lambda^2 } { d c^2 } }\,.   
$$
\end{fact}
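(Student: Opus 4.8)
The plan is to establish this through the standard Doob-martingale (bounded-differences) argument. First I would fix the independent random variables $X_1, \dots, X_d$ and form the Doob martingale $Z_i \doteq \mathbb{E}[\Delta(X_1, \dots, X_d) \mid X_1, \dots, X_i]$ for $i = 0, 1, \dots, d$, so that $Z_0 = \mathbb{E}[\Delta(X_1, \dots, X_d)]$ and $Z_d = \Delta(X_1, \dots, X_d)$; the quantity to be bounded is exactly the deviation $Z_d - Z_0$. That $(Z_i)$ is a martingale with respect to the filtration $\mathcal{F}_i$ generated by $X_1, \dots, X_i$ follows immediately from the tower property, since $\mathbb{E}[Z_i \mid \mathcal{F}_{i-1}] = Z_{i-1}$.

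The crux is to show that, conditioned on $\mathcal{F}_{i-1}$, each increment $D_i \doteq Z_i - Z_{i-1}$ lies in an interval of width at most $c$. Using independence, write $g_i(x_1, \dots, x_i) \doteq \mathbb{E}[\Delta(x_1, \dots, x_i, X_{i+1}, \dots, X_d)]$; independence is precisely what guarantees that integrating out $X_{i+1}, \dots, X_d$ does not depend on the revealed values $x_1, \dots, x_i$. Then $D_i = g_i(X_1, \dots, X_i) - \mathbb{E}_{X_i'}[g_i(X_1, \dots, X_{i-1}, X_i')]$, so conditionally on $\mathcal{F}_{i-1}$ it is sandwiched between $\inf_u g_i(X_1, \dots, X_{i-1}, u)$ and $\sup_u g_i(X_1, \dots, X_{i-1}, u)$ after subtracting the common constant $\mathbb{E}_{X_i'}[g_i(\cdots)]$; and the Lipschitz condition, applied inside the expectation over the remaining coordinates (a one-coordinate swap / coupling argument), gives $\sup_u g_i(\cdots, u) - \inf_u g_i(\cdots, u) \le c$.

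With the conditional ranges bounded by $c$, I would finish with the exponential-moment method: for $s > 0$, $\Pr[Z_d - Z_0 \ge \lambda] \le e^{-s\lambda}\,\mathbb{E}[e^{s(Z_d - Z_0)}]$, and since $e^{s(Z_d - Z_0)} = \prod_{i=1}^{d} e^{sD_i}$ one peels off the innermost factor by conditioning on $\mathcal{F}_{d-1}$ and applying Hoeffding's lemma: a conditionally mean-zero random variable supported (conditionally) on an interval of width $c$ satisfies $\mathbb{E}[e^{sD_i} \mid \mathcal{F}_{i-1}] \le e^{s^2 c^2 / 8}$. Iterating this gives $\mathbb{E}[e^{s(Z_d - Z_0)}] \le e^{s^2 d c^2 / 8}$, and choosing $s = 4\lambda/(d c^2)$ in $e^{-s\lambda + s^2 d c^2 / 8}$ yields the claimed bound $\exp(-2\lambda^2/(d c^2))$.

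I expect the main obstacle to be the second step --- pinning down that the conditional range of each martingale increment is at most $c$ --- since that is the only place where both hypotheses (mutual independence of the $X_i$ and the bounded-differences property of $\Delta$) are genuinely used, and it requires some care in handling the expectation over the not-yet-revealed coordinates. The remaining pieces (the martingale property and the Azuma--Hoeffding-type tail bound via Hoeffding's lemma and Chernoff's inequality) are routine.
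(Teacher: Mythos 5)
Your proposal is correct: it is the standard Doob-martingale (bounded-differences) argument, with the conditional increment range bounded by $c$ via independence plus the one-coordinate Lipschitz property, Hoeffding's lemma giving $\mathbb{E}[e^{sD_i}\mid\mathcal{F}_{i-1}]\le e^{s^2c^2/8}$, and the choice $s=4\lambda/(dc^2)$ indeed yielding $\exp(-2\lambda^2/(dc^2))$. The paper does not prove this fact at all --- it is quoted as a known result with citations --- and your argument coincides with the canonical proof given in those references.
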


\begin{fact}[Bernstein’s Inequality \citep{FL06, AMS09}] \label{fact: bernstein} 
    Let $X_1, \ldots, X_d$ be independent real-valued random variables such that $|X_i| \le c$ with probability one. 
    Let $S_d = \sum_{i \in [d]} X_i$ and $\Var{ S_d }  = \sum_{ i \in [d] } \Var{X_i}$. Then for all $\eta \in (0, 1)$, 
    $$
        \left| S_d - \E{ S_d } \right| 
            \le \sqrt{ 2 \Var{ S_d } \ln \frac{2}{\eta} } + \frac{2c \ln \frac{2}{\eta}}{ 3}\,,
    $$
    with probability at least $1 - \eta$. 
\end{fact}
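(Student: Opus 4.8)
The plan is to derive this tail bound by the classical Chernoff (moment‑generating‑function) method, exactly as in the cited references. Since the claim is two‑sided, I would first reduce to a one‑sided estimate: it suffices to show that each of $\Pr[S_d - \E{S_d} \ge \lambda]$ and $\Pr[\E{S_d} - S_d \ge \lambda]$ is at most $\eta/2$ for $\lambda \doteq \sqrt{2\Var{S_d}\ln(2/\eta)} + \tfrac{2c}{3}\ln(2/\eta)$, and then combine them by a union bound. The two one‑sided statements are symmetric (apply the upper‑tail bound to $-X_1,\dots,-X_d$), so I focus on $\Pr[S_d-\E{S_d}\ge\lambda]$.

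The core ingredient is a per‑variable MGF estimate. Set $Y_i \doteq X_i - \E{X_i}$, so $\E{Y_i}=0$ and $\Var{Y_i}=\Var{X_i}$. From $\E{Y_i}=0$, the identity $\E{Y_i^2}=\Var{Y_i}$, the bound on $|Y_i|$ (a constant times $c$), and the elementary fact $2\cdot 3^{k-2}\le k!$ for $k\ge 2$, one obtains the moment bound $\E{Y_i^k} \le \tfrac{k!}{2}\,\Var{Y_i}\,(c/3)^{k-2}$ for every integer $k \ge 2$. Plugging this into $\E{e^{\lambda Y_i}} = 1 + \sum_{k\ge 2}\tfrac{\lambda^k}{k!}\E{Y_i^k}$ and summing the resulting geometric series gives, for $0<\lambda<3/c$,
\[
  \E{e^{\lambda Y_i}} \le 1 + \frac{\Var{Y_i}\,\lambda^2/2}{1-\lambda c/3} \le \exp\!\left( \frac{\Var{Y_i}\,\lambda^2/2}{1-\lambda c/3} \right).
\]
By independence the MGF of $S_d-\E{S_d}=\sum_i Y_i$ is the product of these, so $\E{e^{\lambda(S_d-\E{S_d})}} \le \exp\!\big(\tfrac{\Var{S_d}\,\lambda^2/2}{1-\lambda c/3}\big)$. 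Applying Markov's inequality to $e^{\lambda(S_d-\E{S_d})}$ and optimizing $\lambda$ over $(0,3/c)$ — a one‑line computation — yields the familiar Bernstein form
\[
  \Pr[S_d-\E{S_d}\ge\lambda] \le \exp\!\left( - \frac{\lambda^2/2}{\Var{S_d}+c\lambda/3} \right).
\]

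It remains to invert this: requiring the right‑hand side to be at most $\eta/2$ is the quadratic inequality $\lambda^2 - \tfrac{2c}{3}\ln(2/\eta)\,\lambda - 2\Var{S_d}\ln(2/\eta) \ge 0$, whose positive root is at most $\sqrt{2\Var{S_d}\ln(2/\eta)} + \tfrac{2c}{3}\ln(2/\eta)$ (using $\sqrt{u+v}\le\sqrt u+\sqrt v$), which is exactly the claimed $\lambda$. Combining with the symmetric lower‑tail bound and the union bound finishes the proof. I expect the only real difficulty to be bookkeeping of constants: establishing the moment bound with the precise factor $c/3$ (this is what makes the linear term come out as $\tfrac{2c}{3}\ln(2/\eta)$ rather than something larger, and it is sensitive to whether $c$ is taken to bound $X_i$ itself or the centered $X_i-\E{X_i}$), verifying the geometric‑series summation on the correct range of $\lambda$, and carrying out the $\lambda$‑optimization and the quadratic inversion so that the two summands emerge with clean constants. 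None of this is conceptually deep.
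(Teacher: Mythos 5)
The paper never proves this statement --- it is quoted as a known Fact from the cited references --- so the comparison is with the standard textbook argument, which is indeed the Chernoff/MGF route you outline, and your reduction to one-sided tails, the geometric-series summation of the MGF, and the quadratic inversion via $\sqrt{u+v}\le\sqrt u+\sqrt v$ are all the right steps and are carried out correctly.

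The genuine gap is exactly the point you flag but do not resolve: the per-variable moment bound $\mathbb{E}[Y_i^k]\le\frac{k!}{2}\operatorname{Var}(X_i)\,(c/3)^{k-2}$ does \emph{not} follow from the stated hypothesis $|X_i|\le c$. The argument ``$|Y_i|\le$ bound, together with $2\cdot 3^{k-2}\le k!$'' requires $|Y_i|=|X_i-\mathbb{E}X_i|\le c$; under $|X_i|\le c$ one only gets $|Y_i|\le 2c$, and the same computation then yields $(2c/3)^{k-2}$, hence the sub-gamma bound $\exp\bigl(-\frac{\lambda^2/2}{\operatorname{Var}(S_d)+2c\lambda/3}\bigr)$ and, after inversion, the linear term $\frac{4c}{3}\ln\frac{2}{\eta}$ rather than the claimed $\frac{2c}{3}\ln\frac{2}{\eta}$. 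Moreover the stronger moment bound is simply false for $[-c,c]$-valued variables: for $X\in\{-c,+c\}$ with $\Pr[X=c]=p$ small, $\mathbb{E}[(X-\mathbb{E}X)^3]=2c(1-2p)\operatorname{Var}(X)$, which exceeds $\frac{3!}{2}\operatorname{Var}(X)\cdot\frac{c}{3}=c\operatorname{Var}(X)$ (and the corresponding MGF bound with denominator $1-\lambda c/3$ also fails), so this is not constant bookkeeping within your route --- the route as written proves a weaker inequality, or the stated one only under the stronger hypothesis $|X_i-\mathbb{E}X_i|\le c$ (e.g.\ $X_i$ taking values in an interval of length $c$, such as $[0,c]$, where the one-sided Bennett bound applies with parameter $c$). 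For what it is worth, this is harmless for the paper's only application (Lemma~\ref{lemma: deviation of profile of noisy histogram}), where the summands are $\frac1d\,\mathds{1}_{[\privHist[\ell]=t]}\in[0,1/d]$ so the centered variables are one-sidedly bounded by $c=1/d$; but to make your proof of the Fact as literally stated airtight you must either strengthen the hypothesis to a bound on the centered variables (or on the range length) or settle for the $\frac{4c}{3}\ln\frac{2}{\eta}$ term.
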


\section{Proofs for Section~\ref{sec: unfolding clipped histogram}}
\label{appendix: sec: unfolding clipped histogram}

\begin{proof}[Proof of Lemma~\ref{lemma: recover clipped histogram}]
    Our goal is to prove for all~$t \in \Z,$
    \begin{align}
        \P{ \privHist'[\ell] = t } = \P{ \hist[\ell] + Z = t}.
    \end{align}
    
    The equation holds trivially for each integer~$t \in \PAREN{0, n}$.
    By symmetry, we prove that the equation holds for each integer~$t \ge n.$
    In this case, 
    \begin{align*}
        \P{ \privHist'[\ell] = t }
            &= \P{ \ExpNoise{e^{-\eps}} = t - n } \cdot \P{ \DiscreteLapNoise{e^{-\eps}} \ge n } \\
            &= \PAREN{ 
                    \PAREN{1 - e^{-\eps}} \cdot e^{-\eps \cdot \PAREN{t - n}}
                } \cdot \PAREN{
                    \sum_{j \ge n - \hist[\ell]} \frac{1 - e^{-\eps}}{1 + e^{-\eps}} \cdot e^{-\eps \cdot j}
                } \\
            &= \PAREN{1 - e^{-\eps}} \cdot e^{-\eps \cdot \PAREN{t - n}}
                \cdot \frac{1 - e^{-\eps}}{1 + e^{-\eps}} 
                \cdot \frac{e^{-\eps \cdot \PAREN{n - \hist[\ell]}}}{1 - e^{-\eps}} \\
            &= \frac{1 - e^{-\eps}}{1 + e^{-\eps}} 
                \cdot e^{-\eps \cdot \PAREN{t - \hist[\ell]}} \\
            &= \P{ \hist[\ell] + Z = t}.
    \end{align*}
\end{proof}

\section{Proofs for Section~\ref{sec: Searching for Profiles}}
\label{appendix: sec: searching for profiles}

\begin{proof}[Proof of Lemma~\ref{lemma: expectation in matrix form}]
    Let $\TDiscreteLapNoise{e^{-\eps}}{B}$ denotes a random variable following truncated Laplace distribution: 
    $$
        \P{ \TDiscreteLapNoise{e^{-\eps}}{B} = t } = \frac{1}{\ProbNorm} \cdot e^{-\eps \cdot \card{t} }, 
        \, \forall t \in \IntSet{-B}{B},
    $$
    where
    \begin{equation} 
        \ProbNorm 
        =  1 + 2 \cdot \sum_{j = 1}^B e^{- \eps \cdot j} 
        =  1 + 2 \cdot \frac{ 
                    e^{ -\eps } - (e^{ -\eps })^{ B + 1 }
                }{ 
                    1 - e^{ -\eps } 
                }
        = \frac{ 
                    1 + e^{ -\eps } - 2 \cdot (e^{ -\eps })^{ B + 1 }
                }{ 
                    1 - e^{ -\eps } 
                }.
    \end{equation}
    Conditioned on~$\cE_{B}$, all random noises introduced by Algorithm~\ref{algo: private profile generator} follows truncated Laplace distribution. 
    Therefore, 
    \begin{align*}
        \cA \PAREN{\estProfile}[t] 
            &= \frac{1}{d} \cdot \sum_{k = 0}^{t + B} \sum_{j = 1}^{\estProfile[k] \cdot d} \indicator{k + \TDiscreteLapNoise{e^{-\eps}}{B} = t},\,
            \quad \forall t \in \IntSet{-B}{-1}. \\
        \cA \PAREN{\estProfile}[t] 
            &= \frac{1}{d} \cdot \sum_{k = t - B}^{t + B} \sum_{j = 1}^{\estProfile[k] \cdot d} \indicator{k + \TDiscreteLapNoise{e^{-\eps}}{B} = t},\,
            \quad \forall t \in \IntSet{0}{n}. \\
        \cA \PAREN{\estProfile}[t] 
            &= \frac{1}{d} \cdot \sum_{k = t - B}^{n} \sum_{j = 1}^{\estProfile[k] \cdot d} \indicator{k + \TDiscreteLapNoise{e^{-\eps}}{B} = t},\,
            \quad \forall t \in \IntSet{n + 1}{n + B},         
    \end{align*}
    where~$\indicator{\cdot}$ is the indicator for some event.
    Since by assumption, we have~$\estProfile[t] = 0$ for all~$t \in \IntSet{-B}{n + B} \setminus \IntSet{0}{n}$, the above equations can be written in a unified manner: 
    \begin{align}
        \cA \PAREN{\estProfile}[t] 
            = \frac{1}{d} \cdot \sum_{k = t - B}^{t + B} \sum_{j = 1}^{\estProfile[k] \cdot d} \indicator{k + \TDiscreteLapNoise{e^{-\eps}}{B} \equiv t},
    \end{align}
    where for the ease of notations, we omit the modular operations on the indexes and the events of the indicators.
    In particular, we assume~$\estProfile[k] \doteq \estProfile[ \modularIndex{k} ]$, and~$k + \TDiscreteLapNoise{e^{-\eps}}{B} \equiv t$ represents the event that~$\modularIndex{k + \TDiscreteLapNoise{e^{-\eps}}{B}} = t$, where 
    $$
        \modularIndex{x} \doteq \begin{cases}
            x,  & \text{ if } x \in \IntSet{-B}{n + B}, \\
            \PAREN{ \PAREN{x + B} \mod \PAREN{n + 2B + 1} } - B & \text{ otherwise }.
        \end{cases}
    $$
    Combining the definition of~$\TransMatrix$, we can verify that
    \begin{align*}
        \E{\cA \PAREN{\estProfile}[t] \mid \cE_B }
                &=  \frac{1}{d} \cdot \frac{1}{\ProbNorm} \cdot \sum_{k = -B}^{B} e^{-\eps \cdot \card{k}} \cdot \PAREN{ \estProfile[t + k] \cdot d } \\
                &= \sum_{k = -B}^{B} \frac{1}{\ProbNorm} \cdot e^{-\eps \cdot \card{k}} \cdot \estProfile[t + k] \\
                &= \TransMatrix[t, \cdot] \times \estProfile.
    \end{align*}    
\end{proof}

\section{Proofs for Section~\ref{sec: optimization algorithm}}
\label{appendix: sec: optimization algorithm}

\begin{proof}[Proof of Lemma~\ref{lemma: correctness of algo fast inversion}]
    The proof consists of two parts. 
    First, we show that the returned solution~$\estProfile$ satisfies the constraint~$\inner{\vecTargetSupport}{\estProfile} = 1$: 
    \begin{align*}
            \inner{\vecTargetSupport}{\estProfile} 
            &= 
            \inner{
                \vecTargetSupport
            }{
                \vec{u} -  
                    \frac{
                        \inner{\vecTargetSupport}{\vec{u}} - 1
                    }{ 
                        \inner{\vecTargetSupport}{\TransMatrix^{-1} \times \vec{a}}
                    }  
                    \cdot 
                    \TransMatrix^{-1} \times \vec{a}'
            } \\ 
            &= 
            \inner{\vecTargetSupport}{\vec{u}} - \frac{
                    \inner{\vecTargetSupport}{\vec{u}} - 1
                }{ 
                    \inner{\vecTargetSupport}{\TransMatrix^{-1} \times \vec{a}}
                }  
                \cdot \inner{\vecTargetSupport}{\TransMatrix^{-1} \times \vec{a}} \\
            &= 1. 
    \end{align*}

    Next, we need to prove that~$\estProfile$ is an optimal solution.
    Consider an arbitrary vector $\vec{v},$ s.t., $\inner{\vecTargetSupport}{\vec{v}} = 1.$
    We want to prove that $\norm{\TransMatrix \times \vec{v} - \boundedProfileOfNoisyHist }_p \ge \norm{\TransMatrix \times \estProfile - \boundedProfileOfNoisyHist }_p.$
    The claim is trivial if~$\norm{\TransMatrix \times \estProfile - \boundedProfileOfNoisyHist }_p = 0$.
    Otherwise, based on Algorithm~\ref{algo: fast inversion}, observe that
    \begin{align*}
        \TransMatrix \times \estProfile - \boundedProfileOfNoisyHist
            = \TransMatrix \times \PAREN{ \estProfile - \vec{u} } 
            = \TransMatrix \times \PAREN{
                \frac{
                    { \inner{\vecTargetSupport}{\vec{u}} - 1 }
                }{ 
                    \inner{\vecTargetSupport}{\TransMatrix^{-1} \times \vec{a}}
                }  
                \cdot 
                \TransMatrix^{-1} \times \vec{a}.
            } 
            =   \frac{
                    { \inner{\vecTargetSupport}{\vec{u}} - 1 }
                }{ 
                    \inner{\vecTargetSupport}{\TransMatrix^{-1} \times \vec{a}}
                }  
                \cdot 
                \vec{a}.
    \end{align*}
    Since~$\norm{\vec{a}}_p = 1$, it holds that 
    $
        \frac{
            \TransMatrix \times \PAREN{ \estProfile - \vec{u} }
        }{
            \norm{\TransMatrix \times \PAREN{ \estProfile - \vec{u} }}_p
        } = \vec{a}.
    $
    Therefore, 
    \begin{align*}
        1 - \inner{\vecTargetSupport}{\vec{u}} 
            &= \inner{\vecTargetSupport}{\vec{v} - \vec{u}} \\
            &= \norm{\TransMatrix \times \PAREN{ \vec{v} - \vec{u} }}_p 
            \cdot \inner{\vecTargetSupport}{ 
                \TransMatrix^{-1} \times \frac{
                    \TransMatrix \times \PAREN{ \vec{v} - \vec{u} }
                }{
                    \norm{\TransMatrix \times \PAREN{ \vec{v} - \vec{u} }}_p
                } 
            }, 
        \\
        1 - \inner{\vecTargetSupport}{\vec{u}} 
            &= \inner{\vecTargetSupport}{\estProfile - \vec{u}} \\
            &= \norm{\TransMatrix \times \PAREN{ \estProfile - \vec{u} }}_p 
            \cdot \inner{\vecTargetSupport}{
                \TransMatrix^{-1} \times \frac{
                    \TransMatrix \times \PAREN{ \estProfile - \vec{u} }
                }{
                    \norm{\TransMatrix \times \PAREN{ \estProfile - \vec{u} }}_p
                } 
            } \\ 
            &= \norm{\TransMatrix \times \PAREN{ \estProfile - \vec{u} }}_p 
            \cdot \inner{\vecTargetSupport}{
                \TransMatrix^{-1} \times \vec{a}
            }.
    \end{align*}

    By the definition of~$\vec{a},$ we have $
        \inner{\vecTargetSupport}{
            \TransMatrix^{-1} \times \vec{a}
        }
        \ge 
        \inner{\vecTargetSupport}{ 
                \TransMatrix^{-1} \times \frac{
                    \TransMatrix \times \PAREN{ \vec{v} - \vec{u} }
                }{
                    \norm{\TransMatrix \times \PAREN{ \vec{v} - \vec{u} }}_p
                } 
            },
    $
    which proves that 
    \begin{align*}
        \norm{\TransMatrix \times \PAREN{ \estProfile - \vec{u} }}_p 
        &= \norm{\TransMatrix \times \estProfile - \boundedProfileOfNoisyHist }_p \\
        &\le \norm{\TransMatrix \times \vec{v} - \boundedProfileOfNoisyHist }_p
        = \norm{\TransMatrix \times \PAREN{ \vec{v} - \vec{u} }}_p.
    \end{align*}
\end{proof}

\begin{proof}[Proof of Lemma~\ref{lemma : correctness of algo rounding}]
    When~$\estProfile$ is inputted to Algorithm~\ref{algo: rounding}, it satisfies~$\sum_{t \in \IntSet{0}{n}} \estProfile[t] = 1$.
    After line~\ref{line: round to zero}, it satisfies~$\estProfile[t] = 0$, for all~$t \in \IntSet{-B}{n + B} \setminus \IntSet{0}{n}.$
    After line~\ref{line: round from below}, it satisfies~$\estProfile[t] \in [0, 1]$, for all~$t \in \IntSet{0}{n}$.
    However, it may violate is that~$\sum_{t \in \IntSet{0}{n}} \estProfile[t]$ should sum up to $1$. 
    
    Since the cumulative change to the sum~$\sum_{t \in \IntSet{0}{n}} \estProfile[t]$ is~$\surplus + \deficit$, after line~\ref{line: round from below}, we have~$\sum_{t \in \IntSet{0}{n}} \estProfile[t] - 1 = \surplus + \deficit$.
    We claim that~$\surplus + \deficit \ge 0$. 
    
    Since~$\deficit \ge 0$, the claim holds trivially if~$\surplus = 0.$
    On the other hand, if~$\surplus < 0$, then there exists some~$t \in \IntSet{0}{n}$, such that~$\estProfile[t] = 1$ after Line~\ref{line: round from above}. 
    However, after Line~\ref{line: round from above} and~\ref{line: round from below}, we must have that~$\estProfile[t] \in [0, 1], \forall t \in \IntSet{0}{n}.$
    Therefore, $\sum_{t \in \IntSet{0}{n}} \estProfile[t] \ge 1,$ and~$\surplus + \deficit = \sum_{t \in \IntSet{0}{n}} \estProfile[t] -1 \ge 0$. 

    In case that~$\surplus + \deficit > 0$, the algorithm needs to decreases the~$\estProfile[t]$ to fix the constraint that~$\sum_{t \in \IntSet{0}{n}} \estProfile[t]$ should sum up to $1$.
    Define the function~$g(\tau) \doteq \sum_{t \in \IntSet{0}{n}} \min \set{ \tau, \estProfile[t] }$. 
    The algorithm searches for the solution~$\tau$ for which~$g(\tau) = \surplus + \deficit$.
    By intermediate value theorem, the solution always exists, as
    \begin{itemize}
        \item~$g$ is a continuous function with respect to~$\tau$;
        \item~$0 = g(0) < \surplus + \deficit$;
        \item~$1 +  \surplus + \deficit = \sum_{t \in \IntSet{0}{n}} \estProfile[t] = g(1) > \surplus + \deficit.$
    \end{itemize}

    Finally the algorithm decreases each~$\estProfile[t]$ by~$\min \set{ \tau, \estProfile[t] }$ to fix the constraint so that they sum up to~$1$.
\end{proof}

\section{Proofs for Section~\ref{sec: Time Analysis}}
\label{appendix: sec: time analysis}

\begin{proof}[Proof of Lemma~\ref{lemma: property of transmatrix}]
    The proof relies on Fact~\ref{fact: eigenvalues and vectors of circulant matrix}.
    It's worth emphasizing that $\TransMatrix$ is a matrix with row and column indices starting from $-B$ (up to $n + B)$, in contrast to the matrix described in Fact~\ref{fact: eigenvalues and vectors of circulant matrix}, where indices commence from $0$.
    With appropriate shifting, we observe that the eigenvalues of $\TransMatrix$ are given by
    \resizebox{\linewidth}{!}{
        \begin{minipage}{\linewidth}
        \begin{align*}
                \eigenvalue_t 
                    \doteq \inner{\TransMatrix[-B, :]}{ \sqrt{n + 2B + 1} \cdot \vec{v}_t}
                &= \frac{1}{\ProbNorm} \cdot \PAREN{
                    \sum_{j = -B}^0 e^{- \eps \cdot \card{j + B}} \cdot w_t^{j + B} 
                    + \sum_{j = 1}^{n} 0 \cdot w_t^{j + B} 
                    + \sum_{j = n + 1}^{n + B} e^{- \eps \cdot \card{j - \PAREN{n + B + 1}}} \cdot w_t^{j + B} 
                } \\
                &= \frac{w_t^B}{\ProbNorm} \cdot \PAREN{
                    1  
                    + \sum_{j = 1}^B e^{- \eps \cdot j} \cdot w_t^{j} 
                    + \sum_{j = 1}^B e^{- \eps \cdot j} \cdot w_t^{-j} \cdot w_t^{n + 2B + 1} 
                } \\
                &= \frac{w_t^B}{\ProbNorm} \cdot \PAREN{ 
                    1 
                    + \frac{ 
                        (e^{ -\eps } \cdot  w_t )^1 - (e^{ -\eps } \cdot w_t )^{ B + 1 }
                    }{ 1 - e^{ -\eps } \cdot  w_t }
                    + \frac{ 
                        (e^{ -\eps } \cdot  w_t^{-1} )^1 - (e^{ -\eps } \cdot  w_t^{-1} )^{ B + 1 }
                    }{ 
                        1-e^{ -\eps } \cdot w_t^{-1} 
                    }
                } .
        \end{align*}
        \end{minipage}
    }
    Observe that 
    \begin{align*}
        &1 + \frac{ 
            (e^{ -\eps } \cdot  w_t )^1 - (e^{ -\eps } \cdot w_t )^{ \PAREN{ B + 1 } }
        }{ 1 - e^{ -\eps } \cdot  w_t }
        + 
        \frac{ 
            (e^{ -\eps } \cdot  w_t^{-1} )^1-(e^{ -\eps } \cdot  w_t^{-1} )^{ \PAREN{ B + 1 } }
        }{ 
            1-e^{ -\eps } \cdot w_t^{-1} 
        } \\
        &= 1 + \frac{ 
            e^{ -\eps } \cdot  w_t - e^{ -\eps \cdot \PAREN{ B + 1 } } \cdot w_t^{B} \cdot w_t^{1}
        }{ 1 - e^{ -\eps } \cdot  w_t }
        + 
        \frac{ 
            e^{ -\eps } \cdot  w_t^{-1} - e^{ -\eps \cdot \PAREN{ B + 1 } } \cdot w_t^{-B} \cdot w_t^{-1}
        }{ 
            1-e^{ -\eps } \cdot w_t^{-1} 
        } \\
        &= 1 + \frac{
            e^{ -\eps } \cdot w_t 
            - e^{-2 \eps }
            - e^{ -\eps \cdot \PAREN{ B + 1 } } \cdot w_t^{B} \cdot \PAREN{ w_t^{1} - e^{ -\eps }}
            + e^{ -\eps } \cdot w_t^{-1}
            - e^{-2 \eps }
            - e^{ -\eps \cdot \PAREN{ B + 1 } } \cdot w_t^{-B} \cdot \PAREN{ w_t^{-1} - e^{ -\eps }}
        }{
            1 - e^{ -\eps } \cdot w_t - e^{ -\eps } \cdot w_t^{-1} + e^{ -2 \eps } 
        } \\
        &= \frac{
                1 
                - e^{-2 \eps } 
                - e^{ -\eps \cdot \PAREN{ B + 1 } } \cdot \PAREN{ 
                    w_t^{B + 1} 
                    + w_t^{-(B + 1)} 
                    - e^{ -\eps } \cdot w_t^{B} 
                    - e^{ -\eps } \cdot w_t^{-B} 
                }
            }{
                1 - e^{ -\eps } \cdot w_t - e^{ -\eps } \cdot w_t^{-1} + e^{ -2 \eps } 
            }.
    \end{align*}
    It is easy to see that each~$\eigenvalue_t$ can be computed in~$O(1)$ time, therefore all the eigenvalues can be computed in~$O(n)$ time.
    This proves the first part of the lemma. 
    It follows that 
    \begin{align*}
        \TransMatrix \times \DFTMatrix
            &= \DFTMatrix \diag{\eigenvalue_{-B}, \eigenvalue_{-B + 1}, \ldots, \eigenvalue_{n + B} },  
        \\
        \TransMatrix 
            &= \DFTMatrix \times \diag{\eigenvalue_{-B}, \eigenvalue_{-B + 1}, \ldots, \eigenvalue_{n + B} }
            \times \PAREN{ \DFTMatrix }^{-1}.
    \end{align*}
    Since~$\eigenvalue_t \neq 0$ for all~$t \in \IntSet{-B}{n + B},$
    the inverse of~$\diag{\eigenvalue_{-B}, \eigenvalue_{-B + 1}, \ldots, \eigenvalue_{n + B} }$ is given by 
    $$
        \diag{\eigenvalue_{-B}, \eigenvalue_{-B + 1}, \ldots, \eigenvalue_{n + B} }^{-1}
            = \diag{\eigenvalue_{-B}^{-1}, \eigenvalue_{-B + 1}^{-1}, \ldots, \eigenvalue_{n + B}^{-1} }.
    $$
    Therefore, it can be verified that 
    \begin{align*}
        \TransMatrix^{-1}
            &= \DFTMatrix \times \diag{\eigenvalue_{-B}^{-1}, \eigenvalue_{-B + 1}^{-1}, \ldots, \eigenvalue_{n + B}^{-1} }
            \times \DFTMatrix^{-1}.
    \end{align*}

\end{proof}

\begin{proof}[Proof of Theorem~\ref{theorem: running time of algo rounding}]
    All steps in Algorithm~\ref{algo: rounding} clearly run in linear time, with the exception of the search for the $\tau$ such that~$\sum_{t \in \IntSet{0}{n}} \min \set{ \tau, \estProfile[t] } = s$.

    We show that it can be found in~$O(n \log n)$ time. 
    First, sort the~$\estProfile[t]$ in non-decreasing order in~$O(n \log n)$ time.
    Now, assume that~$\estProfile[0] \le \cdots \le \estProfile[n]$.
    After this step, our goal is to find the smallest~$t^* \in \IntSet{0}{n}$ such that~$\estProfile[t^*] \ge \tau$.
    Since
    \begin{equation*} 
    \begin{aligned}
        \sum_{t = 0}^{n} \min \set{ \tau, \estProfile[t] }  
            &= \sum_{t = 0}^{t^* - 1} \estProfile[t] + \sum_{t = t^*}^{n} \tau \\
            &= \sum_{t = 0}^{t^* - 1} \estProfile[t] + \PAREN{n + 1 - t^*} \tau = s,
    \end{aligned}
    \end{equation*}
    $t^*$ is the smallest number in~$\IntSet{0}{n}$ satisfying~$\sum_{t = 0}^{t^* - 1} \estProfile[t] + \PAREN{n + 1 - t^*} \estProfile[t^*] \ge s$ and therefore can be identified in~$O(n)$ time.
    Once~$t^*$ is known,~$\tau$ is determined by~$\tau = \PAREN{s - \sum_{t = 0}^{t^* - 1} \estProfile[t]} / \PAREN{n + 1 - t^*}.$

\end{proof}

\section{Proofs for Section~\ref{sec: error analysis}}
\label{appendix: sec: error analysis}

\begin{proof}[Proof of Lemma~\ref{lemma: error of rounding}]
    Since~$\profile$ is a vector in~$\R^{n + 1}$, we expand it into one in $\R^{n + 2B + 1}$ by padding $\profile[-B] = \profile[-B + 1] = \cdots = \profile[-1] = 0$ to the front and $\profile[n + 1] = \profile[n + 2] = \cdots = \profile[n + B] = 0$ to the back of the vector.
    After this,~$\profile$ has the same dimension as~$\estProfile$, therefore~$\norm{\estProfile - \profile}_p$ is well-defined. 
    
    We study how the error~$\norm{\estProfile - \profile}_p$ changes in Algorithm~\ref{algo: rounding}, for~$p = 1, 2, \infty.$
    To begin with, we divide Algorithm~\ref{algo: rounding} into 3 phrases: 
    \begin{itemize}
        \item Phrase 1 (Line~\ref{line: round to zero}): in which the algorithm rounds~$\estProfile[t]$ to~$0$ for each~$t \in \IntSet{-B}{n + B} \setminus \IntSet{0}{n}$.

        \item Phase 2 (Line~\ref{line: large set} to Line~\ref{line: round from below}), in which the algorithm clips the~$\estProfile[t]$ to the range of~$[0, 1]$. 

        \item Phase 3 (Line~\ref{line: compute adjust amount} to Line~\ref{line: end of fix}), in which the algorithm adjusts the~$\estProfile[t]$ so that they sum up to~$1$. 
    \end{itemize}

    The error only decreases in Phase 2 and 3: 
    \begin{itemize}
        \item 
        Since~$\profile[t] = 0$ for each~$t \in \IntSet{-B}{n + B} \setminus \IntSet{0}{n},$ rounding~$\estProfile[t]$ to~$0$ for each~$t \in \IntSet{-B}{n + B} \setminus \IntSet{0}{n}$ only decrease~$\norm{\estProfile - \profile}_p$.
    
        \item 
        Since~$\profile[t] \in [0, 1]$, for each~$t \in \IntSet{0}{n},$ clipping the~$\estProfile[t], t \in \IntSet{0}{n}$ to the range~$[0, 1]$ also decreases~$\norm{\estProfile - \profile}_p$.
        
    \end{itemize}

    It remains to study the change of~$\norm{\estProfile - \profile}_p$ in Phase 3.
    We discuss this for~$p = 1, 2, \infty$ separately. 
    To distinguish the value of $\estProfile$ at different phases of the algorithm, we denote it as $\estProfile_0$ before Phase 1, 
    $\estProfile_1$ right after Phase 1, 
    $\estProfile_2$ right after Phase 2, 
    and $\estProfile_3$ right after Phase 3.

    \vspace{3mm}
    {\it $\norm{\estProfile_3 - \profile}_1 \le \norm{\estProfile_1 - \profile}_1:$}
    observe that in Phase 2~$\norm{\estProfile - \profile}_1$ decreases by an amount of~$-\surplus + \deficit:$
    \begin{align*}
        \sum_{t \in \IntSet{0}{n}} \card{ \estProfile_1[t] - \profile[t]}
        &= \sum_{t \in \largeSet} \card{ \estProfile_1[t] - \profile[t]} 
        + \sum_{t \in \smallSet} \card{ \estProfile_1[1] - \profile[t]}
        + \sum_{t \in \IntSet{0}{n} \setminus \PAREN{\largeSet \cup \smallSet}} \card{ \estProfile_2[t] - \profile[t]} \\
        &= \sum_{t \in \largeSet} \PAREN{ \card{ 1 - \profile[t]} + {\estProfile_1[t] - 1} }
        + \sum_{t \in \smallSet} \PAREN{ \card{ 0 - \profile[t]} + {0 - \estProfile_1[t]} }
        + \sum_{t \in \IntSet{0}{n} \setminus \PAREN{\largeSet \cup \smallSet}} \card{ \estProfile_2[t] - \profile[t]} \\
        &= \sum_{t \in \IntSet{0}{n}} \card{ \estProfile_2[t] - \profile[t]} - \surplus + \deficit.
    \end{align*}
    In Phase 3,~$\norm{\estProfile - \profile}_1$ can increase by at most~$\surplus + \deficit$.
    To see that, observe that~$
        \sum_{t \in \IntSet{0}{n}} \estProfile_2[t] - 1
        = \sum_{t \in \IntSet{0}{n}} \estProfile_2[t] - \sum_{t \in \IntSet{0}{n}} \estProfile_1[t]
        = \surplus + \deficit
        = s. 
    $
    After the adjustment, we have~$\estProfile_3[t] \le \estProfile_2[t]$ for each $t \in \IntSet{0}{n}$, and
    $\sum_{t \in \IntSet{0}{n}} \estProfile_3[t] = 1$.
    Therefore, 
    \begin{align*}
        \sum_{t \in \IntSet{0}{n}} \card{ \estProfile_3[t] - \profile[t]}
        - \sum_{t \in \IntSet{0}{n}} \card{ \estProfile_2[t] - \profile[t]}  
        &\le \sum_{t \in \IntSet{0}{n}} \card{ \estProfile_2[t] - \estProfile_3[t]} \\
        &= \sum_{t \in \IntSet{0}{n}} \PAREN{ \estProfile_2[t] - \estProfile_3[t]} 
        =\sum_{t \in \IntSet{0}{n}} \estProfile_2[t] - 1 
        =\surplus + \deficit.
    \end{align*}
    It follows that
    \begin{align*}
        \sum_{t \in \IntSet{0}{n}} \card{ \estProfile_3[t] - \profile[t]}
        - \sum_{t \in \IntSet{0}{n}} \card{ \estProfile_1[t] - \profile[t]} 
            \le \surplus + \deficit - \PAREN{-\surplus + \deficit}
            = 2 \cdot \surplus \le 0. 
    \end{align*}
    
    {\it $\norm{\estProfile_3 - \profile}_\infty \le 2 \cdot \norm{\estProfile_2 - \profile}_\infty:$}
    Let~$b_t \doteq \card{\estProfile_2[t] - \profile[t]}$ for each~$t \in \IntSet{0}{n}$, and~$b_{\max} \doteq \max_{t \in \IntSet{0}{n}} b_t.$
    It is easy to see that 
    \begin{align*}
        \sum_{t \in \IntSet{0}{n}} \PAREN{ \estProfile_2[t] - \min \set{ b_{\max}, \estProfile_2[t] } }
            \le \sum_{t \in \IntSet{0}{n}} \profile[t] = 1. 
    \end{align*}
    On the other hand, we have 
    \begin{align*}
        \sum_{t \in \IntSet{0}{n}} \PAREN{ \estProfile_2[t] - \min \set{ \tau, \estProfile_2[t] } }  = 1. 
    \end{align*}
    It follows that~$\sum_{t \in \IntSet{0}{n}} \min \set{ \tau, \estProfile_2[t] } \le \sum_{t \in \IntSet{0}{n}} \min \set{ b_{\max}, \estProfile_2[t] },$ and hence~$\tau \le b_{\max}.$
    \begin{align*}
        \norm{\estProfile_3 - \profile}_\infty
            &= \max_{t \in \IntSet{0}{n}} \card{ \estProfile_2[t] - \min \set{ \tau, \estProfile_2[t] } - \profile[t] } \\
            &\le \max_{t \in \IntSet{0}{n}} \PAREN{ \card{ \estProfile_2[t] - \profile[t] } + \min \set{ \tau, \estProfile_2[t] } } \\
            &\le 2 \cdot \norm{\estProfile_2 - \profile}_\infty.
    \end{align*}
    
    {\it $\norm{\estProfile_3 - \profile}_2 \le \norm{\estProfile_2 - \profile}_2:$}
    without lose of generality, assume that~$\estProfile_2[0] \le \estProfile_2[1] \le \ldots \le \estProfile_2[n].$
    The adjustment step (line~\ref{line: adjust based on computed threshold}) can be replaced by Algorithm~\ref{algo: itereated algorithm}.
    \begin{algorithm}[!ht]
        \caption{Iterated Adjustment}
        \label{algo: itereated algorithm}
        \begin{algorithmic}[1]
            \STATE $t \leftarrow 0$
            \WHILE{$t \le n$ and~$s > 0$}
                \STATE $\tau' \leftarrow \min \PAREN{ \estProfile[t] , \frac{s}{n - t + 1} }$
                \STATE $\estProfile[j] \leftarrow \estProfile[j] - \tau',\, \forall j \in \IntSet{t}{n}$
                \STATE $s \leftarrow s - \tau' \cdot \PAREN{n - t + 1}$
                \STATE $t \leftarrow t + 1$
            \ENDWHILE
        \end{algorithmic}
    \end{algorithm}

    We claim that, 
    $\norm{\estProfile - \profile}_2$ does not increase at each iteration.
    Fix some iteration~$t$.
    Note that~$\estProfile[j]$ does not change for~$j \in \IntSet{0}{t - 1}$, whereas~$\estProfile[j]$ decreases by~$\tau'$ for~$j \in \IntSet{t}{n}$.
    Therefore, the total change of~$\norm{\estProfile - \profile}_2^2$ is given by 
    \begin{align*}
        &\sum_{j \in \IntSet{t}{n}} \card{\estProfile[t] - \tau' - \profile[t]}^2 
            - \sum_{j \in \IntSet{t}{n}} \card{\estProfile[t] - \profile[t]}^2 \\
        &= \sum_{j \in \IntSet{t}{n}} -\tau' \cdot \PAREN{2 \cdot \estProfile[t] - 2 \cdot \profile[t] - \tau' } \\
        &= -\tau' \cdot \PAREN{2 \cdot \sum_{j \in \IntSet{t}{n}} \estProfile[t] - 2 \cdot \sum_{j \in \IntSet{t}{n}} \profile[t] - \sum_{j \in \IntSet{t}{n}} \tau' } \\
        &= -\tau' \cdot \PAREN{2 \cdot \PAREN{ \sum_{j \in \IntSet{t}{n}} \estProfile[t] - 1 } - \PAREN{n - t + 1} \cdot \tau' } \\
        &\le 0,
    \end{align*}
    where the last inequality follows since~$0 \le \tau' \le \estProfile[j]$ for each~$j \in \IntSet{t}{n}.$
\end{proof}

We present the proof for Lemma~\ref{lemma: deviation of profile of noisy histogram} first, then the proof for Lemma~\ref{lemma: bounds on matrix norms}. 

\begin{proof}[Proof of Lemma~\ref{lemma: deviation of profile of noisy histogram}]
    Since~$\profile$ is a vector in~$\R^{n + 1}$, we expand it into one in $\R^{n + 2B + 1}$ by padding $\profile[-B] = \profile[-B + 1] = \cdots = \profile[-1] = 0$ to the front and $\profile[n + 1] = \profile[n + 2] = \cdots = \profile[n + B] = 0$ to the back of the vector, so that~$\TransMatrix \times \profile$ is well-defined. 

    Recall that
    $
        \privHist[\ell] = \hist[\ell] + \DiscreteLapNoise{e^{-\eps}},\, 
        \forall \ell \in \DataDomain.
    $
    Since $B \ge \frac{1}{\eps} \cdot \ln \frac{2 \cdot d}{\eta \cdot \PAREN{e^\eps + 1}}$.
    Then  
    \begin{align*}
        \P{\card{\DiscreteLapNoise{e^{-\eps}}} > B}
            &= 2 \cdot \sum_{t \ge B} \frac{1 - e^{-\eps}}{1 + e^{-\eps}} \cdot e^{-\eps \cdot \card{t} }
            = 2 \cdot \frac{1 - e^{-\eps}}{1 + e^{-\eps}} \cdot \frac{e^{-\eps \cdot B}}{1 - e^{-\eps}}
            = \frac{2 \cdot e^{-\eps \cdot B}}{1 + e^{-\eps}}
            \le \frac{\eta}{d}.
    \end{align*}
    By union bound,
    \begin{align*}
        \P{ \max_{\ell \in \DataDomain} \card{\privHist[\ell] - \hist[\ell]} > B } \le \eta.   
    \end{align*}

    Assuming that~$\max_{\ell \in \DataDomain} \card{\privHist[\ell] - \hist[\ell]} \le B$, $\boundedProfileOfNoisyHist = \cA \paren{\profile}$ can be equivalent viewed as the outcome of running Algorithm~\ref{algo: private profile generator} with input~$\profile$, (conditioned on the event~$\cE_{B}$ that all noises added by the algorithm are absolutely bounded by~$B$). 
    It follows by Lemma~\ref{lemma: expectation in matrix form} that 
    \begin{align*}
        \E{\boundedProfileOfNoisyHist} =  
            \E{\cA \PAREN{\profile} \mid \cE_B } 
            = \TransMatrix \times \profile.
    \end{align*}
    
    {\bf Bounding~$\norm{ \boundedProfileOfNoisyHist - \TransMatrix \times \profile }_1$.}
    Under these assumptions, 
    \begin{align*}
        \E{ \norm{ \boundedProfileOfNoisyHist - \TransMatrix \times \profile }_1 }
            &= \E{ \norm{ \boundedProfileOfNoisyHist - \E{ \boundedProfileOfNoisyHist } }_1 }
            = \E{ \sum_{t = -B}^{n + B} \card{ \boundedProfileOfNoisyHist[t] - \E{ \boundedProfileOfNoisyHist[t] } } } \\
            &= \sum_{t = -B}^{n + B} \E{ \card{ \boundedProfileOfNoisyHist[t] - \E{ \boundedProfileOfNoisyHist[t] } } }
            = \sum_{t = -B}^{n + B} \E{ \sqrt{ \card{ \boundedProfileOfNoisyHist[t] - \E{ \boundedProfileOfNoisyHist[t] } }^2 } } \\
            &\le \sum_{t = -B}^{n + B} \sqrt{ \E{ \card{ \boundedProfileOfNoisyHist[t] - \E{ \boundedProfileOfNoisyHist[t] } }^2 } }. 
    \end{align*}
    
    Consider a $t \in \IntSet{-B}{n + B}$. 
    Conditioned on the event that~$\max_{\ell \in \DataDomain} \card{\privHist[\ell] - \hist[\ell]} \le B$, the noises added to the~$\privHist[\ell]$ follows independent truncated Laplace distribution.
    For each $\ell \in \DataDomain$, define the indicator $\indicator{\privHist[\ell] = t}$ for the event that $\privHist[\ell] = t$. 
    We have
    $$
        \P{ \indicator{\privHist[\ell] = t} = 1 } 
            = \frac{ 
                e^{-\eps \cdot \card{ t - \hist[\ell] }}
            }{ 
                \ProbNorm
            },
    $$
    where~$\ProbNorm = \frac{ 
                    1 + e^{ -\eps } - 2 \cdot (e^{ -\eps })^{ B + 1 }
                }{ 
                    1 - e^{ -\eps } 
                }.$ 
    Since $
        \boundedProfileOfNoisyHist [t] = \frac{ 1 }{ d } \cdot \sum_{\ell \in \DataDomain} \indicator{\privHist[\ell] = t},
    $ 
    we have 
    \begin{align*}
        \E{ 
            \card{ 
                \boundedProfileOfNoisyHist[t] - \E{ \boundedProfileOfNoisyHist[t] } 
            }^2 
        }
            &= \Var{
                \frac{ 1 }{ d } \cdot \sum_{\ell \in \DataDomain} \indicator{\privHist[\ell] = t}
            } \\
            &= \frac{1}{d^2} \cdot \sum_{\ell \in \DataDomain} \Var{ \indicator{\privHist[\ell] = t} } \\
            &= \frac{1}{d^2} \cdot  
                \sum_{\ell \in \DataDomain} \frac{ 
                            e^{-\eps \cdot \card{t - \hist[\ell]} } 
                        }{\ProbNorm} 
                        \cdot \PAREN{ 
                            1 - \frac{
                                e^{-\eps \cdot \card{t - \hist[\ell]} }
                            }{\ProbNorm}
                        } \\
            &\le \frac{1}{ d } \cdot \E{ \boundedProfileOfNoisyHist [t] }. 
    \end{align*}
    Note that $
        1 - \frac{ e^{-\eps \cdot \card{t - \hist[\ell]} } }{\ProbNorm} 
        \ge 1 - \frac{1}{ \ProbNorm } 
        = 1 - \frac{ 1 - e^{-\eps} }{  1 + e^{ -\eps } - 2 \cdot (e^{ -\eps })^{ B + 1 } }
        = \frac{ 2 e^{-\eps} - 2 \cdot (e^{ -\eps })^{ B + 1 } }{  1 + e^{ -\eps } - 2 \cdot (e^{ -\eps })^{ B + 1 } }
    $.
    So the inequality is tight up to a factor of $
        \frac{ 2 e^{-\eps} - 2 \cdot (e^{ -\eps })^{ B + 1 } }{  1 + e^{ -\eps } - 2 \cdot (e^{ -\eps })^{ B + 1 } }
    $.
    It concludes that 
    \begin{align*}
        \E{ \norm{ \boundedProfileOfNoisyHist - \TransMatrix \times \profile }_1 }
            \le \sum_{t = -B}^{n + B} \sqrt{ \E{ \card{ \boundedProfileOfNoisyHist[t] - \E{ \boundedProfileOfNoisyHist[t] } }^2 } }
            = \sum_{t = -B}^{n + B} \sqrt{ \frac{1}{ d } \cdot \E{ \boundedProfileOfNoisyHist [t] } }.
    \end{align*}

    Further, $\norm{ \boundedProfileOfNoisyHist - \TransMatrix \times \profile }_1$ is a function of the random variables~$\privHist[\ell]'s:$
    \begin{align*}
        \norm{ \boundedProfileOfNoisyHist - \TransMatrix \times \profile }_1
            &= \sum_{t \in \IntSet{-B}{n + B}} \card{
                 \frac{ 1 }{ d } \cdot \sum_{\ell \in \DataDomain} \indicator{\privHist[\ell] = t} - \E{ \boundedProfileOfNoisyHist [t] }
            }.
    \end{align*}
    Changing the value of one~$\privHist[\ell]$ affects at most two terms on the right hand side of the equation, each by~$1 / d$.
    By Definition~\ref{def:Lipschitz-Condition}, $\norm{ \boundedProfileOfNoisyHist - \TransMatrix \times \profile }_1$ satisfies the Lipschitz condition with bound~$2 / d$.
    Therefore, via Fact~\ref{fact:McDiarmid-Inequality}, 
    \begin{align*}
        \P{
            \norm{ \boundedProfileOfNoisyHist - \TransMatrix \times \profile }_1
            - \E{\norm{ \boundedProfileOfNoisyHist - \TransMatrix \times \profile }_1}
            \ge \sqrt{ \frac{2 \ln \frac{1}{\eta} }{d} }
        }
        \le \exp \PAREN{
            -\frac{2}{d \cdot \PAREN{2 / d}^2} \cdot \frac{2 \ln \frac{1}{\eta} }{d}
        }
        = \eta. 
    \end{align*}

    {\bf Bounding~$\norm{ \boundedProfileOfNoisyHist - \TransMatrix \times \profile }_2$.}
    Similarly, 
     \begin{align*}
        \E{ \norm{ \boundedProfileOfNoisyHist - \TransMatrix \times \profile }_2 }
            &= \E{ \norm{ \boundedProfileOfNoisyHist - \E{ \boundedProfileOfNoisyHist } }_2 }
            = \E{ \sqrt{ \sum_{t = -B}^{n + B} \card{ \boundedProfileOfNoisyHist[t] - \E{ \boundedProfileOfNoisyHist[t] } }^2 } } \\
            &\le \sqrt{ \sum_{t = -B}^{n + B} \E{ \card{ \boundedProfileOfNoisyHist[t] - \E{ \boundedProfileOfNoisyHist[t] } }^2 } }
            \le \sqrt{ \sum_{t = -B}^{n + B} \frac{1}{ d } \cdot \E{ \boundedProfileOfNoisyHist [t] } } 
            = \sqrt{\frac{1}{d}}.
    \end{align*}

    Further, $\norm{ \boundedProfileOfNoisyHist - \TransMatrix \times \profile }_2$ is a function of the random variables~$\privHist[\ell]:$
    \begin{align*}
        \norm{ \boundedProfileOfNoisyHist - \TransMatrix \times \profile }_2
            &= \sqrt{ \sum_{t \in \IntSet{-B}{n + B}} \card{
                 \frac{ 1 }{ d } \cdot \sum_{\ell \in \DataDomain} \indicator{\privHist[\ell] = t} - \E{ \boundedProfileOfNoisyHist [t] }
            }^2 },
    \end{align*}
    We can view~$\boundedProfileOfNoisyHist = \PAREN{ \PAREN{1 / d} \cdot \sum_{\ell \in \DataDomain} \indicator{\privHist[\ell] = -B}, \ldots, \PAREN{1 / d} \cdot \sum_{\ell \in \DataDomain} \indicator{\privHist[\ell] = n + B} }$.
    Assume that we change the value of one~$\privHist[\ell]$, to obtain a vector~$\boundedProfileOfNoisyHist',$ 
    which differs from~$\boundedProfileOfNoisyHist$ by two coordinates, each by~$1 / d$.
    Therefore,~$\norm{ \boundedProfileOfNoisyHist' - \boundedProfileOfNoisyHist }_2 = \sqrt{2} / d$, and 
    \begin{align*}
        \card{ 
            \norm{ \boundedProfileOfNoisyHist' - \TransMatrix \times \profile }_2
            - \norm{ \boundedProfileOfNoisyHist - \TransMatrix \times \profile }_2
        } \le 
        \norm{
            \boundedProfileOfNoisyHist' - \boundedProfileOfNoisyHist
        }_2
        \le \sqrt{2} / d. 
    \end{align*}
    By Definition~\ref{def:Lipschitz-Condition}, $\norm{ \boundedProfileOfNoisyHist - \TransMatrix \times \profile }_2$ satisfies the Lipschitz condition with bound~$\sqrt{2} / d$.
    Therefore, via Fact~\ref{fact:McDiarmid-Inequality}, 
    \begin{align*}
        \P{
            \norm{ \boundedProfileOfNoisyHist - \TransMatrix \times \profile }_2
            - \E{\norm{ \boundedProfileOfNoisyHist - \TransMatrix \times \profile }_2}
            \ge \sqrt{ \frac{\ln \frac{1}{\eta} }{d} }
        }
        \le \exp \PAREN{
            -\frac{2}{d \cdot \PAREN{\sqrt{2} / d}^2} \cdot \frac{\ln \frac{1}{\eta} }{d}
        }
        = \eta. 
    \end{align*}

    {\bf Bounding~$\norm{ \boundedProfileOfNoisyHist - \TransMatrix \times \profile }_\infty$.}
    To prove the last part, 
    by Bernstein's inequality (Fact~\ref{fact: bernstein}), with probability at least $1 - 1 / (n \eta)$, we have 
    \begin{align*}
        \card{ \boundedProfileOfNoisyHist [t] - \E{ \boundedProfileOfNoisyHist [t] }} 
            &\le \sqrt{ 
                    2 \cdot 
                       \Var{
                            \boundedProfileOfNoisyHist [t]
                        }
                    \cdot \ln \frac{n}{\eta} 
                } 
                + \frac{2}{3} \cdot \ln \frac{n}{\eta} \\
            &\le \sqrt{ 
                \frac{2}{d} \cdot \PAREN{ 
                        \sum_{\ell \in \DataDomain} \frac{ 
                            e^{-\eps \cdot \card{t - \hist[\ell]} } 
                        }{\ProbNorm} 
                    } \cdot \ln \frac{n}{\eta} 
            } + \frac{2}{ 3 d } \cdot \ln \frac{n}{\eta} \\
            &= \frac{1}{ \sqrt{d} } \cdot \sqrt{ 2 \cdot \E{ \boundedProfileOfNoisyHist [t] } \cdot \ln \frac{n}{\eta} } + \frac{2}{ 3 d } \cdot \ln \frac{n}{\eta} .
    \end{align*}
    
    It is not easy to have a non-trivial bound for $\E{ \boundedProfileOfNoisyHist [t] }$, which depends on the histogram~$\hist$.
    On the other hand, a trivial bound can be given by 
    $$
         \E{ \boundedProfileOfNoisyHist [t] }
            \le \frac{1}{d \cdot \ProbNorm} \cdot \sum_{\ell \in \DataDomain} e^{-\eps \cdot 0 } 
            = \frac{1}{\ProbNorm}.
    $$

\end{proof}

\begin{proof}[Proof of Lemma~\ref{lemma: bounds on matrix norms}]
    We first bound~$\norm{ \TransMatrix^{-1} }_2$, since it is the easier part of the proof.
    According to Lemma~\ref{lemma: property of transmatrix},
    \begin{align}
        \norm{ \TransMatrix^{-1} }_2 
        &= \norm{ 
            \DFTMatrix \times 
            \diag{\eigenvalue_{-B}^{-1}, \eigenvalue_{-B + 1}^{-1}, \ldots, \eigenvalue_{n + B}^{-1} }
            \times \DFTMatrix^{-1}
        }_2 \\
        &\le \norm{ \DFTMatrix }_2 \cdot 
        \norm{ 
            \diag{\eigenvalue_{-B}^{-1}, \eigenvalue_{-B + 1}^{-1}, \ldots, \eigenvalue_{n + B}^{-1} } 
        }_2 
        \cdot \norm{ \DFTMatrix^{-1} }_2. 
    \end{align}
    Since~$\DFTMatrix$ is the discrete Fourier transform matrix with orthonormal column and row vectors, it holds that~$\norm{ \DFTMatrix }_2 = \norm{ \DFTMatrix^{-1} }_2 = 1$.
    Moreover, $\norm{ \diag{\eigenvalue_{-B}^{-1}, \eigenvalue_{-B + 1}^{-1}, \ldots, \eigenvalue_{n + B}^{-1} } }_2 = \max_{t \in \IntSet{-B}{n + B}} \frac{1}{|\eigenvalue_t|}$.
    We proceed to show that to bounds~$\max_{t \in \IntSet{-B}{n + B}} \frac{1}{|\eigenvalue_t|} = 1 / \min_{t \in \IntSet{-B}{n + B}} |\eigenvalue_t|$. 
    For each~$t \in \IntSet{-B}{n + B}$, 
    \begin{align*}
        \card{\eigenvalue_t}
        =
        \card{ 
            \frac{w_t^B}{\ProbNorm} \cdot \frac{
                1 
                - e^{-2 \eps } 
                - e^{ -\eps \cdot \PAREN{ B + 1 } } \cdot \PAREN{ 
                    w_t^{B + 1} 
                    + w_t^{-(B + 1)} 
                    - e^{ -\eps } \cdot w_t^{B} 
                    - e^{ -\eps } \cdot w_t^{-B} 
                }
            }{
                1 - e^{ -\eps } \cdot w_t - e^{ -\eps } \cdot w_t^{-1} + e^{ -2 \eps } 
            }
        } 
    \end{align*}
    We have 
    \begin{align*}
        \card{w_t^B} 
            &= 1. \\
        \card{  1 - e^{ -\eps } \cdot w_t - e^{ -\eps } \cdot w_t^{-1} + e^{ -2 \eps }  }
            &\le \card{ 1 + 2 \cdot e^{ -\eps } + e^{ -2 \eps } }
            = \PAREN{ 1 + e^{ -\eps } }^2. 
    \end{align*}
    and
    \begin{align*}
        &\card{ 
            1 
            - e^{-2 \eps } 
            - e^{ -\eps \cdot \PAREN{ B + 1 } } \cdot \PAREN{ 
                w_t^{B + 1} 
                + w_t^{-(B + 1)} 
                - e^{ -\eps } \cdot w_t^{B} 
                - e^{ -\eps } \cdot w_t^{-B} 
            }
        }   \\
        &\ge \card{ 
            1 
            - e^{-2 \eps } 
        } + e^{ -\eps \cdot \PAREN{ B + 1 } } \cdot \card{ 
                w_t^{B + 1} 
                + w_t^{-(B + 1)} 
                - e^{ -\eps } \cdot w_t^{B} 
                - e^{ -\eps } \cdot w_t^{-B} 
            }\\
        &\ge 
            1 
            - e^{-2 \eps } 
            - e^{ -\eps \cdot \PAREN{ B + 1 } } \cdot 2 \cdot \PAREN{ 1
                + e^{ -\eps } 
            } \\
            &= \PAREN{
                1 - e^{ -\eps } - 2 \cdot e^{ -\eps \cdot \PAREN{ B + 1 } } 
            } \cdot \PAREN{ 1 + e^{ -\eps } }.
    \end{align*}
    Therefore
    $$
        \card{\eigenvalue_t} 
            \ge \frac{ 
                    1
                }{
                    \ProbNorm
                } \cdot  
            \frac{
                \PAREN{
                    1 - e^{ -\eps } - 2 \cdot e^{ -\eps \cdot \PAREN{ B + 1 } } 
                } 
            }{
                \PAREN{ 1 + e^{ -\eps } }
            }.
    $$
    \\

    To bound~$\norm{ \TransMatrix^{-1} }_1$ and~$\norm{ \TransMatrix^{-1} }_\infty$, we will first show that there two values are equal; then we bound just~$\norm{ \TransMatrix^{-1} }_\infty$. \\

    {\it $\norm{ \TransMatrix^{-1} }_1 = \norm{ \TransMatrix^{-1} }_\infty.$}
    We will show that,~$\TransMatrix^{-1}$ is also a circulant matrix. 
    Since each column of a circulant matrix has the same $\ell_1$-norm as each row of the matrix, and since 
    \begin{align*}
        \norm{ \TransMatrix^{-1} }_1 
            &= \max_{t \in \IntSet{-B}{n + B}} \norm{ \TransMatrix^{-1}[\cdot , t] }_1, \\
        \norm{ \TransMatrix^{-1} }_\infty 
            &= \max_{t \in \IntSet{-B}{n + B}} \norm{ \TransMatrix^{-1}[t, \cdot ] }_1,
    \end{align*}
    it follows that 
    $
        \norm{ \TransMatrix^{-1} }_1 
            = \norm{ \TransMatrix^{-1} }_\infty. 
    $
    It remains to show that~$\TransMatrix^{-1}$ is also a circulant matrix. 
    According to Lemma~\ref{lemma: property of transmatrix},~$\DFTMatrix$ is the discrete Fourier transform  matrix whose rows and columns are indexed from~$-B$ to~$n + B$.
    Therefore~$\forall k, \ell \in \IntSet{-B}{n + B}$, it holds that 
    \begin{align*}
        \DFTMatrix[k , \ell] 
            &= \frac{1}{\sqrt{n + 2B + 1}} \cdot \eigenvalue_\ell^{k + B} \\
            &= \frac{1}{\sqrt{n + 2B + 1}} \cdot e^{- \frac{2 \pi \PAREN{k + B} \cdot \PAREN{\ell + B} \cdot i}{n + 2B + 1} }, 
        \, \\
        \DFTMatrix^{-1}[k , \ell] 
            &= \frac{1}{\sqrt{n + 2B + 1}} \cdot \eigenvalue_k^{-\PAREN{\ell + B}} \\
            &= \frac{1}{\sqrt{n + 2B + 1}} \cdot e^{ \frac{2 \pi \PAREN{k + B} \cdot \PAREN{\ell + B} \cdot i}{n + 2B + 1} }.
    \end{align*}
    Since 
    $
        \TransMatrix^{-1}
            = \DFTMatrix 
            \times \diag{\eigenvalue_{-B}^{-1}, \eigenvalue_{-B + 1}^{-1}, \ldots, \eigenvalue_{n + B}^{-1} }
            \times \DFTMatrix^{-1},
    $
    we can verify that  
    \begin{align*}
        \TransMatrix^{-1}[k, \ell] 
            &= \DFTMatrix[k, \cdot] \times \diag{\eigenvalue_{-B}^{-1}, \eigenvalue_{-B + 1}^{-1}, \ldots, \eigenvalue_{n + B}^{-1} } \times \DFTMatrix^{-1}[\cdot, \ell] \\
            &= \frac{1}{n + 2B + 1} \cdot \inner{
                \PAREN{
                    e^{- \frac{2 \pi \PAREN{k + B} \cdot 0 \cdot i}{n + 2B + 1} } \eigenvalue_{-B}^{-1},
                    \ldots, 
                    e^{- \frac{2 \pi \PAREN{k + B} \cdot \PAREN{n + 2B} \cdot i}{n + 2B + 1} } \eigenvalue_{n + B}^{-1}
                }
            }{
                \PAREN{
                    e^{\frac{2 \pi {0 \cdot \PAREN{\ell + B}} \cdot i}{n + 2B + 1} },
                    \ldots, 
                    e^{\frac{2 \pi {\PAREN{n + 2B} \cdot \PAREN{\ell + B}} \cdot i}{n + 2B + 1} }
                }
            } \\
            &= \frac{1}{n + 2B + 1} \cdot \sum_{t \in \IntSet{-B}{n + B}} e^{- \frac{2 \pi \PAREN{k - \ell}  \cdot \PAREN{t + B} \cdot i}{n + 2B + 1} } \eigenvalue_t^{-1}.
    \end{align*}
    Now it is easy to see that~$\TransMatrix^{-1}$ is a circulant matrix, since for each~$-B \le k < n + B$, we have 
    \begin{align*}
        \TransMatrix^{-1}[k + 1, \ell + 1] 
            &= \TransMatrix^{-1}[k, \ell], \quad \forall \ell \in \IntSet{-B}{n + B - 1}, 
        \\
        \TransMatrix^{-1}[k + 1, -B] 
            &= \TransMatrix^{-1}[k, n + B].
    \end{align*}

    {\it Bounding $\norm{ \TransMatrix^{-1} }_\infty.$}
    We will show that for each~$\vec{v} \in \R^{n + 2B + 1}$, it holds that 
    $$
        \norm{ \TransMatrix^{-1} \vec{v} }_1 \le  \frac{
                2 + e^{ -\eps } + e^{ \eps }
            }{
                e^{ \eps } - e^{ -\eps } - 4 \cdot e^{-\eps \cdot B }
            } 
            \cdot \ProbNorm 
            \cdot \norm{\vec{v}}_1.
    $$
    Since~$\TransMatrix$ is invertible, let $\vec{x} \doteq \TransMatrix^{-1} \vec{v}$. 
    It reduces to prove that 
    $$
        \norm{ \vec{x} }_1 \le  \frac{
                2 + e^{ -\eps } + e^{ \eps }
            }{
                e^{ \eps } - e^{ -\eps } - 4 \cdot e^{-\eps \cdot B }
            } 
            \cdot \ProbNorm 
            \cdot \norm{\TransMatrix \vec{x}}_1.
    $$
    To finish the proof, we need the following lemma. \\

    \begin{lemma}
    \label{lemma: bounds on inverting a single entry}
        Let~$\vec{x} \in \R^{n + 2B + 1}$ and assume that for each~$t \in \IntSet{-B}{n + B}$, there exists~$b_t \in \R_{\ge 0}$, s.t.~$\card{ {\TransMatrix[t, \cdot] \times \vec{x}}} \le b_t$.
        The for each~$t \in \IntSet{-B}{n + B}$, we have 
        
        \vspace{-3mm}
        \resizebox{\linewidth}{!}{
          \begin{minipage}{1.13\linewidth}
            \begin{equation*}
                \card{\vec{x}[t]} \le \frac{
                    e^{ -\eps } + e^{ \eps }
                }{
                    e^{ \eps } - e^{ -\eps }
                } \cdot \PAREN{ 
                    \ProbNorm \cdot \PAREN{
                        b_t + \frac{b_{t - 1} + b_{t + 1}}{e^{ -\eps } + e^{ \eps}}
                    }  
                    +
                     \frac{e^{-\eps \cdot B }}{e^{ -\eps } + e^{ \eps } } \cdot \PAREN{ 
                        \card{ \vec{x} \bracket{t - B - 1} } + \card{ \vec{x} \bracket{t + B + 1} } + e^{ -\eps } \cdot \card{ \vec{x} \bracket{t - B} } + e^{ -\eps } \cdot \card{ \vec{x} \bracket{t + B} }
                    }
                },
            \end{equation*}
          \end{minipage}
        }
        
        where for simplicity we omit the modular operations on the subscripts and indexes. 
        In particular, we define~$b_{-B - 1} \doteq b_{n + B}$,~$b_{n + B + 1} = b_{-B}$, and for each~$t \notin \IntSet{-B}{n + B}$, we let~$\vec{x}[t] \doteq \vec{x}[ \PAREN{ \PAREN{t + B} \mod \paren{n + 2B + 1} } + \PAREN{-B} ]$.
    \end{lemma}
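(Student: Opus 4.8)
The plan is to isolate $\vec{x}[t]$ by hitting the ``observations'' $y_s \doteq \TransMatrix[s,\cdot]\times\vec{x}$ with a second-order difference, exploiting that the weights $e^{-\eps\card{k}}$ defining the rows of $\CirculantMatrix{\vec c}$ obey a two-term linear recurrence away from $k=0$. Recall from Lemma~\ref{lemma: property of transmatrix} (equivalently, the computation in the proof of Lemma~\ref{lemma: expectation in matrix form}) that $\ProbNorm\, y_s = \sum_{k=-B}^{B} e^{-\eps\card{k}}\,\vec{x}[s+k]$, all indices taken modulo $n+2B+1$ as in the statement. Writing $L \doteq e^{\eps}+e^{-\eps}$, the first step is to expand $\ProbNorm\PAREN{y_{t+1} - L\,y_t + y_{t-1}}$ by shifting the summation index in each of the three terms and collecting the coefficient $d_j$ of $\vec{x}[t+j]$ for $j\in\IntSet{-B-1}{B+1}$ (the union of the three supports).

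The heart of the argument is the coefficient computation. For $\card{j}\le B-1$ all three index windows are active and $d_j = e^{-\eps\card{j-1}} + e^{-\eps\card{j+1}} - L\,e^{-\eps\card{j}}$, which vanishes for $j\neq 0$ by the identity $e^{-\eps(m+1)}+e^{-\eps(m-1)} = L\,e^{-\eps m}$ (and its reflection), and equals $e^{-\eps}-e^{\eps}$ at $j=0$. The only surviving contributions are the four boundary ones; a finite case check of which endpoints of the windows $\IntSet{-B+1}{B+1}$, $\IntSet{-B}{B}$, $\IntSet{-B-1}{B-1}$ include $j=\pm B$ and $j=\pm(B+1)$ gives $d_{\pm B} = -e^{-\eps(B+1)}$ and $d_{\pm(B+1)} = e^{-\eps B}$. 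This yields the exact identity
\begin{equation*}
  \ProbNorm\PAREN{y_{t+1} - L\,y_t + y_{t-1}}
  = \PAREN{e^{-\eps}-e^{\eps}}\vec{x}[t]
  - e^{-\eps(B+1)}\PAREN{\vec{x}[t+B] + \vec{x}[t-B]}
  + e^{-\eps B}\PAREN{\vec{x}[t+B+1] + \vec{x}[t-B-1]}.
\end{equation*}

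Finally, solving for $\vec{x}[t]$, dividing by $e^{\eps}-e^{-\eps}>0$, taking absolute values, applying the triangle inequality, substituting $\card{y_s}\le b_s$ for $s\in\set{t-1,t,t+1}$, and factoring out $L=e^{\eps}+e^{-\eps}$ together with $e^{-\eps B}$ (using $e^{-\eps(B+1)}=e^{-\eps}\cdot e^{-\eps B}$) reproduces the claimed inequality verbatim. The modular conventions in the statement ($b_{-B-1}\doteq b_{n+B}$, the wrap-around indexing of $\vec{x}$) are precisely what make the index shifts in the first step legitimate, since $\TransMatrix$ is circulant. I expect the only genuinely error-prone part to be the boundary bookkeeping in step two — pinning down the four coefficients $d_{\pm B}, d_{\pm(B+1)}$ and which window endpoints are inclusive — but this is a mechanical, finite check rather than a conceptual obstacle.
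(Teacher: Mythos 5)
Your proposal is correct and follows essentially the same route as the paper: the paper's proof also hits the observations with the combination $\TransMatrix[t,\cdot]\vec{x} - \frac{1}{e^{\eps}+e^{-\eps}}\PAREN{\TransMatrix[t-1,\cdot]\vec{x} + \TransMatrix[t+1,\cdot]\vec{x}}$ (your second-order difference up to the factor $-1/(e^\eps+e^{-\eps})$), uses the same recurrence $e^{-\eps(m-1)}+e^{-\eps(m+1)}=(e^\eps+e^{-\eps})e^{-\eps m}$ to cancel the interior terms, and keeps exactly the four boundary terms at $t\pm B$, $t\pm(B+1)$ before isolating $\card{\vec{x}[t]}$ via the triangle inequality. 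Your coefficient bookkeeping ($d_0=e^{-\eps}-e^{\eps}$, $d_{\pm B}=-e^{-\eps(B+1)}$, $d_{\pm(B+1)}=e^{-\eps B}$) matches the paper's expansion, so the argument goes through verbatim.
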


    Based on Lemma~\ref{lemma: bounds on inverting a single entry}, we have 
    
    \resizebox{\linewidth}{!}{
        \begin{minipage}{\linewidth}
        \begin{align*}
        \sum_{t \in \IntSet{-B}{n + B}} \card{\vec{x}[t]} 
            &\le \sum_{t \in \IntSet{-B}{n + B}} \frac{
                    e^{ -\eps } + e^{ \eps }
                }{
                    e^{ \eps } - e^{ -\eps }
                } \cdot \PAREN{ 
                    \begin{array}{c}
                        \ProbNorm \cdot \PAREN{
                            b_t + \frac{b_{t - 1} + b_{t + 1}}{e^{ -\eps } + e^{ \eps}}
                        }  + \\
                        \frac{e^{-\eps \cdot B }}{e^{ -\eps } + e^{ \eps } } \cdot \PAREN{ 
                            \card{ \vec{x} \bracket{t - B - 1} } + \card{ \vec{x} \bracket{t + B + 1} } + e^{ -\eps } \cdot \card{ \vec{x} \bracket{t - B} } + e^{ -\eps } \cdot \card{ \vec{x} \bracket{t + B} }
                        }
                    \end{array}
                } \\
            &= \sum_{t \in \IntSet{-B}{n + B}} \frac{
                    e^{ -\eps } + e^{ \eps }
                }{
                    e^{ \eps } - e^{ -\eps }
                } 
                \cdot \ProbNorm 
                \cdot b_t 
                \cdot \PAREN{ 1 + \frac{2}{e^{ -\eps } + e^{ \eps}} }
                +
                4 \cdot \sum_{t \in \IntSet{-B}{n + B}} \frac{
                    e^{ -\eps } + e^{ \eps }
                }{
                    e^{ \eps } - e^{ -\eps }
                } \cdot \frac{e^{-\eps \cdot B }}{e^{ -\eps } + e^{ \eps } } 
                \cdot \card{\vec{x}[t]} \\
            &= \sum_{t \in \IntSet{-B}{n + B}} \frac{
                    2 + e^{ -\eps } + e^{ \eps }
                }{
                    e^{ \eps } - e^{ -\eps }
                } 
                \cdot \ProbNorm 
                \cdot b_t 
                +
                \sum_{t \in \IntSet{-B}{n + B}} \frac{
                    4 \cdot e^{-\eps \cdot B }
                }{
                    e^{ \eps } - e^{ -\eps }
                } \cdot \card{\vec{x}[t]}. 
        \end{align*}
        \end{minipage}
    }
    This implies that 
    \begin{align*}
        \frac{
            e^{ \eps } - e^{ -\eps } - 4 \cdot e^{-\eps \cdot B }
        }{
            e^{ \eps } - e^{ -\eps }
        } \cdot \sum_{t \in \IntSet{-B}{n + B}} \card{\vec{x}[t]} 
        &\le \sum_{t \in \IntSet{-B}{n + B}} \frac{
                2 + e^{ -\eps } + e^{ \eps }
            }{
                e^{ \eps } - e^{ -\eps }
            } 
            \cdot \ProbNorm 
            \cdot b_t, \\
        \Longleftrightarrow
        \sum_{t \in \IntSet{-B}{n + B}} \card{\vec{x}[t]} 
        &\le \frac{
                2 + e^{ -\eps } + e^{ \eps }
            }{
                e^{ \eps } - e^{ -\eps } - 4 \cdot e^{-\eps \cdot B }
            } 
            \cdot \ProbNorm 
            \cdot \sum_{t \in \IntSet{-B}{n + B}} b_t, 
    \end{align*}
    
\end{proof}

\begin{proof}[Proof of Lemma~\ref{lemma: bounds on inverting a single entry}]
    Fix a~$t \in \IntSet{-B}{n + B}.$
    By assumption, we have 
    \begin{align*}
        \ProbNorm \cdot b_t 
        &\ge \ProbNorm \cdot \card{ \TransMatrix \bracket{ t, \cdot } \times \vec{x} } \\
        &= \card{
            \sum_{j \in [B]} e^{ -\eps \cdot j } \cdot \vec{x} \bracket{t - j} 
            + e^{ -\eps \cdot 0 } \cdot \vec{x} \bracket{ t } 
            + \sum_{j \in [B]} e^{ -\eps \cdot j } \cdot \vec{x} \bracket{t + j} 
        }.
    \end{align*}
    
    On the other hand, since~$\TransMatrix$ is a circulant matrix, 
    \begin{align*}
        \ProbNorm \cdot b_{t - 1} 
        &\ge  \ProbNorm \cdot \card{ \TransMatrix \bracket{ {t - 1}, \cdot } \times \vec{x} } \\
        &=\card{
            \sum_{j \in [B + 1]} e^{ -\eps \cdot (j - 1) } \cdot \vec{x} \bracket{t - j} 
            + e^{ -\eps } \cdot \vec{x} \bracket{t} 
            + \sum_{j \in [B - 1]} e^{ -\eps \cdot (j + 1) } \cdot \vec{x} \bracket{t  + j} 
        },   \\ 
        \ProbNorm \cdot b_{t + 1} 
        &\ge  \ProbNorm \cdot \card{ \TransMatrix \bracket{t + 1, \cdot } \times \vec{x} } \\
        &= \card{
            \sum_{j \in [B - 1]} e^{ -\eps \cdot (j + 1) } \cdot \vec{x} \bracket{t - j} 
            + e^{ -\eps } \cdot \vec{x} \bracket{t} 
            + \sum_{j \in [B + 1]} e^{ -\eps \cdot (j - 1) } \cdot \vec{x} \bracket{t  + j} 
        }.    
    \end{align*}
    
    Therefore 

    \vspace{-3mm}
    \resizebox{\linewidth}{!}{
      \begin{minipage}{1.12\linewidth}
        \begin{align*}
            \ProbNorm \cdot \PAREN{
                b_t + \frac{b_{t - 1} + b_{t + 1}}{e^{ -\eps } + e^{ \eps}}
            } 
            &\ge \ProbNorm \cdot \card{ 
                \TransMatrix \bracket{ t, \cdot } \times \vec{x}
                    - \frac{1}{e^{ -\eps } + e^{ \eps } } \cdot \PAREN{
                    \TransMatrix \bracket{ t - 1, \cdot } \times \vec{x} 
                    +
                    \TransMatrix \bracket{ t + 1, \cdot } \times \vec{x} 
                } 
            } \\
            &=  \card{ 
                \begin{array}{c}
                    - \frac{e^{-\eps \cdot B }}{e^{ -\eps } + e^{ \eps } }
                    \cdot \vec{x} \bracket{t - B - 1}
                    + \PAREN{1 - \frac{e^{ \eps }}{e^{ -\eps } + e^{ \eps } }} 
                    \cdot e^{-\eps \cdot B } \cdot  \vec{x} \bracket{t - B} 
                    + \PAREN{1 - \frac{2 e^{ -\eps }}{e^{ -\eps } + e^{ \eps }} } \cdot \vec{x} \bracket{ t } \\
                    + \PAREN{1 - \frac{e^{ \eps }}{e^{ -\eps } + e^{ \eps } }} 
                    \cdot e^{-\eps \cdot B } \cdot  \vec{x} \bracket{t + B}
                    - \frac{e^{-\eps \cdot B }}{e^{ -\eps } + e^{ \eps } }
                    \cdot \vec{x} \bracket{t + B + 1} 
                \end{array}
            } \\
            &\ge \frac{e^{ \eps } - e^{ -\eps }}{e^{ -\eps } + e^{ \eps }} \cdot \card{ \vec{x} \bracket{ t } } 
            -  \frac{e^{-\eps \cdot B }}{e^{ -\eps } + e^{ \eps } } \cdot \PAREN{ 
                \card{ \vec{x} \bracket{t - B - 1} } + \card{ \vec{x} \bracket{t + B + 1} } + e^{ -\eps } \cdot \card{ \vec{x} \bracket{t - B} } + e^{ -\eps } \cdot \card{ \vec{x} \bracket{t + B} }
            }.
        \end{align*}
      \end{minipage}
    }

    It concludes that 
    
    \vspace{-3mm}
    \resizebox{\linewidth}{!}{
      \begin{minipage}{1.15\linewidth}
        \begin{equation*}
            \card{ \vec{x} \bracket{ t } } 
            \le  
            \frac{
                e^{ -\eps } + e^{ \eps }
            }{
                e^{ \eps } - e^{ -\eps }
            } \cdot \PAREN{ 
                \ProbNorm \cdot \PAREN{
                    b_t + \frac{b_{t - 1} + b_{t + 1}}{e^{ -\eps } + e^{ \eps}}
                }  
                +
                 \frac{e^{-\eps \cdot B }}{e^{ -\eps } + e^{ \eps } } \cdot \PAREN{ 
                    \card{ \vec{x} \bracket{t - B - 1} } + \card{ \vec{x} \bracket{t + B + 1} } + e^{ -\eps } \cdot \card{ \vec{x} \bracket{t - B} } + e^{ -\eps } \cdot \card{ \vec{x} \bracket{t + B} }
                }
            }.
        \end{equation*}
      \end{minipage}
    }
\end{proof}

\section{Proof for Section~\ref{sec: lower bound}}
\label{appendix: sec: lower bound}

We show that, the lower bound in Theorem~\ref{theorem: accuracy lower bound of profile estimation} applies to our algorithms, 
if $\frac{1}{d} \, \big( \frac{e^\eps + 1}{e^\eps - 1} \big)^2 \in o \PAREN{ \frac{1}{\sqrt{d}} \cdot \frac{\eps}{c \cdot e^{c \cdot \eps}} }$, 
i.e., $\frac{1}{\sqrt{d}} \in o \PAREN{ \big( \frac{e^\eps - 1}{e^\eps + 1} \big)^2 \cdot \frac{\eps}{e^{c \cdot \eps} }}$.
We need the following lemma.

\begin{lemma}
    \label{lemma: sensitivity of the composition of discrete laplace and recovery}
    Let~$\cA$ be the discrete Laplace mechanism,~$C_A = \PAREN{\DiscreteLapNoise{e^{-\eps}}}^d$,~$F_\cA(\hist, C_A) = \hist + C_A$ and~$\cR$ be the profile reconstruction algorithm stated in Theorem~\ref{theorem: property of profile approximation algorithm}.
    Then
    $
        \Delta_{\cR \circ \cA} \in O \PAREN{
            \frac{1}{d} \cdot \norm{\TransMatrix^{-1}}_\infty
        }.
    $
\end{lemma}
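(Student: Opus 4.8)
The plan is to use the factorisation $\cR=\algoRounding\circ\algoFstInv$ and the fact that, for a fixed coin sequence $c_A$ of the discrete Laplace mechanism, the map $\hist\mapsto F_A(\hist,c_A)=\hist+c_A$ perturbs the empirical noisy profile in a tightly controlled way on neighbouring inputs. Fix $c_A$ and a neighbouring pair $\hist\sim\hist'$. Since $\hist,\hist'$ differ in one coordinate by at most $1$, so do $\privHist\doteq\hist+c_A$ and $\privHist'\doteq\hist'+c_A$; consequently the two empirical noisy profiles, restricted to $\IntSet{\Lb}{\Ub}$, satisfy $\boundedProfileOfNoisyHist-\boundedProfileOfNoisyHist'=\tfrac1d(\vec e_i-\vec e_j)$ for some indices $i,j$ (the case $i=j$, i.e.\ $\boundedProfileOfNoisyHist=\boundedProfileOfNoisyHist'$, being trivial, and the case where one index leaves $\IntSet{\Lb}{\Ub}$ being even easier), so in particular $\norm{\boundedProfileOfNoisyHist-\boundedProfileOfNoisyHist'}_1\le\tfrac2d$. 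It therefore suffices to (i) bound $\norm{\algoFstInv(\privHist)-\algoFstInv(\privHist')}_\infty$ in terms of $\norm{\boundedProfileOfNoisyHist-\boundedProfileOfNoisyHist'}_1$, and (ii) show that $\algoRounding$ increases $\ell_\infty$ distances by at most an $O(1)$ factor; maximising the resulting bound over $c_A$ and over the neighbouring pair, as in Definition~\ref{def: composed sensitivity}, then gives $\Delta_{\cR\circ\cA}\in O(\tfrac1d\norm{\TransMatrix^{-1}}_\infty)$.

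For (i), observe that the direction $\vec a$ chosen in Line~4 of Algorithm~\ref{algo: fast inversion} depends only on $\TransMatrix^{-1}$, $\vecTargetSupport$ and $p$, hence is the same for $\privHist$ and $\privHist'$. Writing $\vec w\doteq\TransMatrix^{-1}\vec a$ and $\delta\doteq\inner{\vecTargetSupport}{\vec w}>0$, Algorithm~\ref{algo: fast inversion} returns the affine function $\estProfile=\TransMatrix^{-1}\boundedProfileOfNoisyHist-\tfrac1\delta\bigl(\inner{\vecTargetSupport}{\TransMatrix^{-1}\boundedProfileOfNoisyHist}-1\bigr)\vec w$ of $\boundedProfileOfNoisyHist$, so with $\vec z\doteq\TransMatrix^{-1}(\boundedProfileOfNoisyHist-\boundedProfileOfNoisyHist')$ we get $\algoFstInv(\privHist)-\algoFstInv(\privHist')=\vec z-\tfrac{\inner{\vecTargetSupport}{\vec z}}{\delta}\vec w$. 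Now $\norm{\vec z}_\infty\le\norm{\vec z}_1\le\norm{\TransMatrix^{-1}}_1\norm{\boundedProfileOfNoisyHist-\boundedProfileOfNoisyHist'}_1\le\tfrac2d\norm{\TransMatrix^{-1}}_1$. Setting $\vec c\doteq(\TransMatrix^{-1})^{\!\top}\vecTargetSupport$ (so $\inner{\vecTargetSupport}{\TransMatrix^{-1}\vec v}=\inner{\vec c}{\vec v}$ for all $\vec v$; this is the vector $\vec c$ from the proof of Theorem~\ref{theorem: running time of algo fast inversion}), we have $\delta=\max_{\norm{\vec a'}_p=1}\inner{\vec c}{\vec a'}=\norm{\vec c}_{p'}\ge\norm{\vec c}_\infty$ with $p'$ the dual exponent, whence $\lvert\inner{\vecTargetSupport}{\vec z}\rvert=\tfrac1d\lvert\vec c[i]-\vec c[j]\rvert\le\tfrac2d\norm{\vec c}_\infty\le\tfrac2d\delta$. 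Finally, the explicit form of $\vec a$ recorded in the proof of Theorem~\ref{theorem: running time of algo fast inversion} gives $\norm{\vec w}_\infty\le\norm{\TransMatrix^{-1}}_\infty$ for each $p\in\{1,2,\infty\}$ ($\vec w$ is $\pm$ a column of $\TransMatrix^{-1}$ when $p=1$; $\norm{\vec w}_\infty\le\norm{\TransMatrix^{-1}}_\infty\norm{\vec a}_\infty$ when $p=\infty$; $\norm{\vec w}_\infty\le\max_t\norm{\TransMatrix^{-1}[t,\cdot]}_2\le\norm{\TransMatrix^{-1}}_\infty$ when $p=2$). Since $\norm{\TransMatrix^{-1}}_1=\norm{\TransMatrix^{-1}}_\infty$ by Lemma~\ref{lemma: bounds on matrix norms}, combining these estimates yields $\norm{\algoFstInv(\privHist)-\algoFstInv(\privHist')}_\infty\le\norm{\vec z}_\infty+\tfrac{\lvert\inner{\vecTargetSupport}{\vec z}\rvert}{\delta}\norm{\vec w}_\infty\le\tfrac4d\norm{\TransMatrix^{-1}}_\infty$.

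For (ii), Phases~1 and~2 of Algorithm~\ref{algo: rounding} are coordinatewise $1$-Lipschitz, hence do not increase $\norm{\cdot}_\infty$; the content is Phase~3. On any $\estProfile$ with $\sum_{t=0}^n\estProfile[t]=1$ (true of every output of Algorithm~\ref{algo: fast inversion} by Lemma~\ref{lemma: correctness of algo fast inversion}), the combined effect of Phases~2--3 is the map $\estProfile[t]\mapsto\bigl[\clip{\estProfile[t]}{0}{1}-\tau(\estProfile)\bigr]_+$ (with $[y]_+\doteq\max\{y,0\}$), where $\tau(\estProfile)\in[0,1]$ is the water-filling threshold determined by $\sum_{t=0}^n\clip{\estProfile[t]}{0}{\tau}=\sum_{t=0}^n\clip{\estProfile[t]}{0}{1}-1$. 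Hence $\norm{\algoRounding(\estProfile)-\algoRounding(\estProfile')}_\infty\le\norm{\estProfile-\estProfile'}_\infty+\lvert\tau(\estProfile)-\tau(\estProfile')\rvert$, and everything reduces to showing $\lvert\tau(\estProfile)-\tau(\estProfile')\rvert=O(\norm{\estProfile-\estProfile'}_\infty)$. This is the technical heart of the proof: a coordinate can shift the threshold equation near level $\tau(\estProfile)$ only by crossing (or being pushed across) that level, and once it crosses it joins the set $\{t:\estProfile'[t]>\tau\}$ governing the local slope of the water-filling equation on the very side towards which $\tau$ is moving — a self-limiting effect which, via a case analysis in the spirit of the proof of Lemma~\ref{lemma: error of rounding}, forces $\lvert\tau(\estProfile)-\tau(\estProfile')\rvert\le 2\norm{\estProfile-\estProfile'}_\infty$, so that $\algoRounding$ is $O(1)$-Lipschitz in $\ell_\infty$.

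Combining (i) and (ii), $\norm{\cR(F_A(\hist,c_A))-\cR(F_A(\hist',c_A))}_\infty=O(\tfrac1d\norm{\TransMatrix^{-1}}_\infty)$ for every $c_A$ and every $\hist\sim\hist'$, and taking the maximum over these proves the lemma. The main obstacle is step~(ii): controlling the shift of the water-filling threshold under a neighbouring perturbation; the rest reduces to the explicit affine form of Algorithm~\ref{algo: fast inversion} together with the identity $\norm{\TransMatrix^{-1}}_1=\norm{\TransMatrix^{-1}}_\infty$.
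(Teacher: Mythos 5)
Your route is the same as the paper's: a neighbouring change moves the empirical noisy profile in at most two coordinates, each by $1/d$; the fast-inversion step is analysed through its explicit affine form, with the correction coefficient bounded by $2/d$ because the numerator is at most $\frac{2}{d}\norm{\vec c}_\infty$ while the denominator $\inner{\vecTargetSupport}{\TransMatrix^{-1}\times\vec a}$ is at least $\norm{\vec c}_\infty$ (the paper additionally verifies this quantity is nonzero via $\TransMatrix^{-1}\times\vecAllOne=\vecAllOne$, a fact your assertion $\delta>0$ uses silently); and the rounding step is reduced to a Lipschitz bound on the water-filling threshold. Part (i) of your argument is complete and correct; the paper reaches the same $O\PAREN{\frac1d\norm{\TransMatrix^{-1}}_\infty}$ bound slightly more directly, via $\norm{\TransMatrix^{-1}\times(\boundedProfileOfNoisyHist-\boundedProfileOfNoisyHist\,')}_\infty\le\norm{\TransMatrix^{-1}}_\infty\cdot\frac1d$ and $\norm{\TransMatrix^{-1}\times\vec a}_\infty\le\norm{\TransMatrix^{-1}}_\infty\norm{\vec a}_\infty\le\norm{\TransMatrix^{-1}}_\infty$, so your detour through $\norm{\TransMatrix^{-1}}_1=\norm{\TransMatrix^{-1}}_\infty$ and the case analysis on $p$ is valid but unnecessary.

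The one genuine gap is precisely the step you call the technical heart: you assert $\card{\tau(\estProfile)-\tau(\estProfile\,')}\le 2\norm{\estProfile-\estProfile\,'}_\infty$ via a ``self-limiting'' crossing heuristic deferred to an unspecified case analysis ``in the spirit of Lemma~\ref{lemma: error of rounding}'' --- but that lemma bounds error, not Lipschitzness, and no case analysis is needed. The paper closes this in two lines by a monotone comparison: let $\estProfile_2,\estProfile_2\,'$ denote the clipped vectors and $\Delta\doteq\norm{\estProfile_2-\estProfile_2\,'}_\infty\le\norm{\estProfile-\estProfile\,'}_\infty$; since $\estProfile_2\,'[t]\le\estProfile_2[t]+\Delta$ for every $t$, we get $\sum_{t=0}^n\max\set{\estProfile_2\,'[t]-(\tau+\Delta),0}\le\sum_{t=0}^n\max\set{\estProfile_2[t]-\tau,0}=1$, and since the map $s\mapsto\sum_{t=0}^n\max\set{\estProfile_2\,'[t]-s,0}$ is non-increasing, and strictly decreasing wherever positive, its value being at most $1$ at $s=\tau+\Delta$ forces $\tau'\le\tau+\Delta$; by symmetry $\tau\le\tau'+\Delta$, so in fact $\card{\tau-\tau'}\le\Delta$ (constant $1$, not $2$). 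With that inserted, your argument coincides with the paper's proof.
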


Combined with Lemma~\ref{lemma: bounds on matrix norms} on the matrix norm, and the assumption of~$B$ in Theorem~\ref{theorem: property of profile approximation algorithm}, we have~$1 / d \, \| \TransMatrix^{-1} \|_\infty \in O \big( \frac{1}{d} \, \big( \frac{e^\eps + 1}{e^\eps - 1} \big)^2 \big)$, which proves our claim.

\begin{proof}[\bf Proof of Lemma~\ref{lemma: sensitivity of the composition of discrete laplace and recovery}]
    Let~$\hist, \hist' \in \IntSet{0}{n}^d$ be two neighboring histograms that differ in just one coordinate by at most~$1$, i.e.,~$\norm{\hist - \hist'}_1 \le 1$.
    For every possible realization~$c_A$ of~$C_A$, 
    the histograms~$\privHist = F_A \PAREN{ \hist, c_A }, \privHist' = F_A \PAREN{ \hist' , c_A }$ published by discrete Laplace mechanism also differ in just one coordinate by at most~$1$, i.e., 
    $$
        \norm{
            F_A \PAREN{ \hist, c_A } - F_A \PAREN{ \hist' , c_A }
        }_1
        = \norm{\hist - \hist'}_1
        \le 1. 
    $$
    Let~$\boundedProfileOfNoisyHist$ and~$\boundedProfileOfNoisyHist'$ be the profiles that corresponds to~$\privHist$ and~$\privHist'$, respectively. 
    Then~$\boundedProfileOfNoisyHist$ and~$\boundedProfileOfNoisyHist'$ differ in at most two coordinates, each by at most~$1 / d$.
    Applying~$\algoFstInv$ (Algorithm~\ref{algo: fast inversion}) to~$\boundedProfileOfNoisyHist$ and~$\boundedProfileOfNoisyHist'$ the amplify their distance by: 
    \begin{align*}
        \norm{
            \algoFstInv(\boundedProfileOfNoisyHist) - \algoFstInv(\boundedProfileOfNoisyHist')
        }_\infty
        &= \norm{ 
                \TransMatrix^{-1} \times \PAREN{
                    \boundedProfileOfNoisyHist - \boundedProfileOfNoisyHist'
                } -  \frac{
                    \inner{\vecTargetSupport}{
                        \TransMatrix^{-1} \times \PAREN{
                            \boundedProfileOfNoisyHist - \boundedProfileOfNoisyHist'
                        }
                    }
                }{ 
                    \inner{\vecTargetSupport}{\TransMatrix^{-1} \times \vec{a}}
                }  
                \cdot 
                \TransMatrix^{-1} \times \vec{a}
            }_\infty \\
        &\le \norm{\TransMatrix^{-1}}_\infty \cdot \PAREN{ 
                \norm{
                    \boundedProfileOfNoisyHist - \boundedProfileOfNoisyHist'
                }_\infty + \card{ 
                    \frac{
                        \inner{\vecTargetSupport}{
                            \TransMatrix^{-1} \times \PAREN{
                                \boundedProfileOfNoisyHist - \boundedProfileOfNoisyHist'
                            }
                        }
                    }{ 
                        \inner{\vecTargetSupport}{\TransMatrix^{-1} \times \vec{a}}
                    }
                } \cdot \norm{ \vec{a} }_\infty
            } \\
        &\le \norm{\TransMatrix^{-1}}_\infty \cdot \PAREN{ 
                \norm{
                    \boundedProfileOfNoisyHist - \boundedProfileOfNoisyHist'
                }_\infty +  \card{ 
                    \frac{
                        \inner{\vecTargetSupport}{
                            \TransMatrix^{-1} \times \PAREN{
                                \boundedProfileOfNoisyHist - \boundedProfileOfNoisyHist'
                            }
                        }
                    }{ 
                        \inner{\vecTargetSupport}{\TransMatrix^{-1} \times \vec{a}}
                    }
                }
            },
    \end{align*}
    where the last inequality follows since~$\norm{ \vec{a} }_\infty \le \norm{ \vec{a} }_p = 1$ for~$p = 1, 2, \infty$.
    We claim that 
    \begin{align*}
        \card{ 
            \frac{
                \inner{\vecTargetSupport}{
                    \TransMatrix^{-1} \times \PAREN{
                        \boundedProfileOfNoisyHist - \boundedProfileOfNoisyHist'
                    }
                }
            }{ 
                \inner{\vecTargetSupport}{\TransMatrix^{-1} \times \vec{a}}
            }
        } \le \frac{2}{d}.
    \end{align*}
    Since~$\boundedProfileOfNoisyHist - \boundedProfileOfNoisyHist'$ is a vector of at most two non-zero coordinates, each bounded by~$1 / d$, 
    \begin{align*}
        \card{
             \inner{\vecTargetSupport}{
                            \TransMatrix^{-1} \times \PAREN{
                                \boundedProfileOfNoisyHist - \boundedProfileOfNoisyHist'
                            }
                        }
        } 
        &= \frac{1}{d} \cdot \max_{t, t'} 
        \card{
            \inner{\vecTargetSupport}{                                  \TransMatrix^{-1}[\cdot, t]
            }
            +
            \inner{\vecTargetSupport}{                                  \TransMatrix^{-1}[\cdot, t']
            }
        } \\
        &\le \frac{2}{d} \cdot \max_{t} 
        \card{
            \inner{\vecTargetSupport}{                                  \TransMatrix^{-1}[\cdot, t]
            }
        }.
    \end{align*}
    On the other hand, since it holds for the standard basis vectors that~$\norm{\vec{e}_t}_p = 1$, 
    \begin{align*}
        \card{
              \inner{\vecTargetSupport}{\TransMatrix^{-1} \times \vec{a}}
        } 
        &\ge \max_{t} \card{
            \inner{
                \vecTargetSupport
            }{
                \TransMatrix^{-1} \times \vec{e}_t
            }
        } 
        = \max_{t} 
        \card{
            \inner{\vecTargetSupport}{\TransMatrix^{-1}[\cdot, t]}
        }.
    \end{align*}

    It remains to verify that~$\max_{t} 
        \card{
            \inner{\vecTargetSupport}{                                  \TransMatrix^{-1}[\cdot, t]
            }
        } \neq 0$.
    This is true, as 
    \begin{align*}
        \inner{\vecTargetSupport}{                                      \sum_{t} \TransMatrix^{-1}[\cdot, t]
        } 
        &= \inner{
                \vecTargetSupport
            }{
                \TransMatrix^{-1} \times \vec{\mathbbm{1}}
            } 
        = \inner{
                \vecTargetSupport
            }{
                \vec{\mathbbm{1}}
            } 
        = n + 1,
    \end{align*}
    where the equation~$\TransMatrix^{-1} \times \vec{\mathbbm{1}} = \vec{\mathbbm{1}}$ holds since~$\TransMatrix \times \vec{\mathbbm{1}} = \vec{\mathbbm{1}}$.

    Combined, we have
    \begin{align*}
        \norm{
            \algoFstInv(\boundedProfileOfNoisyHist) - \algoFstInv(\boundedProfileOfNoisyHist')
        }_\infty
        &\in O \PAREN{ 
            \frac{1}{d} \cdot \norm{\TransMatrix^{-1}}_\infty 
        }.
    \end{align*}
    To ease the notation, from now on, denote~$\estProfile = \algoFstInv(\boundedProfileOfNoisyHist)$ and~$\estProfile\,' \doteq \algoFstInv(\boundedProfileOfNoisyHist\,')$.
    We will finally show that, Applying~$\algoRounding$ (Algorithm~\ref{algo: rounding}) to~$\estProfile$ and $\estProfile\,'$ the amplify their $\ell_\infty$ distance only by some constant: 
    \begin{align*}
        \norm{
            \algoRounding \PAREN{ \estProfile }
            - 
            \algoRounding \PAREN{ \estProfile\,' }
        }_\infty
        &\in O \PAREN{ 
            \norm{
                \estProfile - \estProfile\,'
            }_\infty
        }.
    \end{align*}
    We investigate the steps of Algorithm~\ref{algo: rounding}.
    First, rounding the coordinates of~$\estProfile$ and~$\estProfile\,'$ to the interval~$[0, 1]$ only decreases their~$\ell_\infty$ distance.

    Next, the algorithm searches for some thresholds~$\tau$ and~$\tau'$, such that 
    \begin{align*}
        \sum_{t \in \IntSet{0}{n}} \PAREN{ \estProfile[t] - \min \set{\tau, \estProfile[t]} } &= 1, 
        \\
        \sum_{t \in \IntSet{0}{n}} \PAREN{ \estProfile\,'[t] - \min \set{\tau', \estProfile\,'[t]} } &= 1.
    \end{align*}
    We claim that~$\tau\,' \le \tau + \norm{
                \estProfile - \estProfile\,'
            }_\infty$,
    as 
    \begin{align*}
        \sum_{t \in \IntSet{0}{n}} \PAREN{ \estProfile\,'[t] - \min \set{\tau + \norm{
                \estProfile - \estProfile\,'
            }_\infty, \estProfile\,'[t]} } 
        \le 
        \sum_{t \in \IntSet{0}{n}} \PAREN{ \estProfile[t] - \min \set{\tau, \estProfile[t]} } 
        = 1.
    \end{align*}
    Via symmetry, we can prove that ~$\tau \le \tau' + \norm{
                \estProfile - \estProfile\,'
            }_\infty$.
    As Algorithm~\ref{algo: rounding} eventually update each~$\estProfile[t]$ and~$\estProfile'[t]$ by
    \begin{align*}
        \estProfile[t] \leftarrow \estProfile[t] - \min \set{\tau, \estProfile[t]},
        \quad 
        \estProfile\,'[t] \leftarrow \estProfile\,'[t] - \min \set{\tau', \estProfile\,'[t]},
    \end{align*}
    after the update, it still holds that~$\card{\estProfile[t] - \estProfile\,'[t]} \in O(\norm{
                \estProfile - \estProfile\,'
            }_\infty)$, which finishes the proof.
\end{proof}

\begin{proof}[\bf Proof of Theorem~\ref{theorem: accuracy lower bound of profile estimation}]
    We will prove the theorem by contraction. 
    Assume there exists $\eps$-DP and updatable algorithm~$\cA : \IntSet{0}{n}^d \rightarrow \cY$, profile estimation algorithm~$\cR : \cY \rightarrow [0, 1]^{n + 1}$, 
    such that, for every input~$\hist \in \IntSet{0}{n}^d$, s.t.
    \begin{equation*}
            \mathbb{E} [ \Vert \profile - \cR \paren{ \cA(\hist) } \Vert_\infty ] 
                \in o \bigparen{ {1} /  \paren{\sqrt{d} \cdot e^{c \cdot \eps}} }.
    \end{equation*}
    We will construct an~$\eps$-DP protocol~$P$ such that such that for every~$\vec{x}$ and~$\vec{y}$,
    Bob's output (denoted by $m_B$) satisfies 
    \begin{equation}
            \P{ \card{m_B - \inner{\vec{x}}{\vec{y}}} 
                \ge \err{} 
            } 
            \in o(1), 
    \end{equation}
    where $\err{} = \sqrt{d} / \paren{2 \cdot c \cdot e^{c \cdot \eps} }$,
    contradicting Fact~\ref{fact: two party accuracy lower bound}.
    
    The protocol is given by Algorithm~\ref{algo: two party protocol}. 
    Alice initializes the computation by sending~$\cA \paren{ \vec{x} + \vecAllOne}$ to Bob. 
    On receiving~$\cA \paren{ \vec{x} + \vecAllOne}$, Bob updates it with their own data~$\vec{y} + \vecAllOne$ by running algorithm~$\cU$, and the result is then passed to the profile estimation algorithm~$\cR$.
    Observe that by the property of updatable algorithm,~$\cU \paren{ \cA(\vec{x} + \vecAllOne), \vec{y} + \vecAllOne) } = \cA(\vec{x} + \vecAllOne + \vec{y} + \vecAllOne)$.
    Hence, $\cR \PAREN{ \cU \paren{ \cA(\vec{x} + \vecAllOne), \vec{y} + \vecAllOne) } }$ is essentially an estimate of the profile~$\profile$ of~$\hist \doteq \vec{x} + \vecAllOne + \vec{y} + \vecAllOne$.
    Finally, Bob publishes~$d \cdot \PAREN{ \estProfile[4] + \estProfile[0] - \estProfile[2] + 3 \cdot \Delta_{\cR \circ \cA} \cdot \LapNoise{1 / \eps} }$.
    where~$\LapNoise{1 / \eps} \in \R$ is a random variable following the Laplace distribution whose probability density function is given by~$
        p(z) = \frac{\eps}{2} \cdot \exp \PAREN{ - \eps \cdot \card{z} },
    $ 
    for all~$z \in \R.$
    
    It remains to analyze the privacy and utility guarantee of~$P$ respectively. 

    {\it Privacy Guarantee.}
    The view of Bob is~$\cA(\vec{x} + \vecAllOne)$, which is~$\eps$-DP by the assumptions of~$\cA$.
    The view of Alice is~$d \cdot \PAREN{ \estProfile[4] + \estProfile[0] - \estProfile[2] + 3 \cdot \Delta_{\cR \circ \cA} \cdot \LapNoise{1 / \eps} }$. 
    To prove this is~$\eps$-DP, it suffices to show that~$\estProfile[4] + \estProfile[0] - \estProfile[2]$ has sensitivity~$\Delta_{\cR \circ \cA}$. 
    
    Assume the randomness~$C_A$ of~$\cA$ has realization~$c_A$.
    The message~$\cA(\vec{x} + \vecAllOne)$ received by Bob therefore can be written as~$F_A(\vec{x} + \vecAllOne, c_A).$
    After performing the update~$\cU \paren{ \cA(\vec{x} + \vecAllOne), \vec{y} + \vecAllOne) } = \cA(\vec{x} + \vecAllOne + \vec{y} + \vecAllOne)$, the message becomes~$F_A(\vec{x} + \vecAllOne + \vec{y} + \vecAllOne, c_A)$. 
    Suppose that~$\vec{y}$ is replaced by a neighbor binary vector~$\vec{y}\,'$ such that~$\norm{\vec{y} - \vec{y}\,'}_1 \le 1$.
    Then~$\norm{\vec{x} + \vecAllOne + \vec{y} + \vecAllOne - (\vec{x} + \vecAllOne + \vec{y} + \vecAllOne\,')}_1 \le 1$, 
    and via the definition of~$\Delta_{\cR \circ \cA}$, we have ~$\norm{ \cR \PAREN{ F_A(\vec{x} + \vecAllOne + \vec{y} + \vecAllOne, c_A) } - \cR \PAREN{ F_A(\vec{x} + \vecAllOne + \vec{y}' + \vecAllOne, c_A) } }_\infty \le \Delta_{\cR \circ \cA}$.

    {\it Accuracy Guarantee.}
    Since~$\vec{x}$ and~$\vec{y}$ are binary vectors, $\inner{\vec{x}}{\vec{y}} = \card{ \set{ \ell \in [d] : \paren{ \vec{x} + \vec{y} }[\ell] = 2} } = d \cdot \profile[2]$.
    Therefore 
    \begin{align}
        &\card{
            d \cdot \PAREN{ \estProfile[4] + \estProfile[0] - \estProfile[2] + 3 \cdot   \Delta_{\cR \circ \cA} \cdot \LapNoise{1 / \eps} }
            - d \cdot \PAREN{ 
                \profile[2] + \profile[0] - \profile[2]
            }
        } \\
        &\le d \cdot \PAREN{ \card{
                \estProfile[4] - \profile[4]
            } + \card{
                \estProfile[2] - \profile[2]
            } + \card{
                \estProfile[0] - \profile[0]
            } 
            + 3 \cdot \Delta_{\cR \circ \cA} \cdot \card{ \LapNoise{1 / \eps} }
        }.
    \end{align}
    Since $\E{\card{ \LapNoise{1 / \eps} }} \in O( 1 / \eps )$, and by assumption, 
    $
        \E{ \norm{
                \estProfile - \profile
            }_\infty } 
        \in o \PAREN{ \frac{1}{\sqrt{d}} \cdot \frac{1}{e^{c \cdot \eps}} },
    $
    and
    $
        \Delta_{\cR \circ \cA} \in o \PAREN{ \frac{1}{\sqrt{d}} \cdot \frac{\eps}{e^{c \cdot \eps}} },
    $
    it concludes that 
    
    \vspace{-3mm}
    \resizebox{\linewidth}{!}{
      \begin{minipage}{1\linewidth}
        \begin{equation*}
            \E{ 
                \card{
                    d \cdot \PAREN{ \estProfile[4] + \estProfile[0] - \estProfile[2] + 3 \cdot   \Delta_{\cR \circ \cA} \cdot \LapNoise{1 / \eps} }
                    - d \cdot \PAREN{ 
                        \profile[2] + \profile[0] - \profile[2]
                    }
                }
            } 
            \in o \PAREN{ \frac{\sqrt{d}}{e^{c \cdot \eps}} }.
        \end{equation*}
      \end{minipage}
    }

    Combing $d \cdot \paren{\profile[4] + \profile[0] - \profile[2]} = \inner{\vec{x}}{\vec{y}}$ with Markov inequality, we see 
    \begin{align} 
        &\P{ 
            \card{
                d \cdot \PAREN{ \estProfile[4] + \estProfile[0] - \estProfile[2] + 3 \cdot   \Delta_{\cR \circ \cA} \cdot \LapNoise{1 / \eps} }
                - \inner{\vec{x}}{\vec{y}}
            }
            \ge \frac{ 2 \cdot c_1 \cdot \sqrt{d} }{ e^{c \cdot \eps} }
        } \\
        &\le { \E{ 
            \card{
                    d \cdot \PAREN{ \estProfile[4] + \estProfile[0] - \estProfile[2] + 3 \cdot   \Delta_{\cR \circ \cA} \cdot \LapNoise{1 / \eps} }
                    - \inner{\vec{x}}{\vec{y}}
                }
            } 
        } \Bigg/ {
            \PAREN{ \frac{ \sqrt{d} }{ 2 \cdot c \cdot e^{c \cdot \eps} } }
        } \\
        &\in o \PAREN{
            \frac{\sqrt{d}}{e^{c \cdot \eps}} 
            \middle/
            \PAREN{ \frac{ \sqrt{d} }{ 2 \cdot c \cdot e^{c \cdot \eps} } }
        } \\
        &= o(1).
    \end{align}
\end{proof}

\end{document}